\numberwithin{equation}{section}
\renewcommand{\cite}{\citep}
\newtheorem{theorem}{\bf Theorem}
\newtheorem{assumption}{Assumption}
\newtheorem{conditions}{\bf Conditions}
\newcommand{\ib}{\mathbf{i}}
\newcommand{\jb}{\mathbf{j}}
\newcommand{\lb}{\mathbf{l}}
\newcommand{\ub}{\mathbf{u}}
\newcommand{\vb}{\mathbf{v}}
\newcommand{\xb}{\mathbf{x}}
\newcommand{\Ab}{\mathbf{A}}
\newcommand{\Db}{\mathbf{D}}
\newcommand{\Eb}{\mathbf{E}}
\newcommand{\Ib}{\mathbf{I}}
\newcommand{\Lb}{\mathbf{L}}
\newcommand{\Ob}{\mathbf{O}}
\newcommand{\Qb}{\mathbf{Q}}
\newcommand{\Ub}{\mathbf{U}}
\newcommand{\Vb}{\mathbf{V}}
\newcommand{\Wb}{\mathbf{W}}
\newcommand{\Xb}{\mathbf{X}}
\newcommand{\Yb}{\mathbf{Y}}
\newcommand{\bbR}{\mathbb{R}}
\newcommand{\Sigmab}{\bm{\Sigma}}
\newcommand{\0}{{\mathbf{0}}}
\newcommand{\1}{{\mathbf{1}}}
\newcommand{\TS}{\textbf{TS}}
\newcommand{\astar}{^{\star}}
\newcommand{\astart}{^{\star\top}}
\newcommand{\overbar}[1]{\mkern 1.5mu\overline{\mkern-1.5mu#1\mkern-1.5mu}\mkern 1.5mu}
\title{Data Integration Via Analysis of Subspaces (DIVAS)}
\date{\today}
\author{Jack Prothero, Meilei Jiang, Jan Hannig,\\ Quoc Tran-Dinh, Andrew Ackerman, J.S. Marron}
\begin{document}

\maketitle
\newpage

\begin{abstract}
Modern data collection in many data paradigms, including bioinformatics, often incorporates multiple traits derived from different data types (i.e. platforms). We call this data multi-block, multi-view, or multi-omics data. The emergent field of data integration develops and applies new methods for studying multi-block data and identifying how different data types relate and differ. One major frontier in contemporary data integration research is methodology that can identify partially-shared structure between sub-collections of data types. This work presents a new approach: Data Integration Via Analysis of Subspaces (DIVAS). DIVAS combines new insights in angular subspace perturbation theory with recent developments in matrix signal processing and convex-concave optimization into one algorithm for exploring partially-shared structure. Based on principal angles between subspaces, DIVAS provides built-in inference on the results of the analysis, and is effective even in high-dimension-low-sample-size (HDLSS) situations.
\end{abstract}
\newpage

\section{Introduction}\label{sec:intro}

Modern experiments are increasingly likely to produce complex data derived from multiple sources. One common example is a single group of $n$ \textit{objects} (i.e. cases, observations, patients) being observed across $K$ different \textit{views} or \textit{data blocks}, each with their own sets of $d_k,\ k=1,\ldots,K,$ \textit{traits} (i.e. variables, features, descriptors) and measurement methodologies. We call data collected and organized this way \textit{multi-block data}. Simple data analysis approaches would concatenate multiple data blocks into a single data matrix of $n$ data objects and $d_1+\cdots+d_K$ traits. However, these approaches ignore the often important relationships between the views and obscure insights on which information comes from which data block. We specifically aim to address this challenge by considering the data blocks as separate units and searching for \textit{shared structure} between them. Shared structure can be defined in several ways. Here and in some other contemporary approaches seen in Section~\ref{sec:litreview} it means common modes of variation modeled by low-rank matrices. More mathematical details of this modeling choice can be found in Section~\ref{sec:methodology}. Our proposed method, \textit{Data Integration Via Analysis of Subspaces} (DIVAS), incorporates state-of-the-art advances in matrix perturbation theory and optimization to provide insights about both shared and partially-shared joint structure between several data blocks.

DIVAS gives a novel approach for finding structure in a multi-block data set based on searching for shared subspaces between different collections of data blocks. The data blocks are intrinsically linked by having the same \textit{trait space} $\bbR^n$, so we primarily search for shared subspaces within the trait space. In contrast to other approaches, angles form the foundation of our analysis of the relationships between these subspaces. In particular, \textit{principal angles} are the measure of choice of proximity between subspaces. These angles are applied in a rigorous framework of inference that provides relevant statistical significance determinations for the chosen subspaces.

Subspaces of trait space have corresponding induced subspaces of \textit{object space} $\bbR^{d_k}$ for each data block. Combined together, these subspaces can be decomposed into \textit{modes of variation}. A mode of variation is a rank 1 matrix formed from the outer product of two vectors: one in object space and one in trait space. In the terminology of Principal Components Analysis (PCA) these would be a loadings (direction) vector and a scores vector, respectively. Considering subspaces in terms of modes of variation is particularly useful for visualization. The scores vectors demonstrate relationships between the data objects, and the loadings vectors provide information about which traits are driving the variation. The ultimate result of a DIVAS exploration of a data set is a set of modes of variation for each data block associated with each block collection.

The rest of the paper proceeds as follows. The remainder of Section~\ref{sec:intro} continues with background information on data integration in general. Section~\ref{sec:methodology} details DIVAS methodology and demonstrates the method's performance on a synthetic data set.Section~\ref{sec:genomics} provides a prototypical application of DIVAS in cancer genomics.Section~\ref{sec:data} contains a case study on twentieth century mortality using DIVAS. Section~\ref{sec:conclusion} summarizes some brief conclusions. The Appendices~\ref{sec:mp}, \ref{sec:paa}, \ref{sec:noise}, and \ref{sec:optdetails} provide reference materials on random matrix theory, principal angle analysis, residual matrix estimation, and details on the optimization problem solved by DIVAS, respectively. MATLAB code is available for download at \url{https://github.com/jbprothero/DIVAS2021}.

\subsection{Data Integration Literature}\label{sec:litreview}

A time-honored multi-block data analysis method is Canonical Correlations Analysis (CCA), proposed by \citep{cca}. Given two blocks of data $\Xb$ and $\Yb$, CCA seeks to maximize the Pearson correlation between vectors from the span of each data block in trait space. 
The fundamental ideas of CCA have been thoroughly extended to more general settings. Several authors propose different generalizations of CCA for locating highly correlated structure between three or more data blocks, including \citep{horst1961mcca, gcca,nielsen2002mcca}. Some others have experimented with kernel methods for CCA, as in \citep{kcca} and \citep{cai2017kcca}.

In the context of machine learning, methods like CCA are termed \textit{multi-view methods}. Multi-view learning broadly includes methods like co-training for semi-supervised learning \citep{cotraining}, SVM-2K \citep{svm2k}, subspace learning \citep{mvsubspace}, and other multi-view extensions of paradigms like active learning and ensemble learning. See \citet{mvsumshort}, \citet{mvsumlong}, and \citet{li2019survey} for more details on these extensions.



Any CCA-based method is ultimately focused on finding jointly shared structure between each available block or view of the data. Oftentimes, given low-rank approximations of each data block we are also interested in a full factorization of the signal present in each block into a joint component shared between all blocks and an individual component unique to each block. One algorithm that produces such a factorization is Joint and Individual Variation Explained (JIVE) \citep{jive}. After initially choosing a signal rank for each data matrix using a permutation testing approach, the algorithm seeks to minimize residual energy by alternating between determining joint structure and individual structure. Broadly the algorithm accomplishes its goal, but the optimization problem can proceed slowly and there is no underlying inferential justification for the chosen boundary between joint and individual structure.


A later generation of JIVE, dubbed Angle-based JIVE (AJIVE) \citep{ajive}, was proposed to address the above shortcomings. Selection of joint structure happens in a quick, single step based on principal angle analysis and the delineation between joint and individual structure is based on a bound on the angles between original and perturbed subspaces found in \citet{wedin}. Initial rank selection, however, is performed ad hoc in a separate initial step, and as described in \citet{ajive} the perturbation angle bounds used can become extremely conservative under rank mis-specification. Additionally, neither JIVE nor AJIVE consider partially-shared joint structure between subsets of blocks.





Decomposition of data blocks into partially-shared joint structure components is one of the primary frontiers in contemporary data integration research. Two approaches to the problem, \citep{gaynapartial} and \citep{bass}, both model partially-shared information via structured sparsity in a basis matrix for the concatenated data blocks. \citet{gaynapartial} determine sparse structure via bi-cross-validation, while \citet{bass} determine sparse structure via a collection of Bayesian priors. A third approach is found in the recent unpublished manuscript \citep{yi2022pss}. Their method models partially-shared information via subspace intersections and a hierarchical matrix nuclear norm regularization scheme. This formulation eschews factorization into scores and loadings subspaces entirely in order to work with a convex optimization problem, to capture potentially non-orthogonal partially-shared structure across block collections, and to ensure identifiability. DIVAS maintains identifiability even among potentially non-orthogonal partially-shared structures via a sequential search through each collection of data blocks. DIVAS also involves a more challenging non-convex optimization problem based on factorized matrices, but in doing so achieves relevant statistical significance measurements for the resulting shared and partially-shared subspaces.





Another recent pursuit in data integration research is complete incorporation of all the information from the data blocks. In most cases the data blocks share the same number of data objects, so integrative analysis often takes place primarily in trait space, with corresponding information about the contributions of certain traits to shared structure being determined subsequently. In many cases, information about the contributing traits is just as pertinent as the shared structure itself, including if data blocks are bi-dimensionally linked as in \citep{lock2020bidimensional} or bi-dimensionally matched as in \citep{yuan2021doublematched}. This is often the case in bioinformatics where the traits represent measurements of particular genes and the primary goal is to identify genes or other biological factors that contribute to patterns observed across the data blocks. The above papers each propose their own method for incorporating trait information in the analysis in the situations where the data blocks are appropriately linked. Another methodology found in \citep{shu2021cdpa} attempts to incorporate trait information for more general multi-block situations, but it relies on a computationally-taxing row-matching algorithm as part of its procedure and the method cannot parse partially-shared joint structure. Our method utilizes subspace perturbation theory that applies along either dimension of the data blocks and in any scenario with mutli-block data. This makes trait information easy to incorporate throughout the algorithm that locates partially-shared structure.


\section{Methodology}\label{sec:methodology}

Let $\Xb_1,\dots,\Xb_K$ be data blocks each containing the same set of $n$ data objects and distinct sets of $d_1,\dots, d_K$ traits. In matrix calculations we follow the bioinformatics convention where matrix columns are data objects and matrix rows are traits (i.e. the matrix $\Xb_k$ has $d_k$ rows and $n$ columns). In our data model, shown in \eqref{eq:datamodel}, each data block is assumed to be a low-rank signal matrix $\Ab_k$ plus a full-rank noise matrix $\Eb_k$: 
\begin{equation}
\label{eq:datamodel}
\Xb_k = \Ab_k + \Eb_k.
\end{equation}
We assume each entry of $\Eb_k$ is independent with identical variance $\sigma^2$ and finite fourth moment. For inferential purposes, we also assume \textit{rotational invariance} between signal and noise data matrices. The mathematical details of this assumption are provided in Section~\ref{sec:rotboot} where they are maximally relevant. 

Further discussion of the modeling assumptions for DIVAS requires notation for describing different collections of data blocks in detail. Consider the power set $2^{\{1,\dots,K\}}$ as a set of index sets, where each element $\ib\in2^{\{1,\dots,K\}}$ represents a particular collection of data block indices. Each $\ib$ indexes a shared structure among the data blocks $\Xb_k$ with $ k\in\ib$. Denote by $|\ib|$ the cardinality of $\ib$.


In order to define structure partially-shared across the data blocks, we decompose the signal matrices $\Ab_k,\ k=1\ldots,K$ into a sum of low-rank signal matrices, each of which corresponds to the joint structure shared between the collection of blocks indicated by the index $\ib$:
\begin{equation}
    \label{eq:signalmodel}
    \Ab_k = \sum_{\ib | k\in\ib} \Lb_{\ib,k}\Vb_{\ib}^\top.
\end{equation}
Here the sum extends over all index sets $\ib\in 2^{\{1,\ldots,K\}}$ that satisfy $k\in\ib$.
The $n\times r_\ib$ \textit{scores matrices} $\Vb_{\ib}$ 
model the shared structure in the trait space between the data blocks $\Xb_k$ with $k\in\ib$. The $d_k\times r_\ib$ \textit{loadings matrices} $\Lb_{\ib,k}$  contain the induced object space structure in each block $\Xb_k$ with $k\in\ib$.
In order to ensure identifiability of the decomposition   \eqref{eq:signalmodel}, the factorized signal matrices $\Lb_{\ib, k}$ and $\Vb_{\ib}$ are required to satisfy the following conditions: (Here and for the rest of the manuscript the notation $\left[\Vb_{\ib} \right]_{\ib|\ib\in S}$ denotes horizontal matrix concatenation $\left[ \Vb_{\ib_1} \cdots \Vb_{\ib_{|S|}} \right]$ of all matrices $\Vb_{\ib}$ with $\ib\in S$.)
\begin{conditions}\label{asump:decomposition}
Identifiability conditions for decomposition
\eqref{eq:signalmodel}:
\begin{enumerate}
    \item The columns of each $\Vb_{\ib}$ are orthonormal.
    \item For two different block index sets $\ib\neq\jb$, if $\ib\subset\jb$ or $\jb\subset\ib$, then the subspaces spanned by the columns of $\Vb_{\ib}$ and $\Vb_{\jb}$ in the trait space are orthogonal.
    \item The matrix $\left[\Vb_{\ib}\right]_{\ib|\ib\in 2^{\{1,\dots , K\}}}$, concatenated over all $\ib\in 2^ {\{1,\ldots,K\}},$ has rank equal to its number of columns.
    \item  For all $k$, the matrix $\left[\Lb_{\ib,k}\right]_{\ib\,|\,k\in\ib}$, concatenated over all $\ib\in 2^ {\{1,\ldots,K\}}$ so that $k\in\ib$, has rank equal to its number of columns.
\end{enumerate}
\end{conditions}
Note that the columns of the loadings matrices $\Lb_{\ib, k}$ are not required to be orthogonal and may have arbitrary magnitude in order to encode scale information. The $d_k\times n$ matrix $\Ab_{\ib,k}=\Lb_{\ib,k}\Vb_{\ib}^\top$ has rank $r_{\ib}$, the number of columns of $\Vb_\ib$. The matrix $\Ab_{\ib,k}$  will be called the {\em partially shared joint structure} between blocks in $\ib$. When $\ib$ is a singleton we also call it {\em individual structure}, and when $\ib=\{1,\ldots,K\}$ we also call it {\em fully joint structure}.

Next we prove existence and uniqueness of the joint structure decomposition \eqref{eq:signalmodel} in the absence of noise:
\begin{theorem}\label{th:construct}
For a set of signal matrices $\Ab_1,\dots ,\Ab_K$, there exists a set of matrices $\Lb_{\ib,k}, \Vb_{\ib}$ satisfying \eqref{eq:signalmodel} and identifiability Condition~\ref{asump:decomposition}. The joint structure matrices $\Ab_{\ib,k}=\Lb_{\ib,k}\Vb_{\ib}^\top$ are uniquely determined for all $\ib\in 2^{\{1,\ldots,K\}}$ and $k\in\ib$.
\end{theorem}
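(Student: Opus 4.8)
The plan is to work entirely in the trait space $\bbR^n$, where every relationship among the blocks is recorded by the row spaces $\mathrm{row}(\Ab_k)\subseteq\bbR^n$. Writing $[K]=\{1,\dots,K\}$ and, for each nonempty $\ib\subseteq[K]$, setting $W_\ib=\bigcap_{k\in\ib}\mathrm{row}(\Ab_k)$, I would first record the \emph{semilattice identity} $W_\ib\cap W_\jb=W_{\ib\cup\jb}$, which is immediate and is essentially the only structural fact about the signal that the construction needs. I then build the score subspaces by \emph{orthogonal peeling from the top of the containment order}: processing $\ib$ in decreasing cardinality, define $S_\ib=W_\ib\ominus\sum_{\jb\supsetneq\ib}W_\jb$, the orthogonal complement taken inside $W_\ib$ (legitimate since each $W_\jb$ with $\jb\supsetneq\ib$ lies in $W_\ib$). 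Taking $\Vb_\ib$ to be any orthonormal basis of $S_\ib$ makes Assumption~\ref{asump:decomposition}(1) hold by fiat, and Assumption~\ref{asump:decomposition}(2) is immediate: if $\ib\subsetneq\jb$ then $S_\jb\subseteq W_\jb$ is one of the summands $S_\ib$ was peeled against, so $S_\ib\perp S_\jb$.

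The heart of the argument is the structural identity $W_\ib=\bigoplus_{\jb\supseteq\ib}S_\jb$ for every $\ib$; its special case $\mathrm{row}(\Ab_k)=W_{\{k\}}=\bigoplus_{\ib\ni k}S_\ib$ is what lets me reconstruct $\Ab_k$, and the consequence that the whole family $\{S_\ib\}_\ib$ is linearly independent is exactly Assumption~\ref{asump:decomposition}(3). The spanning half $W_\ib=\sum_{\jb\supseteq\ib}S_\jb$ falls out of a short downward induction on $|\ib|$ directly from the peeling identity. The directness half is the crux, and I expect it to be the main obstacle. The difficulty is that Assumption~\ref{asump:decomposition} only forces orthogonality along chains, so two \emph{incomparable} score subspaces $S_\ib,S_\jb$ generally meet at a nonzero angle; consequently no single projection or inner-product computation can separate the summands, and a naive ``project onto $S_{\ib}$'' argument always leaves uncontrolled incomparable cross terms. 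My route around this is an induction on $K$ powered by a \emph{reduction lemma}: for a fixed nonempty $\ib_0$, the subfamily $\{S_\ib:\ib\supseteq\ib_0\}$ coincides with the peeled family of a smaller $(K-|\ib_0|)$-block instance whose ambient space is $W_{\ib_0}$ and whose blocks are $W_{\ib_0}\cap\mathrm{row}(\Ab_m)=W_{\ib_0\cup\{m\}}$ for $m\notin\ib_0$; this matches verbatim because any superset of $\ib_0\cup\jb$ automatically contains $\ib_0$, so the peeled complements agree. Thus every ``up-set'' subfamily is directly summed by the inductive hypothesis, and the remaining incomparable branches are glued using the \emph{signature} property that a nonzero $v\in S_\ib$ satisfies $v\in\mathrm{row}(\Ab_m)$ if and only if $m\in\ib$ (a vector of $S_\ib$ lying in a strictly larger $W_\jb$ would be orthogonal to itself).

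Granting the structural identity, existence is quick. From $\mathrm{row}(\Ab_k)=\bigoplus_{\ib\ni k}S_\ib$ I define $\Ab_{\ib,k}$ to be the component of $\Ab_k$ along $S_\ib$ relative to this (generally oblique) direct-sum decomposition of its row space, and set $\Lb_{\ib,k}=\Ab_{\ib,k}\Vb_\ib$, so that $\Ab_{\ib,k}=\Lb_{\ib,k}\Vb_\ib^\top$ and $\Ab_k=\sum_{\ib\ni k}\Lb_{\ib,k}\Vb_\ib^\top$, giving \eqref{eq:signalmodel}. Assumption~\ref{asump:decomposition}(4) then follows from a rank count: stacking the identities gives $\big[\Lb_{\ib,k}\big]_{\ib\ni k}\big[\Vb_\ib\big]_{\ib\ni k}^\top=\Ab_k$, where $\big[\Vb_\ib\big]_{\ib\ni k}$ has full column rank $\dim\mathrm{row}(\Ab_k)$ by Assumption~\ref{asump:decomposition}(3), so the left factor has rank at least $\mathrm{rank}(\Ab_k)=\dim\mathrm{row}(\Ab_k)$, which is its number of columns.

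For uniqueness I would argue that the Assumptions leave no freedom in $\mathrm{col}(\Vb_\ib)$. Given any decomposition meeting \eqref{eq:signalmodel} and Assumption~\ref{asump:decomposition}, part~(4) forces each $\Lb_{\ib,k}$ to have full column rank, so $\mathrm{row}(\Ab_{\ib,k})=\mathrm{col}(\Vb_\ib)$; part~(3) forces $\mathrm{row}(\Ab_k)=\bigoplus_{\ib\ni k}\mathrm{col}(\Vb_\ib)$, whence $\mathrm{col}(\Vb_\ib)\subseteq W_\ib$. Using uniqueness of representations in this global direct sum I obtain $W_\ib=\bigoplus_{\jb\supseteq\ib}\mathrm{col}(\Vb_\jb)$, and then part~(2), which makes $\mathrm{col}(\Vb_\ib)$ orthogonal to every $\mathrm{col}(\Vb_\jb)$ with $\jb\supsetneq\ib$, pins down $\mathrm{col}(\Vb_\ib)=W_\ib\ominus\sum_{\jb\supsetneq\ib}W_\jb=S_\ib$. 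Since this subspace depends only on the signal matrices, $\Vb_\ib$ is determined up to an orthonormal change of basis, and $\Ab_{\ib,k}$, being the unique component of $\Ab_k$ along $S_\ib$ in the direct sum $\mathrm{row}(\Ab_k)=\bigoplus_{\ib\ni k}S_\ib$, is determined outright.
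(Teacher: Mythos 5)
Your construction is the same as the paper's: given the spanning identity you prove by downward induction, peeling $W_\ib$ against $\sum_{\jb\supsetneq\ib}W_\jb$ is the same as peeling it against the previously chosen score spaces, so your $S_\ib$ coincide with the spans of the paper's inductively constructed $\Vb_\ib$, and your existence and uniqueness bookkeeping matches the paper's. The difference is that you have correctly isolated the one step carrying all the content --- directness of the family $\{S_\ib\}$, i.e.\ Condition~3 of Assumption~\ref{asump:decomposition} --- and your proposed route to it is not a proof: the reduction lemma only handles subfamilies $\{S_\jb:\jb\supseteq\ib_0\}$ lying over a common core, and the ``gluing'' of incomparable branches via the signature property is asserted, not argued. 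This gap cannot be closed, because the claim is false without further hypotheses on the signal matrices. Take $K=3$ and row spaces
\[
\mathrm{row}(\Ab_1)=\mathrm{span}(e_1,e_2),\qquad \mathrm{row}(\Ab_2)=\mathrm{span}(e_2,e_3),\qquad \mathrm{row}(\Ab_3)=\mathrm{span}(e_1+e_3)
\]
inside $\bbR^n$, $n\geq 3$. Then $W_{\{1,2\}}=\mathrm{span}(e_2)$ and every other multi-block intersection is $\{0\}$, so the peeled family is $S_{\{1\}}=\mathrm{span}(e_1)$, $S_{\{2\}}=\mathrm{span}(e_3)$, $S_{\{3\}}=\mathrm{span}(e_1+e_3)$, $S_{\{1,2\}}=\mathrm{span}(e_2)$. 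Every nonzero vector in each $S_\ib$ has exactly the signature $\ib$, yet the dependence $e_1+e_3-(e_1+e_3)=0$ runs across the three incomparable spaces $S_{\{1\}},S_{\{2\}},S_{\{3\}}$, so directness fails. Worse, existence itself fails for this configuration: as you note in your uniqueness argument, Conditions~3 and~4 force $\mathrm{col}(\Vb_\ib)\subseteq W_\ib$ and $\mathrm{row}(\Ab_k)=\bigoplus_{\ib\ni k}\mathrm{col}(\Vb_\ib)$ for any admissible decomposition; here that forces $\mathrm{col}(\Vb_{\{3\}})=\mathrm{row}(\Ab_3)$ (all other $W_\ib$ with $3\in\ib$ vanish), while $\mathrm{row}(\Ab_3)\subseteq\mathrm{row}(\Ab_1)+\mathrm{row}(\Ab_2)=\mathrm{col}(\Vb_{\{1\}})+\mathrm{col}(\Vb_{\{2\}})+\mathrm{col}(\Vb_{\{1,2\}})$, contradicting the independence demanded by Condition~3. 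So no decomposition satisfying Assumption~\ref{asump:decomposition} exists for these signal matrices.

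You should also know that the paper's own proof stumbles at exactly this step, with a weaker argument than yours: it justifies Condition~3 by claiming a rank deficiency in the concatenated $\left[\Vb_\ib\right]$ would force two of the spans to share a common nonzero vector, an inference the example above refutes (no two of $\mathrm{span}(e_1)$, $\mathrm{span}(e_3)$, $\mathrm{span}(e_1+e_3)$ meet nontrivially, yet they are dependent). In other words, the difficulty you flagged as ``the crux'' is a genuine hole in the theorem as stated, not merely in your write-up: both proofs need an additional general-position hypothesis on the signal row spaces --- ruling out one block's signal lying inside the sum of other blocks' signals while meeting them only in the declared intersections --- under which the incomparable cross terms can be controlled and your induction (or the paper's) would go through. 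As a proof attempt for the stated theorem, however, your proposal is incomplete precisely where you predicted it would be, and no completion exists.
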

\begin{proof}
We proceed with a constructive proof by induction. The steps found here will be also used in the development of the estimation algorithm in Section~\ref{sec:jse}. Denote by $\mathcal{V}_{\ib}$ the intersection of the subspaces spanned by the columns of transposed signal matrices $\Ab_k^\top$ with $k\in\ib$ in trait space. 

Step 1: Consider the index set $\ib=\{1,\ldots,K\}$. Choose $\Vb_{\{1,\dots , K\}}$ such that its columns form an orthonormal basis for $\mathcal{V}_{\{1,\dots , K\}}$.  Clearly the relevant parts of Condition~\ref{asump:decomposition} are satisfied. Also notice that any vector orthogonal to $\Vb_{\{1,\dots , K\}}$ is in at most $K-1$ subspaces spanned by the columns of  $\Ab_k^\top$.

Step 2: Let us assume that we have defined $\Vb_{\ib}$ for all $K\geq |\ib|\geq q$ that satisfy Condition~\ref{asump:decomposition}, and,
for any $|\ib|=q$, any vector orthogonal to $\left[\Vb_{\jb}\right]_{\jb\supset\ib}$ is included in at most $q-1$ subspaces spanned by the columns of $\Ab_k^\top$, $k\in\ib$.
For any $\ib$ with $|\ib|=q-1$ select  $\Vb_{\ib}$ such that its columns form an orthonormal basis for $\mathcal{V}_{\ib} \cap \left[\Vb_{\jb}\right]_{\jb\supset\ib}^\perp$, the part of the space $\mathcal{V}_{\ib}$ orthogonal to all $\Vb_{\jb}$, $\jb\supset\ib$.
Each $\Vb_{\ib}$ chosen this way satisfies parts 1 and 2 of Condition~\ref{asump:decomposition} by construction.
Condition~3 is satisfied, because if there was a rank deficiency in a concatenated matrix $\left[\Vb_{\ib}\right]_{\ib\,|\, |\ib|\geq q-1}$ there would be two indices $\ib$, $\jb$ such that $\ib\neq\jb$ and the spans of the matrices $\Vb_{\ib}$ and $\Vb_{\jb}$ share some vector in common. However this vector would already be included in $\Vb_{\ib\cup\jb}$ which is a contradiction.
Finally, for any $|\ib|=q-1$, any vector orthogonal to $\left[\Vb_{\jb}\right]_{\jb\supset\ib}$ is included in at most $q-2$ subspaces spanned by the columns of $\Ab_k^\top$, $k\in\ib$.
This completes the inductive construction of the collection of $\Vb_{\ib}$.

For each $k$, notice that the column span of $\left[\Vb_{\ib}\right]_{\ib|k\in\ib}$ is $\mathcal V_{\{k\}}$, the space spanned by columns of $\Ab_k^\top$. Thanks to part~3 of Condition~\ref{asump:decomposition}, the matrices $\Lb_{\ib,k}$ are chosen as the unique solution of the equation formed using the concatenated matrices
$\left[\Lb_{\ib,k}\right]_{\ib|k\in\ib}\cdot \left[\Vb_{\ib}\right]_{\ib|k\in\ib}^\top=\Ab_k.$ These $\Lb_{\ib,k}$ satisfy part~4 of Condition~\ref{asump:decomposition} by construction.

Now assume there exists some other collection
$
    \Ab_k = \sum_{\ib | k\in\ib} \tilde{\Lb}_{\ib,k}\tilde{\Vb}_{\ib}^\top
$
also satisfying Condition~\ref{asump:decomposition}. Following  similar arguments as above we see that the column spaces of $\tilde{\Vb}_{\ib}$ and ${\Vb}_{\ib}$ are the same, and therefore there exists an orthonormal $r_{\ib}\times r_{\ib}$  matrix $\Qb_{\ib}$, so that  $\tilde{\Vb}_{\ib}=\Vb_{\ib}\Qb_{\ib}$. Consequently
$\tilde{\Lb}_{\ib, k}=\Lb_{\ib, k}\Qb_{\ib}$ and
$\tilde{\Lb}_{\ib,k}\tilde{\Vb}_{\ib}^\top=\Lb_{\ib,k}\Vb_{\ib}^\top$.
\end{proof}

The fact that the matrices $\Lb_{\ib, k}$ and $\Vb_{\ib}$ are only determined up to basis rotation is a natural result of DIVAS being a subspace based method and not focused on matrices. In particular,
the most important information contained in $\Lb_{\ib, k}$ and $\Vb_{\ib}$ is the subspaces their columns span in object space and trait space respectively. This allows DIVAS to efficiently handle near equal singular values that could cause problems for matrix based approaches. For interpretive purposes, it can be helpful to choose a particular informative basis for the shared subspaces and examine modes of variation of the data along those basis directions. These modes of variation of the data may be formed by outer-multiplying corresponding columns of suitably rotated $\Lb_{\ib,k}$ and $\Vb_{\ib}$. In Section~\ref{sec:recon} we discuss a particular choice of such informative basis rotations $\Qb_\ib$ obtained using an SVD of a particular estimated signal matrix.

Throughout the description of methodology we will use the following synthetic three-block data example to illustrate each step of DIVAS. Each block includes a different set of traits associated with 400 observations. To mimic challenging data situations with large disparities in trait set sizes, Block 1 has 200 traits, Block 2 has 400 traits, and Block 3 has $10000$ traits. Figure~\ref{3blocktruth} displays this synthetic data set using matrix \textit{heatmaps}. Heatmaps are a graphical display of matrix entry magnitude using color. Negative entries are shown in shades of blue and positive entries are shown in red, with color saturation indicating the magnitude of each entry. This means that entries close to zero are shown with a low-saturation white color. The color scaling ranges of each heatmap are shown in the color bar below each individual plot.

Each row demonstrates the formation of one of the data blocks via the data model in \eqref{eq:datamodel}. The left-most column of heatmaps shows the observed data blocks $\Xb_k$, and the right-most column of heatmaps shows the noise matrices $\Eb_k$, which in this example are i.i.d. Gaussian matrices. The middle columns display the various rank 1 components that sum to each block's signal matrix $\Ab_k$ as in \eqref{eq:signalmodel}. Each signal matrix is comprised of a fully-shared component (second column) and partially-shared components between two of the three data blocks (third, fourth, and fifth columns). As required by our model, these fully-shared and partially-shared components are constructed such that the trait space subspace of the fully-shared component is orthogonal to the corresponding trait space subspaces of each partially-shared component. However, the partially-shared subspaces are not mutually orthogonal. In fact, each pair of partially-shared trait subspaces each has a principal angle of 60 degrees between them in the trait space, $\bbR^{400}$. Adding the matrices in the middle columns across each row in the manner of \eqref{eq:signalmodel} combines the signal components into rank 3 signal matrices. Adding the noise matrices in the manner of \eqref{eq:datamodel} then produces the observed data blocks (first column).


\begin{figure}[h]
\centering
\includegraphics[scale=0.51]{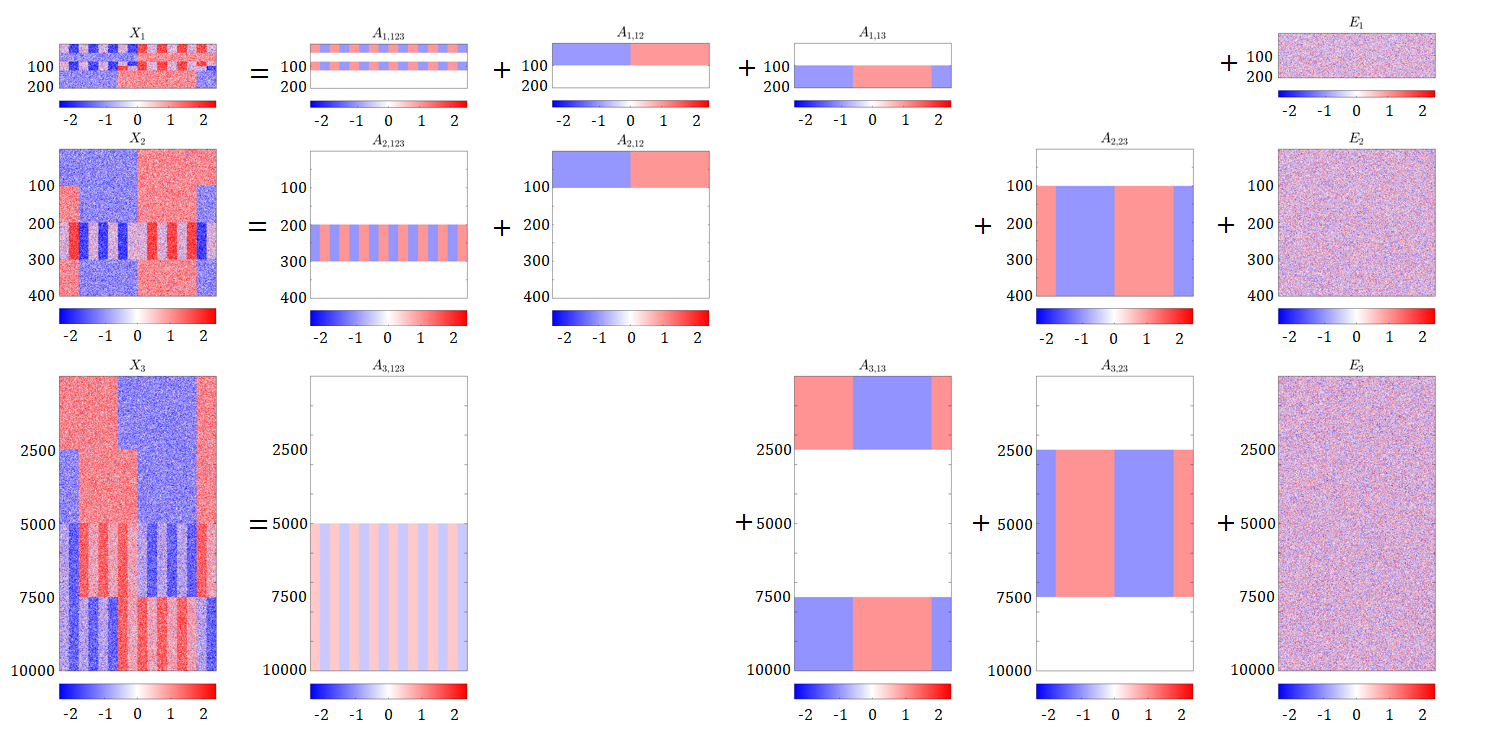}
\caption{Heatmap view of $K=3$ synthetic example construction. Heatmaps share a common color scheme displayed in the color bars below each plot, with white representing 0 magnitude. The three blocks in the first column are the observed data, and are formed by adding up the other matrices in their respective row. The three blocks in the second column show the rank 1 fully-shared structure common to each block. The next three columns show the rank 1 partially-shared structure common to each subset of two blocks. The final column shows the noise matrices for each data block.}\label{3blocktruth}
\end{figure}

The procedure of DIVAS takes the observed data blocks as input and outputs a full breakdown of shared and partially-shared structure between them over three steps. The first step, described in Section~\ref{sec:sse}, extracts and estimates the dimension, magnitude, and direction of each block's signal subspace. The second step, described in Section~\ref{sec:jse}, combines the information from each block to locate shared directions between subspaces. The third step, described in Section~\ref{sec:recon}, uses those shared directions to form estimates $\hat{\Ab}_{\ib,k}$ of the partially shared joint structure matrices for each data block, e.g., estimates of the middle columns of Figure~\ref{3blocktruth}. Section~\ref{sec:graphics} describes the visual display of DIVAS output and corresponding diagnostic measurements.

\subsection{Signal Subspace Extraction}\label{sec:sse}


The first step of DIVAS is to estimate the subspaces spanned in both object space and trait space by the signal matrix for each data block. As the properties and analyses that take place in this subsection apply to each data block independently, in this section we suppress the subscript $k$ indexing data blocks $\Xb_k$ when referring to data-block-specific quantities for simplicity of notation. The indexing subscript will return when information from different data blocks is combined together in Section~\ref{sec:jse}. When we observe $\Xb$, we are observing data that's been perturbed in both magnitude (singular values) and orientation (basis vectors) from the signal $\Ab$. The signal magnitude is readily recoverable from the data magnitude via the signal extraction procedure described subsequently in Section~\ref{sec:mpsignal}. The signal orientation is itself more challenging to estimate from the data orientation; there's no reason to favor one direction of rotation over another under the rotational invariance assumptions on the noise matrix $\Eb$. However, the key to DIVAS is to quantify a range of feasible signal orientations given the observed data orientation using bounds on principal angles. These techniques are described in Sections~\ref{sec:perturb} and \ref{sec:rotboot}. The results of this step for the synthetic data set presented in Figure~\ref{3blocktruth} are shown in Section~\ref{sec:syntheticsse}

\subsubsection{Signal Subspaces}\label{sec:mpsignal}


In \citet{sn2013}, Proposition 5 demonstrates that if the noise in \eqref{eq:datamodel} is orthogonally invariant, any procedure for extracting signal from a data matrix $\Xb$ need only consider the singular values of the data matrix.  In Section~8.2 of \citet{gd2017}, the authors provide a similar result under the conditions that the noise is i.i.d.~with zero mean and finite fourth moment and the left and right singular vector matrices (i.e. the subspace orientation information) of the signal matrix are drawn uniformly at random.
Motivated by this, our algorithm uses the signal extraction procedures developed in these articles. We perform SVD on $\Xb$ to find $\Xb = \overbar{\Ub}\overbar{\Db}\overbar{\Vb}^{\top}$. The columns of the matrices $\overbar{\Ub}$ and $\overbar{\Vb}$ are orthonormal bases for the subspaces spanned in object space and trait space of $\Xb$ respectively, and the diagonal entries of $\overbar{\Db}$ are the singular values of $\Xb$. Denote these singular values as $\overbar{\nu}_1,\dots,\overbar{\nu}_{d\wedge n}$. Estimations of the signal matrix $\hat{\Ab}$ typically take the form of a decomposition in terms of rank 1 matrices/approximations that combine to form the estimated object space and trait space subspaces. The vectors $\overbar{\ub}_i$ and $\overbar{\vb}_i$ denote the $i$th columns of the matrices $\overbar{\Ub}$ and $\overbar{\Vb}$ respectively, and $\eta(\bullet)$ is a function from $\bbR^+$ to $\bbR^+$ for shrinking the singular values:
\begin{equation}
\label{eq:signalest}
\hat{\Ab} = \sum_{i=1}^{d\wedge n} \eta(\overbar{\nu}_i) \overbar{\ub}_i \overbar{\vb}_i^{\top}.
\end{equation}


Common choices for $\eta$ include \textit{soft thresholding}: $\eta_{soft}(\nu) = (\nu-c)\vee 0$, and \textit{hard thresholding}: $\eta_{hard}(\nu) = \nu \mathbb{I}_{\{\nu\geq c\}}$, for some constant $c$, and where $\mathbb{I}_{\{\bullet\}}$ represents an indicator function. In either case, any singular value smaller than $c$ is set to 0. This means both procedures have dimension-reducing effects on the estimated $\hat{\Ab}$, and only subspaces $\overbar{\ub}_i \overbar{\vb}_i^{\top}$ associated with nonzero transformed singular values contribute to the estimate. \citet{gd2014} outline optimal choices for $c$ for both soft and hard thresholding in terms of the \textit{aspect ratio} $\beta = \frac{d\wedge n}{d\vee n}$ and the standard deviation $\sigma$ of the noise matrix $\Eb$. 

In the additive noise data matrix model \eqref{eq:datamodel}, the presence of noise inflates the singular values associated with the signal component of the data matrix. Hard thresholding does not account for this phenomenon at all and soft thresholding often overcorrects by applying the same amount of shrinkage to each nonzero singular value. \citet{sn2013} and \citet{gd2017} each propose optimal thresholding functions based on the Marchenko-Pastur distribution (see Appendix~\ref{sec:mp}) under a variety of matrix norms. We use the operator-norm-optimal function $\eta^*$ from \citet{gd2017} for DIVAS:
\begin{equation}
\label{eq:opShrink}
\eta^*(\nu) = \begin{cases}
\frac{1}{\sqrt{2}}\sqrt{\nu^2 - \beta - 1 + \sqrt{(\nu^2 - \beta - 1)^2 - 4\beta}}, & \nu \geq 1 + \sqrt{\beta}; \\
0, & \nu < 1 + \sqrt{\beta}.
\end{cases}
\end{equation}
Figure~\ref{threshes} demonstrates how this shrinkage function \eqref{eq:opShrink}, blue solid line, compromises between soft and hard thresholding for singular values with different magnitudes for a matrix with aspect ratio $\beta = 1$ and noise standard deviation $\sigma = 1$. Small values are thresholded according to optimal soft thresholding (magenta dot-dash line), but the shrinkage function approaches optimal hard thresholding (black dashed line) for larger values.

\begin{figure}[h]
\centering
\includegraphics[scale=0.4]{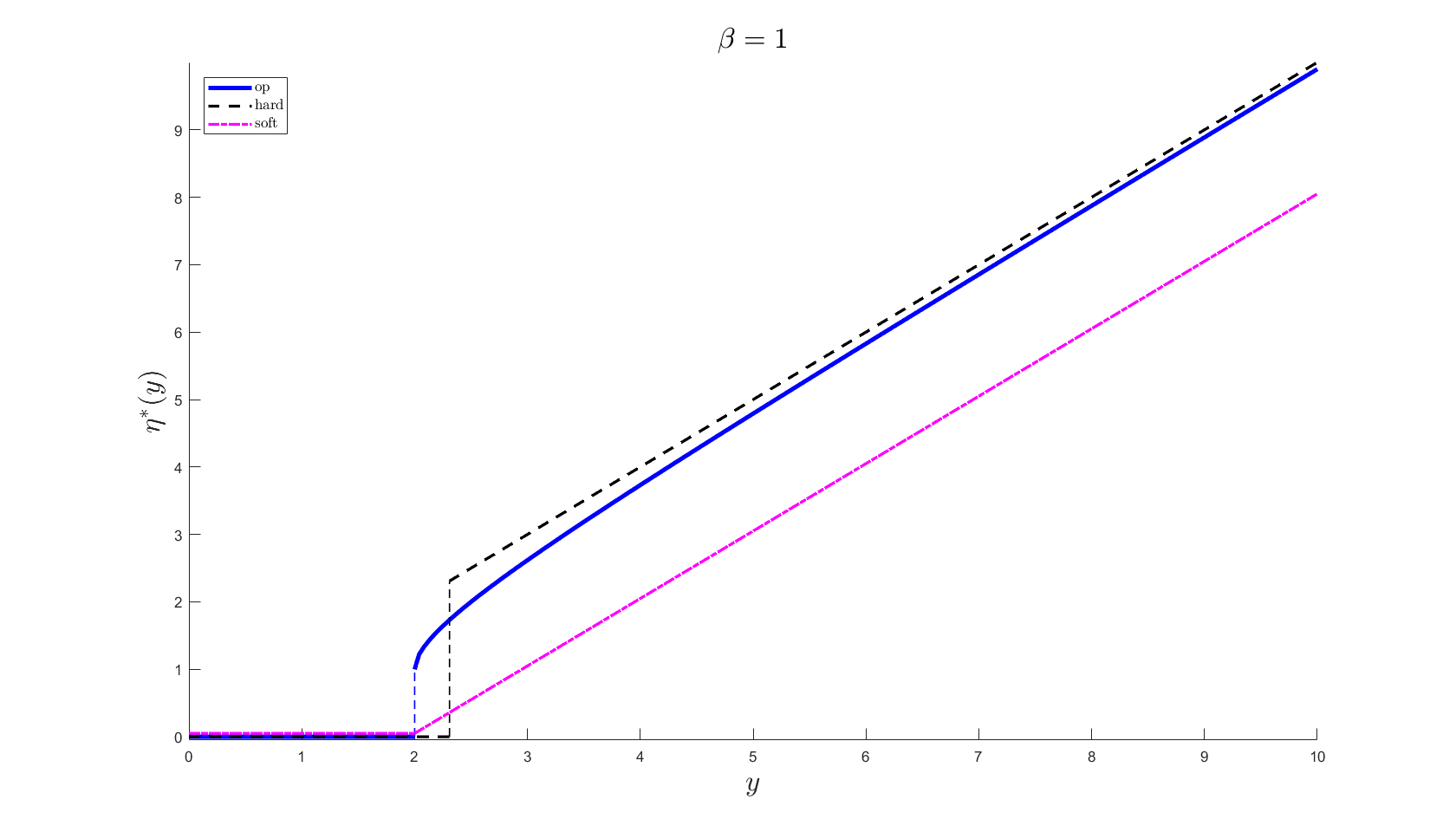}
\caption{Functions for hard thresholding, soft thresholding, and optimal shrinkage under operator norm loss for a square matrix ($\beta=1$). The optimal shrinkage function compromises between the other two approaches. Figure produced with code from \cite{gd2017}.}\label{threshes}
\end{figure}

Equation~\eqref{eq:opShrink} assumes noise standard deviation $\sigma = 1$. To use the shrinkage function in general settings, we must appropriately scale the singular values before and after shrinkage according to some estimate of the standard deviation of the noise $\hat{\sigma}$. \citet{sn2013} use a grid search over several candidate values for $\hat{\sigma}$ to find a value that minimizes the Kolmogorov-Smirnov distance between the non-signal singular values and the appropriate Marchenko-Pastur distribution. \citet{gd2017} opt for the simple, robust, closed-form estimate $\hat{\sigma} = \frac{\nu_{median}}{\sqrt{MP(\beta)_{0.5}}}$, where $\nu_{median}$ denotes the median singular value of $\Xb$ and $MP(\beta)_{0.5}$ denotes the median of the Marchenko-Pastur distribution with parameter $\beta$ (see Section~\ref{sec:mp}). We use the latter method for DIVAS noise standard deviation estimation.

Combining the previous equations, our estimate for the signal matrix $\hat{\Ab}$ for a given data block $\Xb$ is:
\begin{equation}
\label{eq:finala}
\hat{\Ab} =  \sum_{i=1}^{d\wedge n} \hat{\sigma}\eta^*(\overbar{\nu}_i/\hat{\sigma}) \overbar{\ub}_i \overbar{\vb}_i^{\top}.
\end{equation}
Let $\hat{\nu_i} = \hat{\sigma}\eta^*(\overbar{\nu}_i/\hat{\sigma})$ be the $i$th shrunken singular value of $\Xb$. Let $\hat{r}$ be the number of nonzero shrunken singular values, and therefore the estimated rank of $\Ab$. Let $\hat{\Ub}$ and $\hat{\Vb}$ be matrices containing the first $\hat{r}$ columns of $\overbar{\Ub}$ and $\overbar{\Vb}$, respectively. Using this notation and defining the matrix $\hat{\Db}$ as the $\hat{r}\times\hat{r}$ diagonal matrix with diagonal entries equal to $\hat{\nu}_1,\dots,\hat{\nu}_{\hat{r}}$, we can also write $\hat{\Ab} = \hat{\Ub}\hat{\Db}\hat{\Vb}^{\top}$. Note that $\hat{\Ub}$ is therefore an orthonormal basis for the subspace spanned in object space of $\hat{\Ab}$ and $\hat{\Vb}$ is an orthonormal basis for the subspace spanned in trait space of $\hat{\Ab}$.

\subsubsection{Angle Perturbation Theory} \label{sec:perturb}


The foundation of DIVAS is determining whether candidate directions $\vb\astar\in \bbR^n$ lie in the trait space span $\TS(\Ab)$ of the signal matrix $\Ab$. If $\Ab$ was observable, this would simply amount to checking whether the angle $\theta$ between $\vb\astar$ and $\TS(\Ab)$ was 0. Since $\Ab$ and $\theta$ are unobservable, we aim to estimate $\theta$ based on the observable estimated low-rank signal matrix $\hat{\Ab}$ and an estimate of the noise variation $\hat{\Eb}$ developed in Appendix~\ref{sec:noise}. Specifically, we want to choose a \textit{perturbation angle bound} $\hat{\phi}$ that defines a cone-shaped significance region around $\TS(\hat{\Ab})$ which contains $\TS(\Ab)$ with high probability. Directions $\vb\astar$ lying within that significance region would then be potential basis directions for $\TS(\Ab)$. Hence directions $\vb\astar$ that lie within the significance regions of multiple data blocks would then be potential basis directions for partially-shared joint structure between those blocks. To arrive at such a data-block-wise uniform perturbation angle bound $\hat{\phi}$, we first look to bound the range of possible values for $\theta$ for one given candidate direction $\vb\astar$.





The first step is construction of the range of values for $\theta$ based on the relationships between the projections of $\vb\astar$ onto various subspaces. This is illustrated using a simple low-dimensional example in Figure~\ref{thetastars}. In particular, $\vb\astar$ (green vector in both panels of Figure~\ref{thetastars}) is projected onto each of $\TS(\Ab)$ (translucent purple plane) and $\TS(\hat{\Ab})$ (solid gold plane), with those projections denoted $\vb\astar_{proj}$ (red solid lines) and $\hat{\vb}\astar_{proj}$ (blue solid lines) respectively. Computations of these projections are based on the orthonormal basis matrices $\Vb$ for $\TS(\Ab)$ and $\hat{\Vb}$ for $\TS(\hat{\Ab})$. Let $\hat{\theta}$ be the angle between $\vb\astar$ and $\TS(\hat{\Ab})$. We can write expressions for $\theta$ and $\hat{\theta}$ in terms of the above quantities as follows: $$\theta = \arccos\left(\frac{\langle\vb\astar,\vb\astar_{proj}\rangle}{\parallel \vb\astar\parallel\parallel\vb\astar_{proj}\parallel}\right);\;\; \vb\astar_{proj} = \Vb\Vb^{\top}\vb\astar.$$
$$\hat{\theta} = \arccos\left(\frac{\langle\vb\astar,\hat{\vb}\astar_{proj}\rangle}{\parallel \vb\astar\parallel\parallel\hat{\vb}\astar_{proj}\parallel}\right);\;\; \hat{\vb}\astar_{proj} = \hat{\Vb}\hat{\Vb}^{\top}\vb\astar.$$

We can construct bounds involving $\theta$ and $\hat{\theta}$ by considering further projections between $\TS(\Ab)$ and $\TS(\hat{\Ab})$. The total angle traversed by projecting $\vb\astar$ to $\TS(\Ab)$ and then projecting that result, $\vb\astar_{proj}$, onto $\TS(\hat{\Ab})$ (red dashed line in left panel of Figure~\ref{thetastars}) is at least as large as $\hat{\theta}$, the angle between $\vb\astar$ and $\TS(\hat{\Ab})$. Define $\theta\astar_1$ as the angle between $\vb\astar_{proj}$ and $\TS(\hat{\Ab})$. By the triangle inequality, $\hat{\theta} \leq \theta + \theta\astar_1$. Via an analogous projection of $\hat{\vb}\astar_{proj}$ onto $\TS(\Ab)$ (blue dashed line in right panel of Figure~\ref{thetastars}), define $\theta\astar_2$ as the angle between $\hat{\vb}\astar_{proj}$ and $\TS(\Ab)$. Then $\theta$, the angle between $\vb\astar$ and $\TS(\Ab)$, is no larger than the total angle traversed by projecting $\vb\astar$ onto $\TS(\hat{\Ab})$ and then projecting that result, $\hat{\vb}\astar_{proj}$, onto $\TS(\Ab)$: $\theta \leq \hat{\theta} + \theta\astar_2$. 

\begin{figure}[h]
\includegraphics[scale=0.3]{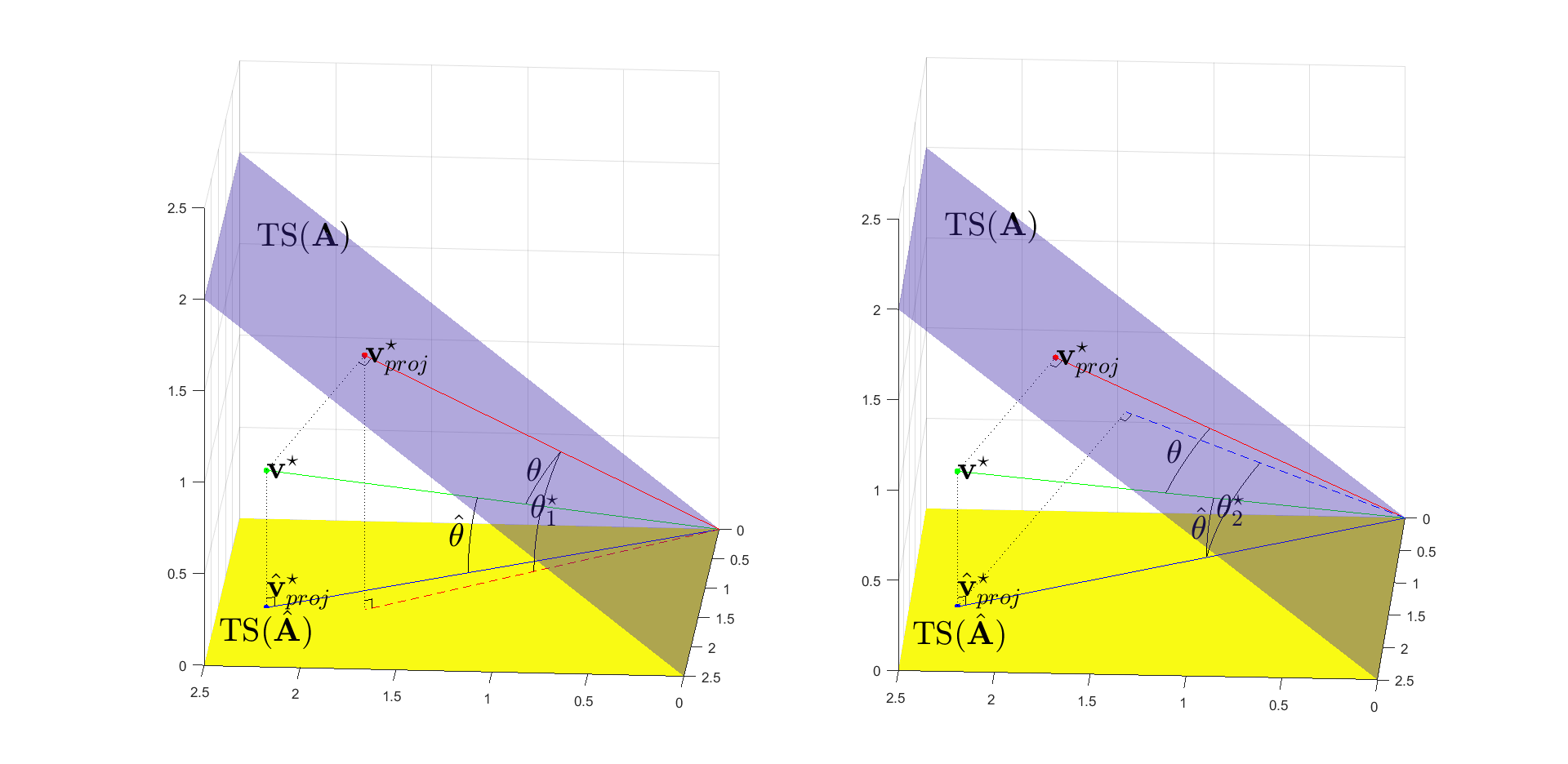}
\caption{Locations of $\theta$, $\hat{\theta}$, $\theta\astar_1$, and $\theta\astar_2$ in a low-dimensional example. Each panel demonstrates a different angle bound. Left: $\hat{\theta} \leq \theta + \theta\astar_1$. Right: $\theta \leq \hat{\theta} + \theta\astar_2$.}\label{thetastars}
\end{figure}

The above discussion of angles between subspaces summarizes the proof of the following theorem. More details can be found in the Ph.D. dissertation of \citet{mjdiss}.
\begin{theorem}
\label{thetastarMJ}
Let $\Xb = \Ab + \Eb$ be a $d\times n$ data matrix which is a sum of a signal matrix $\Ab$ and a noise matrix $\Eb$ under the assumptions of \eqref{eq:datamodel}. Given $\theta$, $\hat{\theta}$, $\theta_1\astar$, and $\theta_2\astar$ defined as above, and using $(\bullet)_+ = \max(\bullet, 0)$, we have:
\begin{equation}
\label{thetastarbounds}
(\hat{\theta} - \theta_1\astar)_+ \leq \theta \leq \hat{\theta}+\theta_2\astar.
\end{equation}
\end{theorem}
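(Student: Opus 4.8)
The plan is to recognize both inequalities in \eqref{thetastarbounds} as instances of one geometric principle: the quantity $\angle(\xb,\yb)=\arccos\big(\langle\xb,\yb\rangle/(\|\xb\|\,\|\yb\|)\big)$ is the geodesic distance between the directions of $\xb$ and $\yb$ on the unit sphere in $\bbR^n$, and as such obeys the triangle inequality $\angle(\xb,\zb)\le\angle(\xb,\yb)+\angle(\yb,\zb)$. Coupled with the elementary fact that the angle between a vector and a subspace is the \emph{smallest} angle it makes with any vector of that subspace, this upgrades the informal ``total angle traversed'' argument sketched in the preceding paragraphs into a rigorous chain of inequalities. First I would isolate the two lemmas the whole argument rests on.

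The first lemma is that for a subspace $S\subseteq\bbR^n$ with orthogonal projector $P_S$ and a vector $\vb$ with $P_S\vb\neq\0$, the angle $\angle(\vb,S):=\min_{\sbb\in S}\angle(\vb,\sbb)$ is attained at $\sbb=P_S\vb$ and equals $\arccos(\|P_S\vb\|/\|\vb\|)$. This follows because $\langle\vb,\sbb\rangle=\langle P_S\vb,\sbb\rangle\le\|P_S\vb\|\,\|\sbb\|$ by Cauchy--Schwarz, so $\cos\angle(\vb,\sbb)\le\|P_S\vb\|/\|\vb\|$ for every $\sbb\in S$, with equality at $\sbb=P_S\vb$. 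In particular $\theta=\angle(\vb\astar,\TS(\Ab))$, $\hat{\theta}=\angle(\vb\astar,\TS(\hat{\Ab}))$, $\theta_1\astar=\angle(\vb\astar_{proj},\TS(\hat{\Ab}))$, and $\theta_2\astar=\angle(\hat{\vb}\astar_{proj},\TS(\Ab))$ all lie in $[0,\pi/2]$. The second lemma is the spherical triangle inequality itself, which I would justify either via the spherical law of cosines or, more cleanly, by observing that $\angle$ is the pullback under the normalization map $\xb\mapsto\xb/\|\xb\|$ of the geodesic metric on the unit sphere, hence a genuine metric.

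With these in hand the upper bound follows in three steps. The vector $\wb:=\Vb\Vb^\top\hat{\vb}\astar_{proj}$, the projection of $\hat{\vb}\astar_{proj}$ onto $\TS(\Ab)$, lies in $\TS(\Ab)$, so the minimality lemma gives $\theta=\angle(\vb\astar,\TS(\Ab))\le\angle(\vb\astar,\wb)$. Applying the triangle inequality through the intermediate direction $\hat{\vb}\astar_{proj}$ yields $\angle(\vb\astar,\wb)\le\angle(\vb\astar,\hat{\vb}\astar_{proj})+\angle(\hat{\vb}\astar_{proj},\wb)$; the first summand is exactly $\hat{\theta}$ and the second equals $\angle(\hat{\vb}\astar_{proj},\TS(\Ab))=\theta_2\astar$, so $\theta\le\hat{\theta}+\theta_2\astar$. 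The lower bound is the mirror image: projecting $\vb\astar_{proj}$ onto $\TS(\hat{\Ab})$ and running the identical three steps produces $\hat{\theta}\le\theta+\theta_1\astar$, i.e. $\theta\ge\hat{\theta}-\theta_1\astar$; since an angle to a subspace is nonnegative, $\theta\ge(\hat{\theta}-\theta_1\astar)_+$, completing \eqref{thetastarbounds}.

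I expect the only genuine obstacle to be pinning down the triangle inequality for $\angle$ rigorously, since the ``total angle traversed'' phrasing is suggestive but not itself a proof; invoking the geodesic-metric interpretation (or the spherical law of cosines) dispatches it cleanly. What remains is bookkeeping over degenerate cases --- if a relevant projection such as $\Vb\Vb^\top\hat{\vb}\astar_{proj}$ vanishes, the corresponding angle is $\pi/2$ and the bound holds trivially --- and confirming that all four angles sit in $[0,\pi/2]$, so that no sign ambiguity in the $\arccos$ arises when the inequalities are chained.
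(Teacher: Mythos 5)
Your proposal is correct and takes essentially the same route as the paper: the paper's proof is exactly the ``total angle traversed'' projection argument you formalize, chaining the spherical triangle inequality through the intermediate directions $\vb\astar_{proj}$ and $\hat{\vb}\astar_{proj}$ to get $\hat{\theta} \leq \theta + \theta_1\astar$ and $\theta \leq \hat{\theta} + \theta_2\astar$. The only difference is that you supply the supporting rigor (the Cauchy--Schwarz minimality lemma identifying the angle to a subspace with the angle to the orthogonal projection, and the metric property of angles on the sphere) that the paper leaves informal and defers to the cited dissertation.
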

Both inequalities in \eqref{thetastarbounds} will be used to rule out directions $\vb\astar$ as a candidate for the joint spaces. These exclusions will be based on two distinct statistical arguments: a novel rotational bootstrap used for both bounds, and a distribution of angles between random directions used only for the upper bound.

If the angle to the estimated signal subspace $\hat{\theta}$ for a given candidate direction $\vb\astar$ is less than $\theta_1\astar$, then the lower bound in \eqref{thetastarbounds} for the angle to the true signal subspace is 0, indicating that this direction can't be ruled out as lying in the true signal subspace. The angle $\theta\astar_1$ is behaving much like the desired perturbation angle bound, but as noted in \citet{mjdiss}, $\theta_1\astar$ is not directly estimable for a given direction. However, $\theta_1\astar$ is uniformly bounded from above for all $\vb\astar$ by the maximum principal angle $\phi$ between $\TS(\Ab)$ and $\TS(\hat{\Ab})$ (see Appendix~\ref{sec:paa}). Our chosen perturbation angle bound will therefore be a statistical estimate $\hat{\phi}$ of that maximum principal angle $\phi$. This estimation is performed via a rotational bootstrap as described in Section~\ref{sec:rotboot}.

Unlike $\theta_1\astar$, the angle $\theta_2\astar$ in the upper bound of \eqref{thetastarbounds} can be estimated for each $\vb\astar$ using
\begin{align}
\begin{split}
\label{eq:thetatwostar}
\theta_2\astar 
&= \arccos\left(\frac{\parallel \Vb^{\top}\hat{\Vb}\hat{\Vb}^{\top}\vb\astar \parallel}{\parallel \hat{\Vb}\hat{\Vb}^{\top}\vb\astar \parallel}\right).
\end{split}
\end{align}
The only unknown quantity in this formula is the matrix $\Vb^{\top} \hat{\Vb}$, and we generate samples from the estimated distribution of this matrix as part of the rotational bootstrap in Section~\ref{sec:rotboot}. Therefore by recording those samples and using them in \eqref{eq:thetatwostar} we can generate from a bootstrap distribution of $\theta_2\astar$ and choose a high percentile denoted as $\hat{\theta}_2\astar$. 

The main use of this estimate is determining whether $\vb\astar$ can be distinguished from an arbitrarily chosen direction based on the estimated upper bound $\hat{\theta}+\hat{\theta}\astar_2$. The rotational invariance property assumed of the signal and noise in \eqref{eq:datamodel} implies a natural null distribution for comparison. In particular, we choose the distribution of angles between a fixed arbitrary $\hat{r}$-dimensional subspace of $\bbR^n$ (recall that $\hat{r}$ is the estimated signal rank) and unit vectors chosen uniformly at random. We pick a \textit{random direction angle bound} $\theta_0$ as a low percentile of that null distribution. If $\hat{\theta}+\hat{\theta}\astar_2$ lies above $\theta_0$ for some direction $\vb\astar$, then that direction cannot be distinguished from an arbitrarily chosen direction, which provides statistical evidence that $\vb\astar$ is far from $\TS(\Ab)$.



The above derivations of a perturbation angle bound and other angle-based inference have taken place entirely in trait space. Analogous derivations can be carried out in object space, and the estimation of perturbation angle bounds in both spaces can take place simultaneously during the rotational bootstrap. When considering candidate directions $\vb\astar$, we should additionally rule out directions whose corresponding basis directions in object space do not obey the object space perturbation angle bounds. Therefore both space's angle bounds play key roles in the optimization problem for locating joint structure between data blocks. This leads to more precise estimates of joint subspaces compared to methods like AJIVE \citep{ajive} that consider only trait space information in their algorithms.

\subsubsection{Rotational Bootstrap}\label{sec:rotboot}

We estimate perturbation angle bounds for the object space and trait space of a data block using a novel \textit{rotational bootstrap}. This technique is designed to take advantage of the assumed rotational invariance property and  aims to estimate the distribution of principal angles between object space and trait space subspaces of $\Xb$ and $\Ab$ through random generation of replicate signal subspaces.

Recall $\Xb=\Ab+\Eb$ from \eqref{eq:datamodel}, and as in Section~\ref{sec:mpsignal} the compact SVD of the rank $r$ signal matrix is $\Ab=\Ub \Db \Vb^\top$ and the compact SVD of the rank $r$ approximation to the data $\Xb$ is 
$\hat\Ab=\hat\Ub \hat\Db \hat\Vb^\top$. Next consider a random replication $\Xb^\diamond=\Ub^\diamond \Db \Vb^{\diamond\top}+\Eb^\diamond$, where $\Eb^\diamond$ has the same distribution as $\Eb$, and $\Ub^\diamond$ and $\Vb^\diamond$ are random $d\times r$ and $n\times r$ orthonormal matrices, respectively. The corresponding compact SVD of the rank $r$ approximation to $\Xb^\diamond$ is $\hat\Ab^\diamond=\hat\Ub^\diamond \hat\Db^\diamond \hat\Vb^{\diamond\top}$.
\begin{assumption}\label{as:rotation}
 The data matrix model \eqref{eq:datamodel} is called {\em rotationally invariant} when the matrices  $\hat \Db, \Vb^\top\hat\Vb, \hat\Vb^\top\Vb, \Ub^\top\hat\Ub, \hat\Ub^\top\Ub$ have the same distribution as the corresponding matrices \\ $\hat\Db^\diamond,\Vb^{\diamond\top}\hat\Vb^\diamond, \hat\Vb^{\diamond\top}\Vb^\diamond, \Ub^{\diamond\top}\hat\Ub^\diamond, \hat\Ub^{\diamond\top}\Ub^{\diamond}$.
\end{assumption}
As discussed in Appendix~\ref{sec:paa}, these matrices determine the principle angle structure between the spaces spanned by the low rank signal and its estimate in both trait and object spaces.
Theorem~7 of \citet{mjdiss}  shows that if the noise distribution is rotationally invariant, e.g., having i.i.d.~centered Gaussian entries, then the model satisfies Assumption~\ref{as:rotation}. Alternatively, the model will be rotationally invariant if the signal matrix $\Ab$ is considered random following a rotationally invariant prior distribution akin to (54) in \citet{gd2017}.

The continuity of these distributions in the singular values $\Db$ suggests use of a {\em parametric bootstrap} estimator of these quantities based on the estimated $\hat r\times\hat r$ singular value matrix $\hat\Db$.
In particular, we form a bootstrap replication of a signal matrix $\Ab^{\circ} = \Ub^{\circ} \hat{\Db} \Vb^{\circ\top}$, where $\Ub^{\circ}$ and $\Vb^{\circ}$  are random $d\times\hat{r}$ and $n\times\hat{r}$ orthonormal matrices, respectively. Using this randomly rotated estimated signal matrix along with an estimate $\hat{\Eb}$ of the noise matrix $\Eb$, we form a bootstrap replication of the data matrix $\Xb^{\circ} = \Ab^{\circ} + \hat{\Eb}$. If the rotational invariance assumption is satisfied, and the estimated $\hat\Db$ is close to $\Db$ this construction produces replicate signal and data matrices with principal angle structure drawn from a similar distribution as the unobserved principal angle structure between the true signal and data matrices. 
An important, and perhaps surprising point is that the na\"ive noise matrix estimate $\hat{\Eb} = \Xb - \hat{\Ab}$ is not appropriate for use in this construction. This is because $\Xb-\hat{\Ab}$ has insufficient energy in the directions associated with $\hat{\Ab}$, and therefore has eigenvalues that don't follow the Marchenko-Pastur distribution in the manner expected for a noise matrix under our assumptions. Our proposed estimator, labeled $\hat{\Eb}_{impute}$, is shown in \eqref{eq:ehatimpute} and corrects for the insufficient energy through imputation via Marchenko-Pastur random variates. See Appendix~\ref{sec:mp} for details on the Marchenko-Pastur distribution and Appendix~\ref{sec:noise} for full details on the poor performance of $\Xb-\hat{\Ab}$ and the motivation of $\hat{\Eb}_{impute}$.


As mentioned in Section~\ref{sec:perturb}, we will use estimates of the maximum principal angles between the subspaces spanned in object space and trait space by $\Xb$ and $\Ab$ as perturbation angle bounds. Through repeated replications of the randomly rotated signal and data matrices described in the previous paragraph, we generate bootstrap samples estimating the \textit{distribution} of principal angles between subspaces spanned by $\Xb$ and $\Ab$ in both object and trait space. With sufficiently many replications (we use $M = 400$) we can choose high quantiles (e.g. 0.95) of the empirical distributions of maximum principal angles as statistical perturbation angle bounds. Recall that $\hat{\phi}$ is the trait space perturbation angle bound, and denote the corresponding object space perturbation angle bound as $\hat{\psi}$. 

The procedure described in Section~\ref{sec:mpsignal} discriminates noise fairly well, but as our algorithm is based on angles we find that additional angle-based rank selection is often necessary for good practical performance. Therefore as part of the rotational bootstrap we \textit{filter} the signal subspaces according to the random direction angle bound $\theta_0$ defined in Section~\ref{sec:perturb}. In particular we choose a filtered rank $\check{r}$ such that the estimated maximum principal angles between true and estimated signal don't exceed $\xi\theta_0$, where $\xi\in(0,0.5]$ is a tuning parameter. In our analyses we explored a grid of values for $\xi$ ranging from 0.3 to 0.5 and found that a value between 0.35 and 0.4 often captured an appropriate amount of signal for our data sets. Therefore the case studies in Sections~\ref{sec:genomics} and \ref{sec:data} choose $\xi=1-\frac{2}{1+\sqrt{5}}\approx 0.382$, a value based on the golden ratio. This hyperparameter can be tuned up or down within $(0, 0.5]$ to include more or less information from the estimated signal matrix in the analysis.


The limitation of $\xi\leq 0.5$ follows from the statistical inference framework laid out in Section~\ref{sec:perturb}. If the lower bound $(\hat{\theta} - \hat{\phi})_+$ is 0 and the upper bound $\hat{\theta} + \hat{\theta}_2\astar$ is simultaneously greater than $\theta_0$ for a given candidate direction $\vb\astar$, the inference procedure says there is evidence that $\vb\astar$ is both significantly close to the true signal subspace and indistinguishable from an arbitrary direction. This inference outcome is completely non-informative. In this case, both $\hat{\theta}$ and $\hat{\theta}_2\astar$ must be bounded from above by the maximum principal angle between estimated and true signal subspaces. Therefore this non-informative inference outcome is avoided by filtering the estimated signal subspace until the rotational-bootstrap-estimated maximum principal angle is at most $0.5\cdot\theta_0$.




The above description of the rotational bootstrap algorithm is formulated in Algorithm~\ref{alg:rotboot}. For each block $k$, we have as inputs to the algorithm the estimated signal matrix $\hat{\Ab} = \hat{\Ub}\hat{\Db}\hat{\Vb}^\top$ from \eqref{eq:finala} in Section~\ref{sec:sse}, the estimated residual matrix $\hat{\Eb}_{impute}$ from \eqref{eq:ehatimpute} in Appendix~\ref{sec:noise}, the random direction angle bound $\theta_0$ discussed at the end of Section~\ref{sec:perturb}, and the hyperparameters $\xi$ and $\alpha$. $\alpha$ is the desired confidence level for the perturbation angle bounds and the default is 0.95. At the end of the algorithm, we have as outputs estimates of the trait space and object space perturbation angle bounds $\hat{\psi}$ and $\hat{\phi}$ respectively for each block $k$.

During each replication, random subspaces are generated from i.i.d. standard Gaussian matrices with the same centering operations used on the data. Note that orthogonalization of an i.i.d. random matrix in this fashion is identical to sampling from a rotationally uniform distribution of subspaces, according to Theorem 2.2.1 from \citet{chikuse2012statistics}. The inner \textbf{for} loop records maximum principal angles at each possible filtered rank from 1 to $\hat{r}$. After the outer \textbf{for} loop concludes, the algorithm chooses a filtered rank $\check{r}$ to align with the chosen value of $\xi$. In the case where the filtered ranks in object space and trait space differ, the smaller of the two is chosen to ensure compliance in both spaces. The trait and object space perturbation angle bounds $\hat{\phi}$ and $\hat{\psi}$ are chosen as the $1-\alpha$ percentile of the empirical distributions of angles at the filtered rank $\check{r}$. Once the filtered rank is selected, 
we filter the columns of the estimated basis matrices for the signal object and trait space subspaces to correspond with the reduced rank. Let $\check{\Ub} = \overbar{\Ub}_{1:\check{r}}$ and $\check{\Vb} = \overbar{\Vb}_{1:\check{r}}$ be the final estimates of the signal object and trait space bases respectively.

\begin{algorithm}[H]
\caption{Rotational Bootstrap}\label{alg:rotboot}
\begin{algorithmic}
\Require $\hat{\Db}$: $\hat{r}\times\hat{r}$ diagonal singular values matrix, $\hat{\Eb}$: $d\times n$ residual matrix, $\theta_0$: random direction angle bound, $\xi$: filter percentage, $\alpha$: significance level, $M$: number of replications
\State $objectAngles\gets 90*\1^{M\times \hat{r}}$; $traitAngles\gets 90*\1^{M\times \hat{r}}$
\ForAll{$m\in \{1,\dots,M\}$}
	\State $\Ub^{\circ}\gets \text{rand}^{d\times\hat{r}}$; $\Vb^{\circ}\gets \text{rand}^{n\times\hat{r}}$
	\State \Comment Replications must be orthogonal to constant function direction in appropriate spaces.
	\If{$\Xb$ is trait-centered}
		\State $\Ub^{\circ}\gets \left(\Ib^{d\times d} - \frac{1}{d}\1^{d\times d}\right) \Ub^{\circ}$
	\EndIf
	\If{$\Xb$ is object-centered}
		\State $\Vb^{\circ}\gets \left(\Ib^{n\times n} - \frac{1}{n}\1^{n\times n}\right) \Vb^{\circ}$
	\EndIf
	\State $\Ub^{\circ} \gets \text{orth}(\Ub^{\circ})$; $\Vb^{\circ} \gets \text{orth}(\Vb^{\circ})$; $\Ab^{\circ} \gets \Ub^{\circ} \hat{\Db}\Vb^{\circ}$; $\Xb^{\circ} \gets \Ab^{\circ} + \hat{\Eb}$
	\State $[\overbar{\Ub}^{\circ},\sim,\overbar{\Vb}^{\circ}]\gets \text{SVD}(\Xb^{\circ})$. ``MATLAB" Notation for only storing the orthonormal matrices
	\ForAll{$j\in \{1,\dots,\hat{r}\}$}
		\State \Comment Smallest singular value equal to cosine of largest angle.
		\State $[\sim,\vec{\nu}_{object} ,\sim] \gets \text{SVD}\left(\Ub^{\circ T} \overbar{\Ub}^{\circ}_{1:j}\right)$; $[\sim,\vec{\nu}_{trait} ,\sim] \gets \text{SVD}\left(\Vb^{\circ T} \overbar{\Vb}^{\circ}_{1:j}\right)$
		\State $objectAngles[m,j] \gets\arccos(\min(\vec{\nu}_{object}))$; $traitAngles[m,j] \gets\arccos(\min(\vec{\nu}_{trait}))$
	\EndFor	
\EndFor
\State $objectAnglesSort\gets \text{sort}(objectAngles, col, asc);traitAnglesSort\gets \text{sort}(traitAngles, col, asc)$ 
\State $\check{r} \gets \min\left(\sum_{j=1}^{\hat{r}} \mathbb{I}_{\{objectAnglesSort[\alpha M,j] < \xi\theta_0\}}, \sum_{j=1}^{\hat{r}} \mathbb{I}_{\{traitAnglesSort[\alpha M,j] < \xi\theta_0\}} \right)$
\State $\hat{\psi} \gets objectAnglesSort[\alpha M,\check{r}]$; $\hat{\phi} \gets traitAnglesSort[\alpha M,\check{r}]$ 
\end{algorithmic}
\end{algorithm}

\subsubsection{Signal Extraction for Synthetic Data}\label{sec:syntheticsse}

The results of signal space extraction on the synthetic data example from Figure~\ref{3blocktruth} are shown in Figure~\ref{toyexamplestep1}. Each heatmap shows the estimated signal matrix $\hat{\Ab}$ for the respective data block. The denoising of each data block appears to have been successful when comparing the visual impression of these heatmaps to the original data matrices shown in the first column of Figure~\ref{3blocktruth}. The signal rank is correctly chosen as 3 for all three blocks. In this case, with a generally high signal-to-noise ratio, angle-based rank filtering didn't further reduce the rank beyond the value determined from eigenvalue-based rank selection. 

\begin{figure}[h]
        \begin{minipage}[c]{0.3\linewidth}
            \centering
            \includegraphics[width=\textwidth]{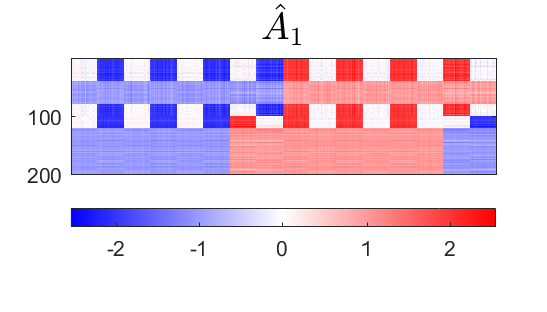}
        \end{minipage}
        \begin{minipage}[c]{0.3\linewidth}
            \centering
            \includegraphics[width=\textwidth]{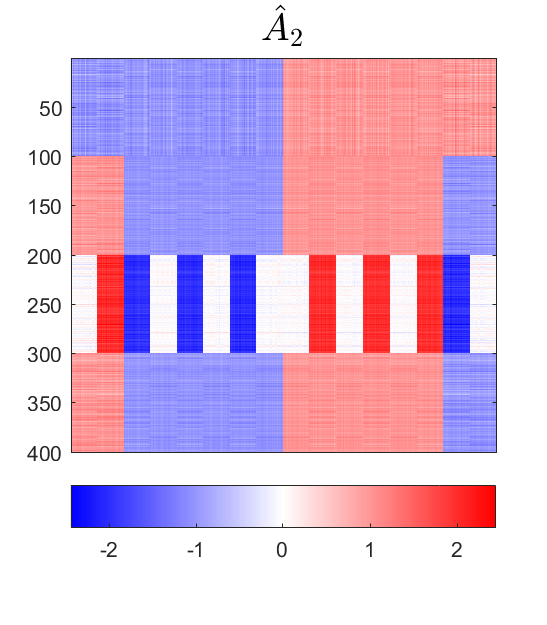}
        \end{minipage}
        \begin{minipage}[c]{0.3\linewidth}
            \centering
            \includegraphics[width=\textwidth]{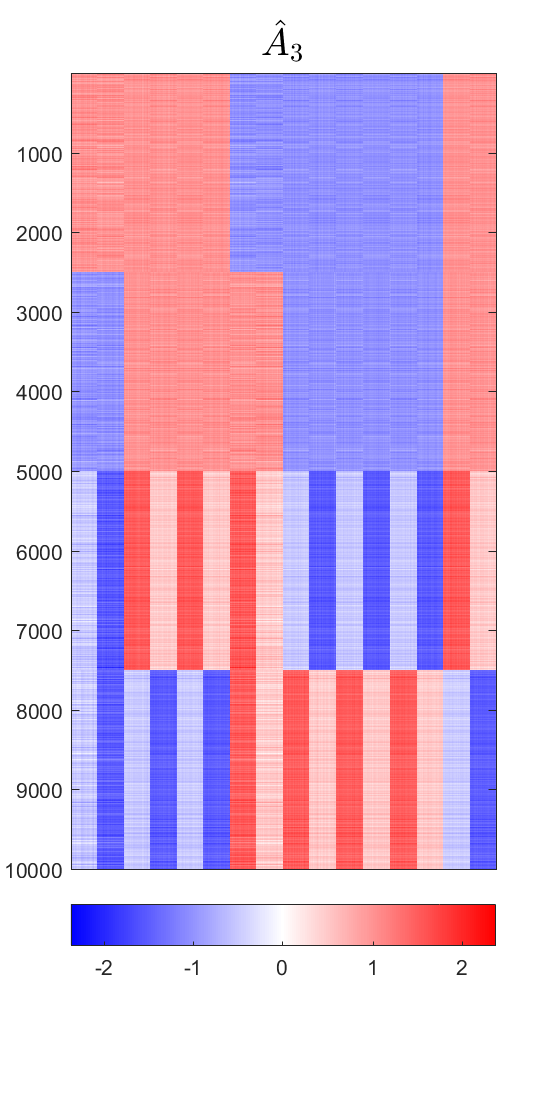}
        \end{minipage}
        \caption{Estimated signal matrices for each block in the synthetic example defined in Figure \ref{3blocktruth}. The heatmaps show good recovery of the original signal patterns. The trait and object spaces of each estimated matrix are rank 3.}\label{toyexamplestep1}
\end{figure}

Since we know the true signal object and trait subspaces for the synthetic data example, we can compare the estimated perturbation angle bounds to the actual angles between estimated and true signal subspaces to check the performance of the bounds. Tables~\ref{tbl:traittruthangles} and \ref{tbl:objecttruthangles} display the perturbation angle bounds and angles between the true and estimated signal subspaces in trait space and object space respectively. The calculated bounds exceed the true angles in all cases, so the true basis directions all lie within the cones of feasibility defined by the bounds. Since the bounds are calculated as uniform 95\% bounds, we'd expect to not cover the truth about 1 in 20 times, and the performance of the bounds in this case aligns with that expectation. 

Each synthetic data block has a similar signal-to-noise ratio, so the observed differences in perturbation angle bounds in the second column of each table are primarily explained by differences in matrix dimension. Recall that all four data blocks have 400 data objects, $X_1$ has 200 traits, $X_2$ has 400 traits, and $X_3$ has 10000 traits. The trait space perturbation angle bounds decrease as the number of traits in the data blocks increase since we have a more precise idea of where the true signal subspace is with more trait vectors in the same trait space $\bbR^{400}$. The object space perturbation angle bounds increase as the number of traits in the data blocks increase since we have a more precise idea of where the true signal subspace is in object space with 400 object vectors in $\bbR^{200}$ than we do with 400 object vectors in $\bbR^{10000}$.

\begin{table}[t]
\centering
\begin{tabular}{cccccc}\hline
\thead{Data Block} & \thead{Trait Space \\ Angle Bound} & \thead{Angle to \\ 1,2,3 Truth} & \thead{Angle to \\ 1,2 Truth} & \thead{Angle to \\1,3 Truth} & \thead{Angle to \\ 2,3 Truth} \\ \hline
1 & 11.7 & 9.2 & 8.5 & 6.1 & \\
2 & 8.6 & 6.9 & 5.6 & & 4.0 \\
3 & 2.8 & 2.5 & & 1.0 & 1.0 \\ \hline
\end{tabular}
\caption{Table of angles between estimated signal trait spaces and true signal trait spaces. All angles are within the calculated perturbation angle bounds.} \label{tbl:traittruthangles}
\end{table}

\begin{table}[t]
\centering
\begin{tabular}{cccccc}\hline
\thead{Data Block} & \thead{Object Space \\ Angle Bound} & \thead{Angle to \\ 1,2,3 Truth} & \thead{Angle to \\ 1,2 Truth} & \thead{Angle to \\1,3 Truth} & \thead{Angle to \\ 2,3 Truth} \\ \hline
1 & 8.6 & 4.5 & 4.9 & 4.7 &\\
2 & 8.6 & 5.8 & 6.6 & & 4.0\\
3 & 13.1 & 7.9 & & 4.6 & 4.7\\ \hline
\end{tabular}
\caption{Table of angles between estimated signal object spaces and true signal object spaces. All angles are within the calculated perturbation angle bounds.} \label{tbl:objecttruthangles}
\end{table}

\subsection{Joint Subspace Estimation}\label{sec:jse}

We formally introduce the optimization problem for locating shared structure in DIVAS. 
The conceptual constraints and objective function are shown in \eqref{DivasOptHeur} and the full numerical algorithm is deferred to Appendix~\ref{sec:optdetails} with the main subproblem being a convex optimization problem \eqref{DivasOptCCSlackLoad}. 

For any given collection of blocks $\ib$, the corresponding joint subspace should be near each of the included blocks in some sense. In DIVAS, proximity is evaluated in terms of angles between candidate directions and subspaces. In particular, during each phase of joint subspace estimation we minimize the angle between candidate directions $\vb\astar$ and the estimated trait space subspaces of included blocks subject to identifiability and feasibility constraints:
\begin{equation}\label{DivasOptHeur}
\begin{aligned} 
&\min_{\vb\astar}& & - \sum_{k\in \ib} \cos^2 \hat{\theta}_{Tk}  & & \\
&s.t.&  \hat{\theta}_{Tk} &= \angle(\vb\astar,\check{\Vb}_k) & \forall k & \\
&  & \hat{\theta}_{Ok} &= \angle(\Xb_k\vb\astar,\check{\Ub}_k) & \forall k & \\
&  &\hat{\theta}_{Tk} &\leq \hat{\phi}_k & \forall k &\in \ib \\
&  &\hat{\theta}_{Tk} &> \hat{\phi}_k & \forall k &\in \ib^c \\
&  &\hat{\theta}_{Ok} &\leq \hat{\psi}_k & \forall k &\in \ib \\
&  &\vb\astar &\perp \mathfrak{V}_{\jb} & \forall \jb &\supseteq \ib. \\
\end{aligned}
\end{equation}
In practice, this problem is solved via an iterative procedure called convex-concave procedure as described in \citep{il2016}. 
This algorithm is also called DC (Difference of two Convex functions) algorithm in the literature.
Full explanation can be found in Appendix~\ref{sec:optdetails}. In the rest of this section we discuss details of the optimization problem \eqref{DivasOptHeur} and its role in the estimation of the joint signal structure.

The objective function is expressed in terms of angle cosines in line 1 of \eqref{DivasOptHeur}. To ensure that a candidate direction lies in the true signal subspace of an included block $\Xb_k, k\in \ib$ with high significance, the trait space angle between a candidate direction and the subspace spanned by the columns of $\check{\Vb}_k$ should be at most the trait space angle perturbation bound $\hat{\phi}_k$. Additionally, the object space angle between $\Xb_k\vb\astar$ and the subspace spanned by the columns of $\check{\Ub}_k$ should be at most the object space angle perturbation bound $\hat{\psi}_k$. Finally, the angle between a candidate direction and an excluded block should be at least the trait space angle perturbation bound $\hat{\phi}_k$. These requirements are expressed as constraints for the optimization problem in lines 2-6 in \eqref{DivasOptHeur}, with subscripts $T$ and $O$ indicating angles in trait space and object space respectively. Crucially, our dimensionally flexible subspace-based angle perturbation approach to signal extraction allows object space information to be incorporated very naturally into the joint subspace estimation algorithm. This innovation enhances the significance and interpretability of loadings vectors found using DIVAS.


Following the proof of Theorem~\ref{th:construct}, we determine each block collection's potential joint structure in turn, starting with larger block collections and ending with singleton block collections. Within a joint structure search for a given block collection $\ib$, joint subspace basis directions are found one at a time via successive solves of \eqref{DivasOptHeur}. If no new feasible direction is found, the search among the current block collection ends and the search among the next block collection begins. Candidate directions in trait space for a particular block collection must also obey orthogonality constraints expressed in parts 1 and 2 of Condition~\ref{asump:decomposition}. These conditions are concisely expressed in the constraint in line 7 of \eqref{DivasOptHeur}. The Gothic script symbol $\mathfrak{V}_{\ib}$ is used to denote the current estimated trait space basis for the shared structure for block collection $\ib$, i.e., an estimate of the $\Vb_\ib$ from \eqref{eq:signalmodel}. Note that the search for joint structure between two block collections of the same size is embarassingly parallelizable, as the orthogonality constraint will only include joint structure found for strictly larger block collections. The order in which block collections of equal size are searched does not affect DIVAS output.

As the algorithm finds basis directions for block collection $\ib$, the rank of $\mathfrak{V}_{\ib}$ increases, and so the constraint in line 7 tightens as more basis directions are located. As directions are located the angle constraints also change. We shrink $\check{\Vb}_k$, the orthonormal bases for estimated trait space signal, to only include directions in the null space of $[\mathfrak{V}_{\jb}]_{\jb\supseteq\ib}$. This basis shrinking improves computation time and assists in the choice of basis directions satisfying our assumptions.

\subsection{Signal Reconstruction}\label{sec:recon}

 
Once we have located all possible joint structure, the remaining task is to reconstruct the signal matrix components for each data block. Recall that in Section~\ref{sec:jse} we denote the estimated orthonormal basis for the joint structure among blocks in collection $\ib$ as $\mathfrak{V}_{\ib}$. For a given data block $k$, we first horizontally concatenate all joint structure basis matrices found involving block $k$ into one matrix $\left[\mathfrak{V}_{\ib}\right]_{\ib|k\in \ib}$. Then we form a linear regression problem to find the corresponding loadings vectors for block $k$ associated with the common scores vectors for block collection $\ib$ in a similar fashion as the proof of Theorem~\ref{th:construct}. 
In particular  $[\mathfrak{L}_{\ib,k}]_{\ib|k\in\ib}$ is chosen as the least-squares solution of the regression problem $\min_{\mathfrak{L}}\|\Xb_k - \mathfrak{L}\cdot [\mathfrak{V}_{\ib}]^{\top}_{\ib|k\in \ib}\|_2^2$. This solution is unique when $[\mathfrak{V}_{\ib}]_{\ib|k\in \ib}$  is full-rank. The columns of $[\mathfrak{L}_{\ib,k}]_{\ib|k\in\ib}$ can then be partitioned into loadings matrices $\mathfrak{L}_{\ib,k}$, each of which is associated with joint structure for one block collection $\ib$ with $k\in\ib$, i.e., $\mathfrak{L}_{\ib,k}$ is an estimate of $\Lb_{\ib,k}$ from \eqref{eq:signalmodel}.  The estimated partially shared joint structure between between data blocks in $\ib$ is then simply $\hat{\Ab}_{\ib,k}=\mathfrak{L}_{\ib,k} \mathfrak{V}_{\ib}^{\top}$. 

As a brief remark, DIVAS may in certain cases select shared structure such that the rank of $\left[\mathfrak{V}_{\ib}\right]_{\ib|k\in \ib}$ is larger than $\check{r}_k$. Since the subspaces spanned by $\mathfrak{V}_{\ib}$ and $\mathfrak{V}_{\jb}$ need not be orthogonal unless $\ib \subseteq \jb$ or $\jb \subseteq \ib$, more than $\check{r}_k$ total basis directions may be selected within the cone of feasibility for block $k$. 

Additional insight comes from further decomposition of $\hat{\Ab}_{\ib,k}$ into a sum of rank 1 modes of variation. Each mode comes from an outer product of corresponding columns of $\mathfrak{L}_{\ib,k}$ and $\mathfrak{V}_{\ib}$. Since $\mathfrak{L}_{\ib,k}$ and $\mathfrak{V}_{\ib}$ are only determined up to basis rotation, we first select a rotation matrix $\Qb_{\ib}$, and then examine modes of variation formed with the matrices $\mathfrak{L}_{\ib,k}\Qb_{\ib}$ and $\mathfrak{V}_{\ib}\Qb_{\ib}$. In particular, we take an SVD of the projection of the stacked data matrix $[\Xb_k^{\top}]^{\top}_{k\in\ib}$ onto the subspace spanned by $\mathfrak{V}_{\ib}$ in trait space, and choose $\Qb_{\ib}$ as the matrix of right singular vectors from that calculation. This re-rotation can be thought of as sorting the modes of variation within the shared subspace in order of importance.

Since DIVAS is based on angles, additional diagnostic insight into the modes of variation is derived from angles between the loadings (columns of $\mathfrak{L}_{\ib,k}\Qb_{\ib})$ and scores (columns of $\mathfrak{V}_{\ib}\Qb_{\ib}$) and the object and trait spaces spanned by the estimated low rank matrix $\hat\Ab_k$, respectively. In particular, for each estimated score vector and loadings vector, the angle to each estimated signal matrix $\hat{\theta}_k$ and the upper bound on the angle to the true signal $\hat{\theta}_k+\theta_2\astar$ for each direction in both trait space and object space are computed. See Section~\ref{sec:perturb} for a definition of $\theta_2\astar$. To calculate the upper bound for one of the vectors, we choose the 95th percentile of an empirical distribution of $\theta_2\astar$ generated using \eqref{eq:thetatwostar} and the cached matrices from the rotational bootstrap (See Section~\ref{sec:rotboot}). Scores for all blocks are in a shared trait space, and therefore in trait space we calculate these angles not only for included ($k\in\ib$) but also for excluded ($k\notin\ib$) blocks for each block collection $\ib$. The angle to the included block is expected to be small and the angle to the excluded block is expected to be large, though not necessarily 90 degrees. If some score vector has an upper bound below the random direction bound $\theta_0$ for an excluded block, then the corresponding mode of variation is correlated with that excluded block even though it is not joint with that block. The object spaces are block specific, and therefore for loadings we calculate angles to the included ($k\in\ib$) data blocks only.  These diagnostic angles form the crux of the overall diagnostic displays that we describe next.

\subsection{DIVAS Diagnostic Graphics}\label{sec:graphics}

We compile all angle-based diagnostics for DIVAS into comprehensive displays. These displays can be seen in Figures~\ref{fig:rbbToyExample} and \ref{fig:rbbToyExampleLoad} for the synthetic data set shown in Figure~\ref{3blocktruth}, in Figure~\ref{fig:tcgascoresdiag} for breast cancer omics data, and in Figures~\ref{fig:mortscoresdiag} and \ref{fig:mortloadingsdiag} for 20th century mortality data. We explain the interpretation of these displays in this section using the synthetic data example.

Figure~\ref{fig:rbbToyExample} shows the diagnostic angles for the joint scores vectors found for the synthetic data example from Figure~\ref{3blocktruth}, and Figure~\ref{fig:rbbToyExampleLoad} shows those same diagnostic angles for the joint loadings vectors. Each row of boxes corresponds to a data block and the various block collections appear in the columns. Boxes for included blocks in a given column are colored-in while boxes for excluded blocks are white. The number in each colored box specifies the rank of the estimated joint subspace between the blocks included in that column. Block collections where no partially shared joint structure was found are labeled with a 0 and grayed out. The last column labeled \textit{Ranks} contains the dimensions of key subspaces for each data block $k=1,\dots , K$. The \textit{final} rank is the dimension of the subspace spanned by all structure involving that data block, i.e. the rank of $\left[\mathfrak{V}_{\ib}\right]_{\ib|k\in \ib}$. The \textit{filtered} rank is the dimension of the estimated signal subspace in both object and trait space for that data block, i.e. $\check{r}_k$. The \textit{maximum} rank is the largest possible dimension spanned by structure involving that data block, i.e. $d_k\wedge n$. These three ranks will usually appear in ascending order, but as discussed in Section~\ref{sec:recon}, the final rank is sometimes larger than the filtered rank.

To explain the interpretation of the comprehensive information in DIVAS diagnostic displays, we first focus on the top-left corner of Figure~\ref{fig:rbbToyExample}. Each box is a scatter plot, with the horizontal axis indicating basis direction index and the vertical axis indicating angle from $0^{\circ}$ at the bottom to $90^{\circ}$ at the top. Within a box of this figure, each candidate direction found for that column's joint structure is represented by two points: $\times$ and $\bullet$. The $\times$ represents the angle $\hat{\theta}_k$ between the candidate direction and the corresponding estimated signal matrix for data block $k$, and the $\bullet$ represents the upper bound $\hat{\theta}_k + \hat{\theta}_2\astar$ on $\theta_k$, the angle between the direction and the true subspace. The dashed line represents the perturbation angle bound $\hat{\phi}_k$ (for trait space) or $\hat{\psi}_k$ (for object space) and the dot-dash line represents the random direction angle bound $\theta_{0,k}$. The numerical values of those angle bounds are given to the right of each group of columns. As per the inferential framework laid out in Section~\ref{sec:perturb}, a $\times$ below the dashed line indicates strong evidence that the direction can't be ruled out as joint structure for that data block, and a $\bullet$ above the dot-dash line indicates strong statistical evidence that the direction can't be distinguished from an arbitrarily chosen direction with respect to that data block. Due to the rank filtering procedure that takes place during the rotational bootstrap, no direction has both a $\times$ below the dashed line and a $\bullet$ above the dot-dash line.

Based on the placements of $\times$ and $\bullet$ in each colored box in Figures~\ref{fig:rbbToyExample} and \ref{fig:rbbToyExampleLoad}, we have strong evidence that each piece of estimated joint structure located by DIVAS is statistically significant in both trait space and object space, respectively. Specifically, all $\times$ are below the perturbation angle bound dashed line within their respective boxes. We also gain additional insight about the angular relationships between the two-way joint subspaces via the angles to the excluded blocks in Figure~\ref{fig:rbbToyExample}. Each $\bullet$ lies below the random direction angle bound dot-dash line in columns 3-5 of the display, indicating strong evidence that the chosen joint subspaces are distinguishable from arbitrary directions with respect to the excluded block in each column. In fact, the true joint subspaces were constructed to have pairwise principal angles of $60^{\circ}$ between them, so this statistical rejection of arbitrariness is not surprising.

\begin{figure}[h]
\includegraphics[width=\textwidth]{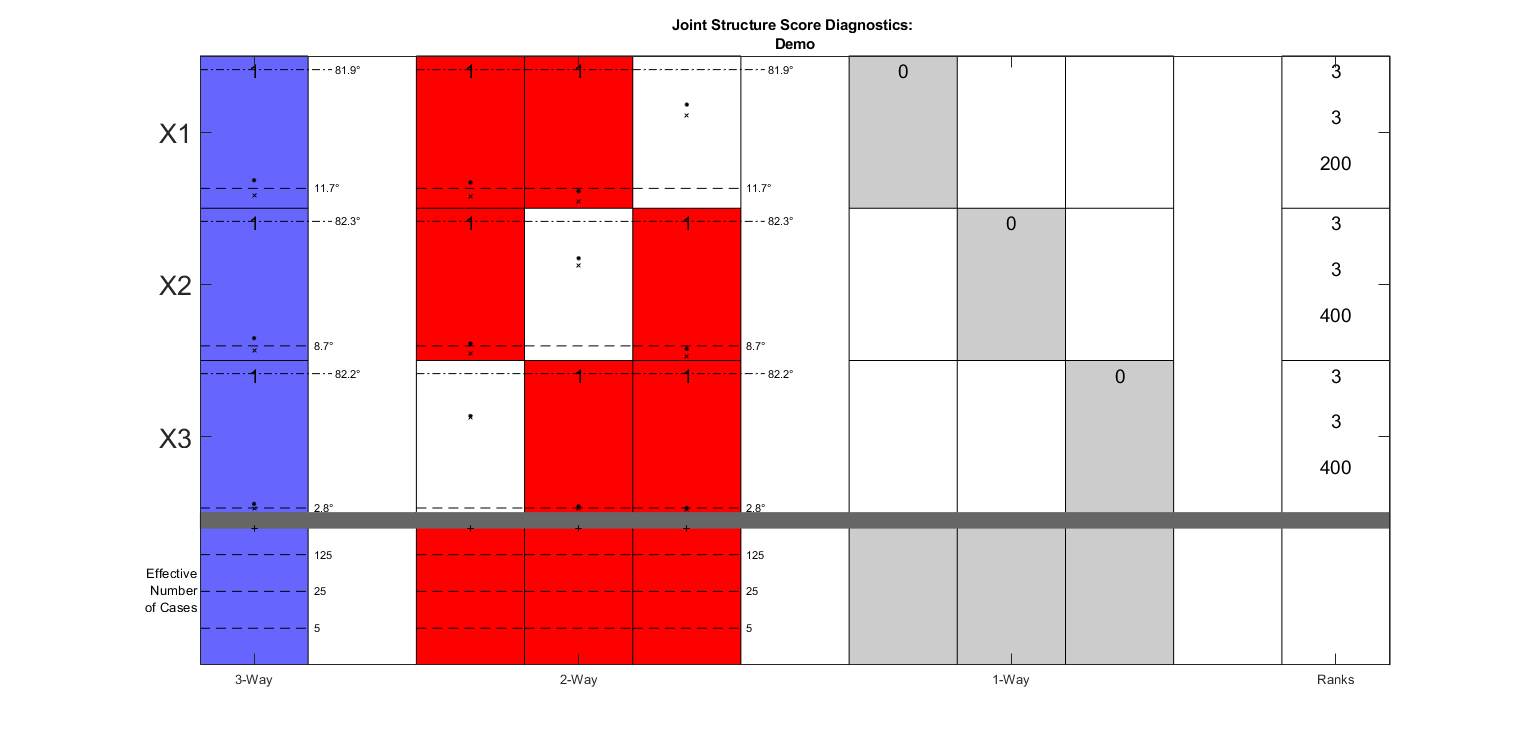}
\caption{Summary of joint structure diagnostics for the trait spaces of the synthetic data example. All joint structure located is statistically significant, and angles to excluded blocks confirm underlying angular relationships between two-way shared subspaces. Effective Numbers of Cases (ENC) values in last row align well with the true score vectors.}\label{fig:rbbToyExample}
\end{figure}

\begin{figure}[h]
\includegraphics[width=\textwidth]{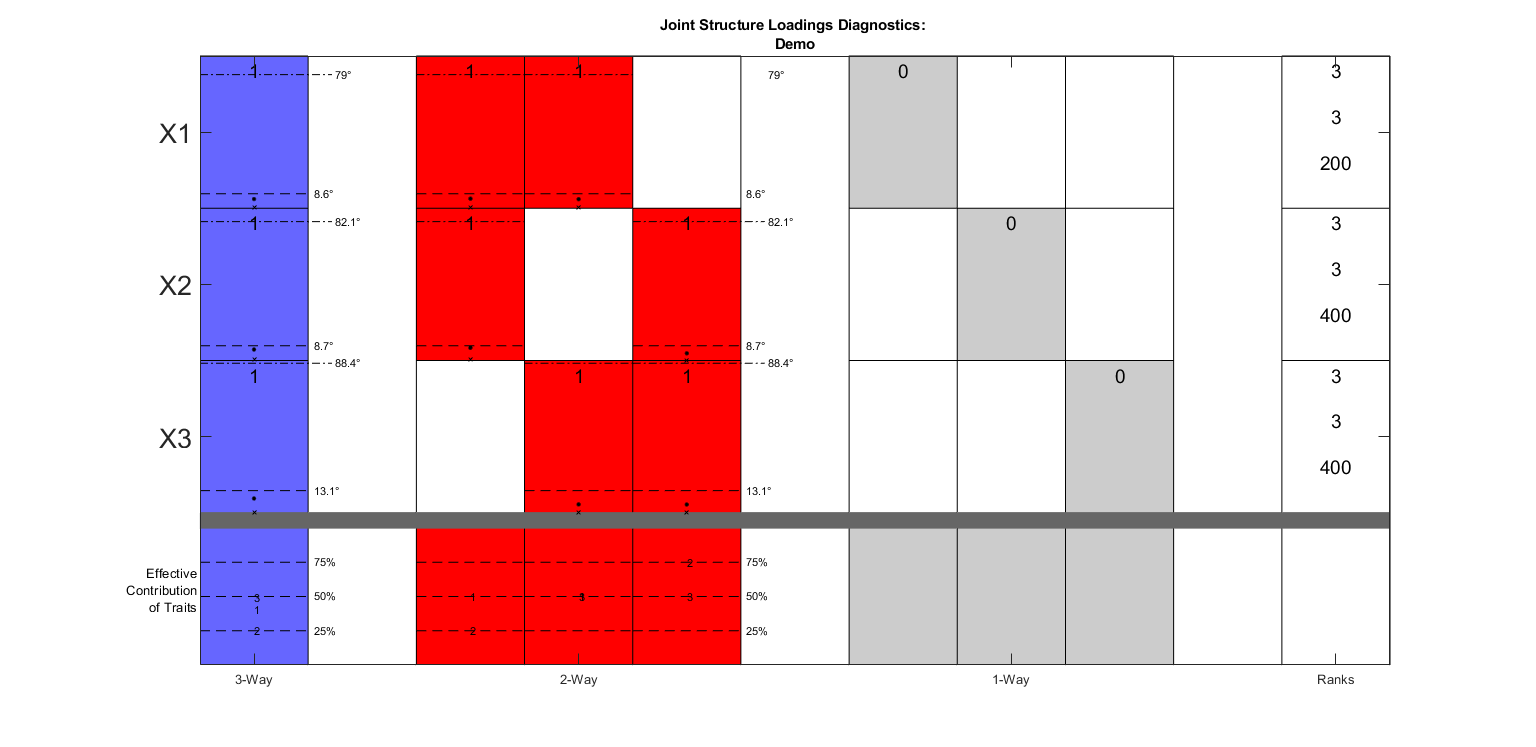}
\caption{Summary of joint structure diagnostics for the object spaces of the synthetic example. All joint structure located is statistically significant. Effective Contributions of Traits (ECT) in last row align with expectations per the proportion of colored rows in each heatmap of Figure~\ref{3blocktruth}.}\label{fig:rbbToyExampleLoad}
\end{figure}

There are some situations where DIVAS basis directions appear statistically significant from an angular perspective but depend on a very small number of observations or traits. Useful insight comes from summarizing the contributions of each observation and/or trait to the shared structure. In the case of observations, we quantify their involvement with a summary statistic called the \textit{Effective Number of Cases} (ENC) based on ideas in importance sampling from \citep{kish}. Let $v_j$ for $j\in\{1,\dots, n\}$ be the entries of a chosen direction $\vb\astar$. Note that the entries are scaled so $\vb\astar$ has norm 1 (i.e. $\sum_{j=1}^n v_j^2 = 1$). In this case, the ENC is:
\begin{equation*}
ENC = \frac{1}{\sum_{j=1}^n v_j^4}
\end{equation*}
If one entry $v_j$ is $\pm 1$ while the rest are 0, meaning a single observation determines the direction, then the ENC evaluates to 1. If all entries $v_j$ have the same magnitude $\pm\frac{1}{\sqrt{n}}$, meaning all observations have equal influence on the direction, then the ENC evaluates to $n$. Any chosen direction will fall somewhere between those two extremes.

The last row of Figure~\ref{fig:rbbToyExample} shows the ENC for each joint scores direction found for the synthetic data example. Each box is again a scatter plot, with the horizontal axis indicating basis direction index and the vertical axis indicating the ENC from 1 to $n$ on a logarithmic scale. Each ENC value is shown with a $+$. All the values for the synthetic example are very close to $n=400$, indicating near equal contribution from each observation in all the scores vectors. This aligns with expectations since the entries of the true shared scores directions all have equal magnitude.

In the case of summarizing individual trait contributions, we use an analogous metric that takes into account the differing magnitudes of loadings vectors within data blocks and the different dimensions of loadings vectors between data blocks. To differentiate the two metrics we call this one \textit{Effective Contribution of Traits} (ECT). ECT performs the same operation as ENC, except it uses the entries $l_m$ for $m\in\{1,\dots, d_k\}$  of candidate loadings directions $\lb\astar_k$. Furthermore, it scales the result by both the magnitude $||\lb_k\astar|| = \sum_{m=1}^{d_k} l_m^2$ of the candidate direction and by the number of traits $d_k$ to allow for comparisons between data blocks:
\begin{equation*}
ECT = \frac 1{d_k} \frac{\left(\sum_{m=1}^{d_k} l_m^2\right)^2}{\sum_{m=1}^{d_k} l_m^4}.
\end{equation*}

The last row of Figure~\ref{fig:rbbToyExampleLoad} shows the ECT for each loadings direction found for the synthetic data example. Within each box, the horizontal axis indicates basis direction index and the vertical axis indicates ECT percentage ranging from $0\%$ to $100\%$. Each block has its own loadings direction for each basis direction, so each block's ECT is shown with a number corresponding with that block's index in the analysis. In all cases, the contribution percentages align quite well with the percentage of traits involved in each piece of true joint structure as per Figure~\ref{3blocktruth}. For example, half of the traits in X3 have the characteristic pinstripe pattern of the fully joint structure in Figure~\ref{3blocktruth}, and the ``3" in the bottom-left box of Figure~\ref{fig:rbbToyExampleLoad} sits right around $50\%$.

\section{Case Studies}

\subsection{Case Study 1: Cancer Genomics}\label{sec:genomics}

One of the examples that motivates the development of DIVAS is a four-block data set containing different views of omics data from $n=616$ breast cancer patients from The Cancer Genome Atlas (TCGA) \citep{tcga}. We have a gene expression (GE) data block containing 16615 gene traits, a gene copy number (CN) data block with 24174 traits, a protein expression (RPPA) data block containing 187 protein traits, and a 0-1 mutation detection (Mut) block containing traits for 128 genes. Each patient is labeled as one of four breast cancer subtypes: Luminal A, Luminal B, Basal, or Her2-enriched.

We wish to obtain the entire hierarchy of joint structure among the four data blocks. Of particular interest are the four-way joint structure, three-way partially-shared joint structures, and partially-shared structure involving proteins or mutations. The biological understanding of the protein production pathway suggests that after accounting for the gene expression and gene copy number there should be no additional variation shared between mutations and proteins. Once all joint structure is cataloged, further conclusions can be drawn from loadings of the joint modes of variation. For example, the loadings from the gene expression data block would reveal which genes are involved in a certain cancer subtype if one of the joint modes of variation discriminates that subtype from the others. Note that DIVAS is an unsupervised method that does not make use of the class labels. Development of a supervised version of DIVAS remains an interesting open problem.  

Figure~\ref{fig:tcgascoresdiag} shows the DIVAS decomposition and angle diagnostics for the joint scores vectors for this data set. Detailed descriptions of the information plotted in the figure can be found in Section~\ref{sec:graphics}. 
DIVAS finds a single shared component between all four data blocks, a six-dimensional subspace shared between all data blocks besides mutation, and lots of shared structure between pairs of data blocks not involving mutation. This result aligns well with biological expectations, particularly the large amount of structure shared uniquely between gene expression and copy number. 


\begin{figure}[h]
\includegraphics[width=\textwidth]{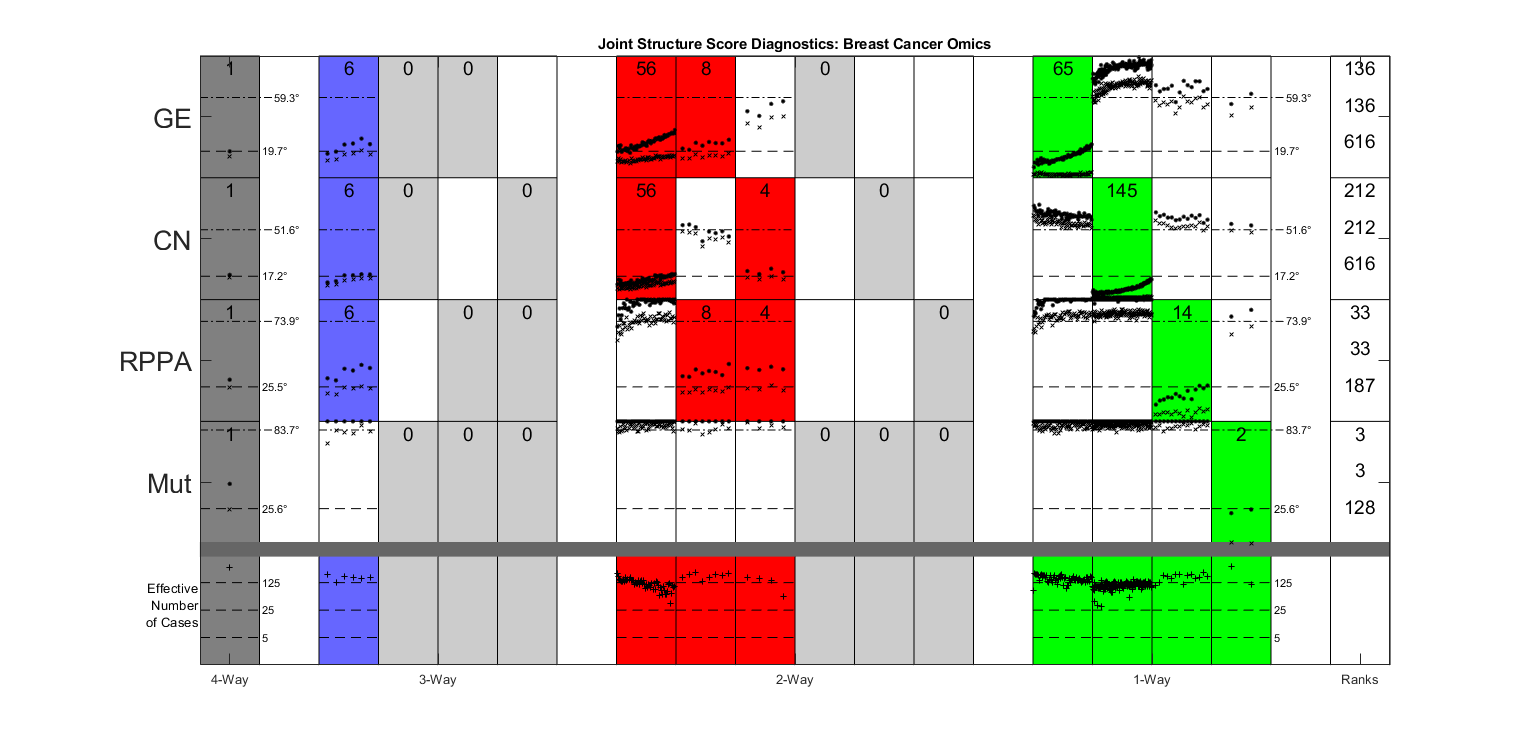}
\caption{Joint structure breakdown and diagnostics for the breast cancer omics score vectors.}\label{fig:tcgascoresdiag}
\end{figure}


Figure~\ref{fig:tcga4way3wayscores} displays a scores scatter plot matrix of directions from the four-way joint and three-way joint components of the data. In these plots, each point corresponds to a single data object. Each cancer subtype is shown with a different color and point symbol: basal with red triangles, luminal A with blue asterisks, luminal B with cyan x'es, and Her2-enriched with magenta pluses. On-diagonal plots show the coefficients of projection of data objects onto the score vector indicated. The vertical axis provides a jitter for ease of visual interpretation. Solid curves in on-diagonal plots are kernel density estimates, with the black curve including all data objects and the colored curves corresponding with each respective subtype. Off-diagonal plots show scatter plots of coefficients of projection of the data objects onto the score vectors in that plot's respective axes labels. More information about these plots may be found in Chapter~1 of \citet{marron2021object}. We chose to include the first two directions in the basis for the three-way joint subspace along with the four-way joint direction for ease of visual interpretation. The four-way joint component separates basal cases from other cases. This is typically the first component found in any analysis of the modes of variation in breast cancer patients, as basal cell cancers have a very different gene expression profile than the other subtypes. The two-dimensional plot of the scores along the first two directions in the three-way joint subspace separates Her2 cases from Basal cases primarily, and from Luminal A cases secondarily. This indicates the potential for identifying useful genes, proteins, and copy number regions that drive the variation in this three-way joint subspace that distinguish Her2 cases from other breast cancer subtypes.

\begin{figure}[h]
\includegraphics[width=\textwidth]{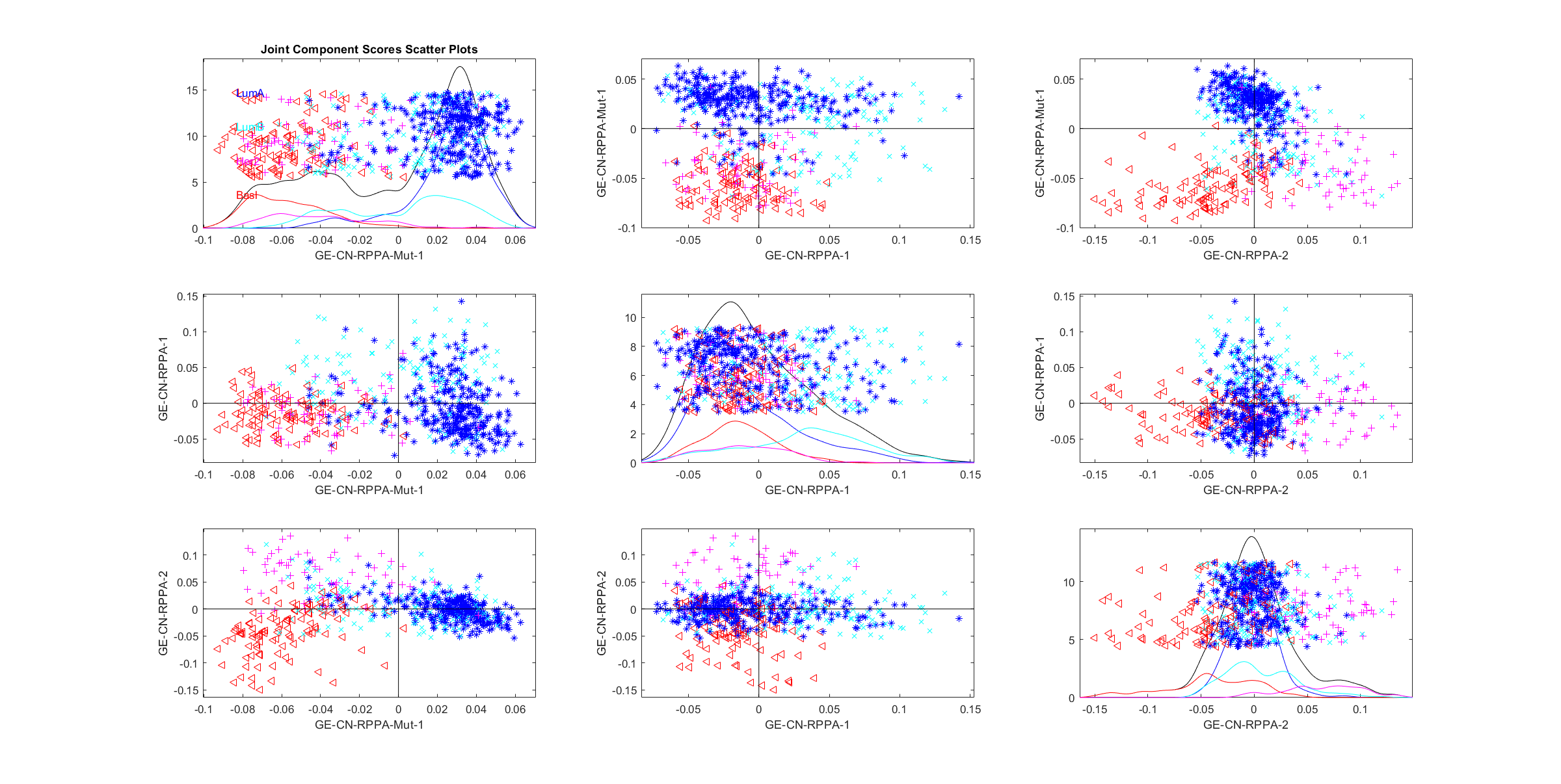}
\caption{Scores scatterplot matrix of four-way joint direction and first two three-way joint directions. The four-way joint subspace (top left) distinguishes basal (red triangles) from other subtypes and the three-way joint subspace (bottom right) distinguishes Her2 (magenta plusses) from other subtypes.}\label{fig:tcga4way3wayscores}
\end{figure}

\subsection{Case Study 2: Twentieth Century Mortality}\label{sec:data}








\citet{ooda} consider a data matrix containing mortality rates (proportion of the population of a given age that died in a year) of Spanish males from 1908 to 2002. We expand on this initial analysis by incorporating three additional data blocks: one for Spanish women and two more for Swiss men and women, and by increasing the end of the time frame to 2018. We are interested in how mortality rates changed over the course of this time period as a function of age. Hence, we will treat each year as a data object and the mortality rates of each age as a trait. We consider ages 12 to 90 to avoid zero counts for particularly high and low ages. Data was downloaded on April 8, 2021 from the Human Mortality Database  \citep{mortalitydatabase}.





To appropriately handle the multiple orders of magnitude present in mortality proportions, we transform each entry of the data blocks with a \textit{logit} function $f(x) = \log\left(\frac{x}{1-x}\right)$. After the logit transformation, each data block was \textit{double-centered}: the mean vectors in both object space and trait space were removed from each data block. As per the discussion in \citet{centering}, this type of centering is effective when all the data blocks share a common mode of variation in the trait mean direction. The object and trait means for each data block are displayed as curves in Figure~\ref{fig:mortswissmeans}. Each curve is colored according to year using a rainbow color scheme starting at magenta and blue, through orange and red. In the object mean panels (top row), we see the overall mortality profile across ages for each country and gender. Males exhibit a slightly higher increase in mortality upon entering adulthood than females in both countries due to increased risk-taking behaviors at that age \citep{mortsex2000, mortsex2009}. We also observe systematic anomalies in the mortality rates for older Spanish individuals that are not present in the Swiss data. As discussed in \citet{ooda}, these anomalies are manifestations of an age-rounding effect, and reflect major early differences in demographic record keeping practices between the two countries. The distinct rainbow sequence in each trait mean panel (bottom row) shows steady overall decreases in average mortality rate over time. The worst year of the 20th century flu pandemic, 1918, appears prominently at the top of each trait mean panel in violet. Spanish data block panels (especially males) have out-of-sequence light blue lines in their plots due to a civil war in the late 1930s.

\begin{figure}[h]
\includegraphics[width=\textwidth]{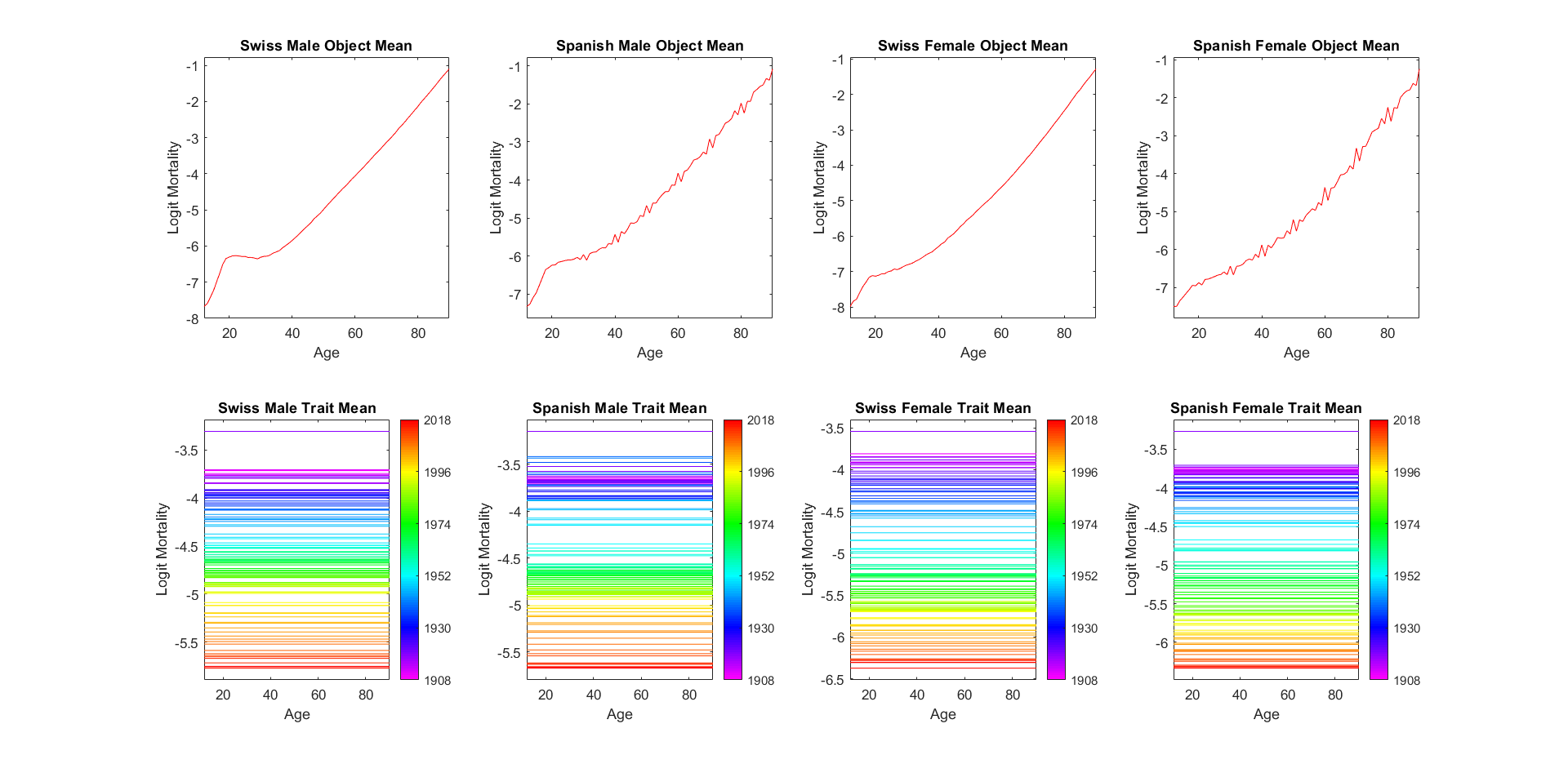}
\caption{Object means (top row) and trait means (bottom row) for each data block. Both male data blocks have a more dramatic increase in mortality for young adults than the female data blocks. Both Spanish data blocks display effects of record-keeping round-offs absent from Swiss data blocks. All trait means capture the overall improvement in mortality rate over time across the population.}\label{fig:mortswissmeans}
\end{figure}






Figure~\ref{fig:mortscoresdiag} shows the scores diagnostic graphic for the DIVAS decomposition of the mortality data, and Figure~\ref{fig:mortloadingsdiag} shows the angle diagnostics for the corresponding loadings vectors (see Section~\ref{sec:recon}). Since all four data blocks have identical trait and object dimensions and relatively similar variation, all the perturbation angles are also similar to each other. DIVAS finds a two-dimensional four-way shared component, a one-dimensional three-way component shared between each block besides Swiss females, and a six-dimensional shared component between Spanish males and females. Intuitive reasons for these findings are explained below via discussion of the modes of variation.


\begin{figure}[h]
\includegraphics[width=\textwidth]{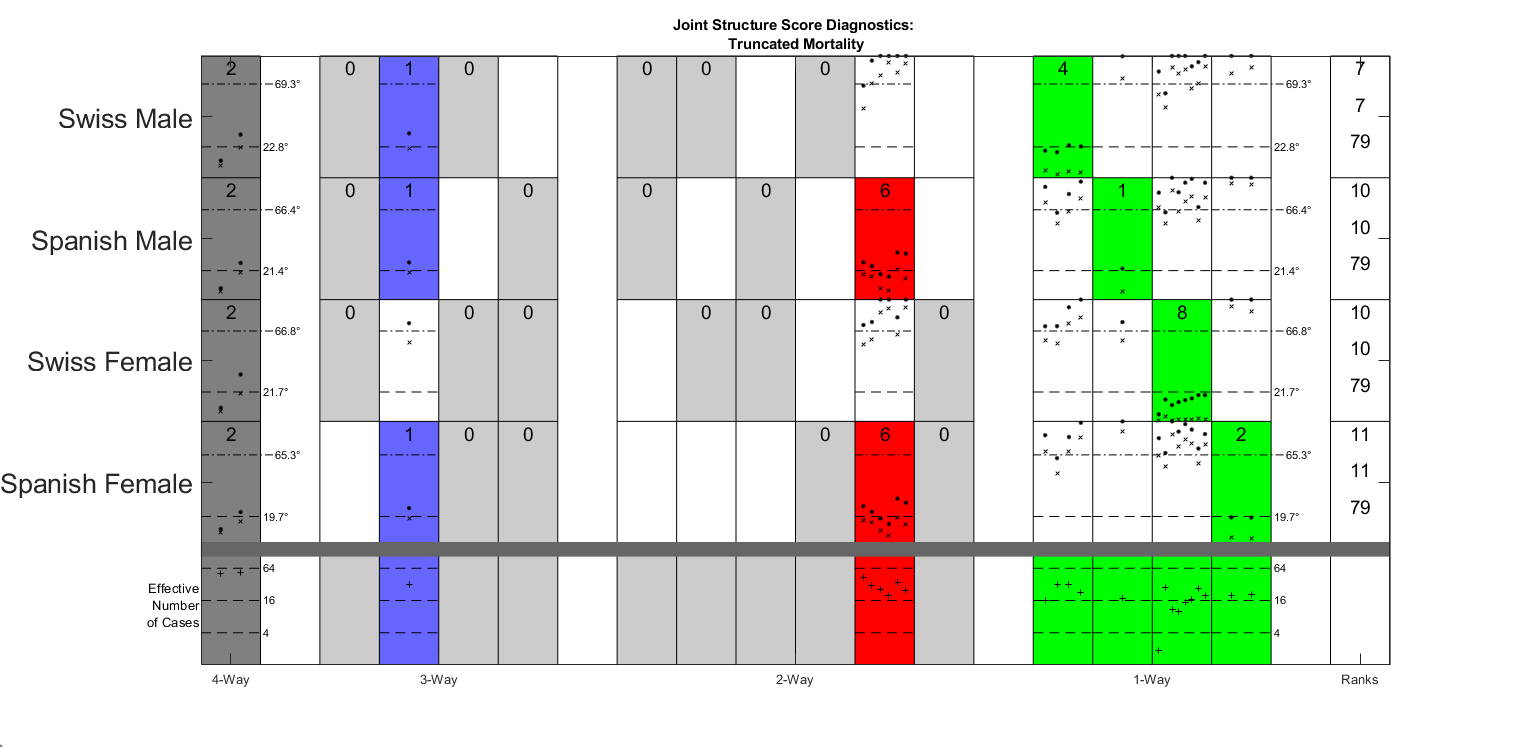}
\caption{Joint structure breakdown and diagnostics for the mortality data score vectors. Perhaps surprising is some amount of three-way partially shared joint structure. Spanish men and women have complex two-way partially shared joint structure due to age rounding.}\label{fig:mortscoresdiag}
\end{figure}

\begin{figure}[h]
\includegraphics[width=\textwidth]{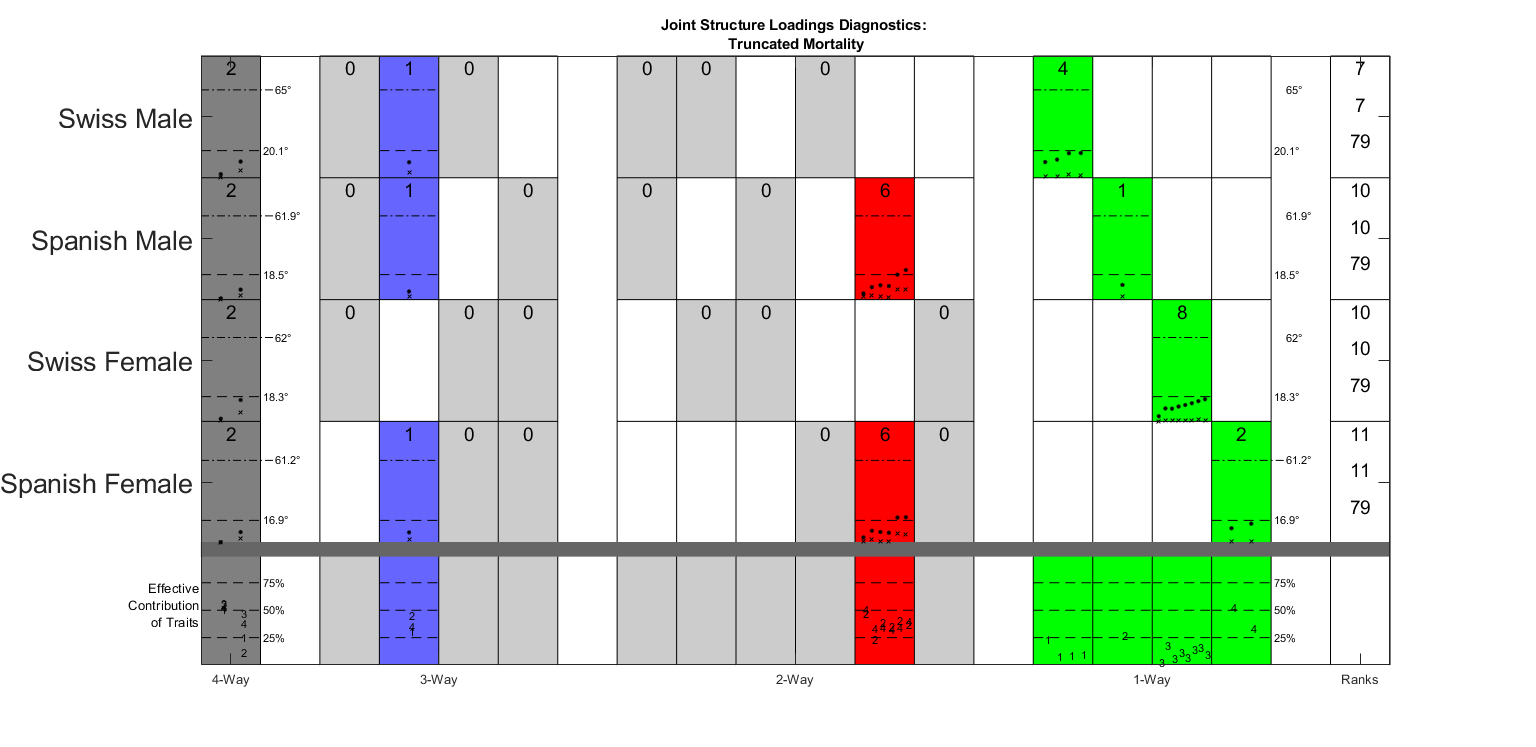}
\caption{Joint structure breakdown and diagnostics for the mortality data loadings vectors.}\label{fig:mortloadingsdiag}
\end{figure}





We further investigate the joint structure by visualizing the joint modes of variation about the mean in curve plots throughout Figures~\ref{fig:mortfourwayload}-\ref{fig:mortspanishtwowayload}. Figure~\ref{fig:mortfourwayload} shows such a visualization for the four-way joint structure. In this and subsequent mode of variation figures, each row of panels corresponds to a different basis direction and each column of panels corresponds to a data block. The final plot in each row shows the entries of the common score vector corresponding to that mode of variation. Each row of panels in this figure contains one trend in mortality that was found to be common across both countries and genders. The first mode is a contrast between older and younger individuals that manifests as a change in slope over time. In particular, while mortality rates decreased for all ages over the 20th century as per the trend seen in the trait mean, this mode of variation shows that decrease was more pronounced for younger individuals. The second mode is primarily a contrast between younger adults and middle-age adults that takes place between the 1970s and 1990s, with somewhat different age groupings across blocks. This contrast is the well-documented automotive safety effect described in \citet{ooda}. The wide proliferation of cars in the mid-20th century without modern safety guidelines in this time frame led to a notable increase in automobile fatalities concentrated in younger individuals. As automotive safety improved across Europe in the 1980s and 1990s, this source of excess mortality dissipated.


\begin{figure}[h]
\includegraphics[width=\textwidth]{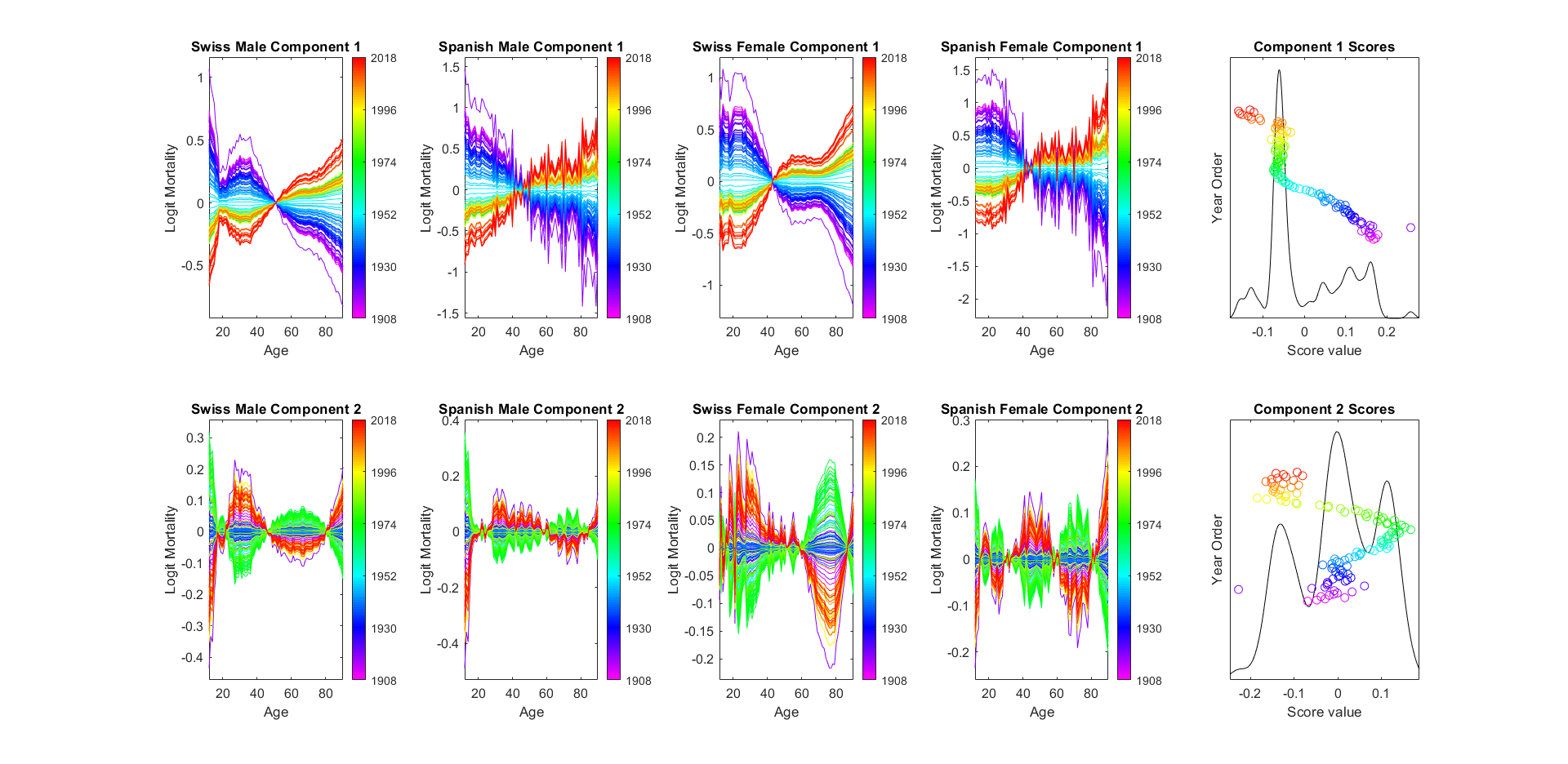}
\caption{Mode of variation curve plots of the two four-way joint components. Component 1 shows stronger improvement in mortality for younger people, and component 2 shows the rise and fall of automobile fatalities.} \label{fig:mortfourwayload}
\end{figure}

Figure \ref{fig:mortthreewayload} shows the single mode of variation found as three-way shared structure between Swiss men, Spanish men, and Spanish women. Before analyzing this data, we primarily expected to see four-way shared structure and pairwise shared structure across blocks with gender or country in common; this three-way shared structure deviates from that expectation. This mode of variation indicates a contrast between the mortality rates of young adults and the rest of the population during the late 1980s and early 1990s. Considering that the automotive safety effect already appeared in the fully joint component, we suspect this mode of variation is capturing a different phenomenon. Our hypothesis given the time frame and groups affected is that this component is capturing increased mortality from HIV/AIDS in the late 20th century. The contrast focusing on young males in both countries is the main driver of this hypothesis; the corresponding effect in Spanish women seems to be concentrated in older individuals so some additional effect might be entering the mode of variation in that block. This motivates further mortality research.

\begin{figure}[h]
\includegraphics[width=\textwidth]{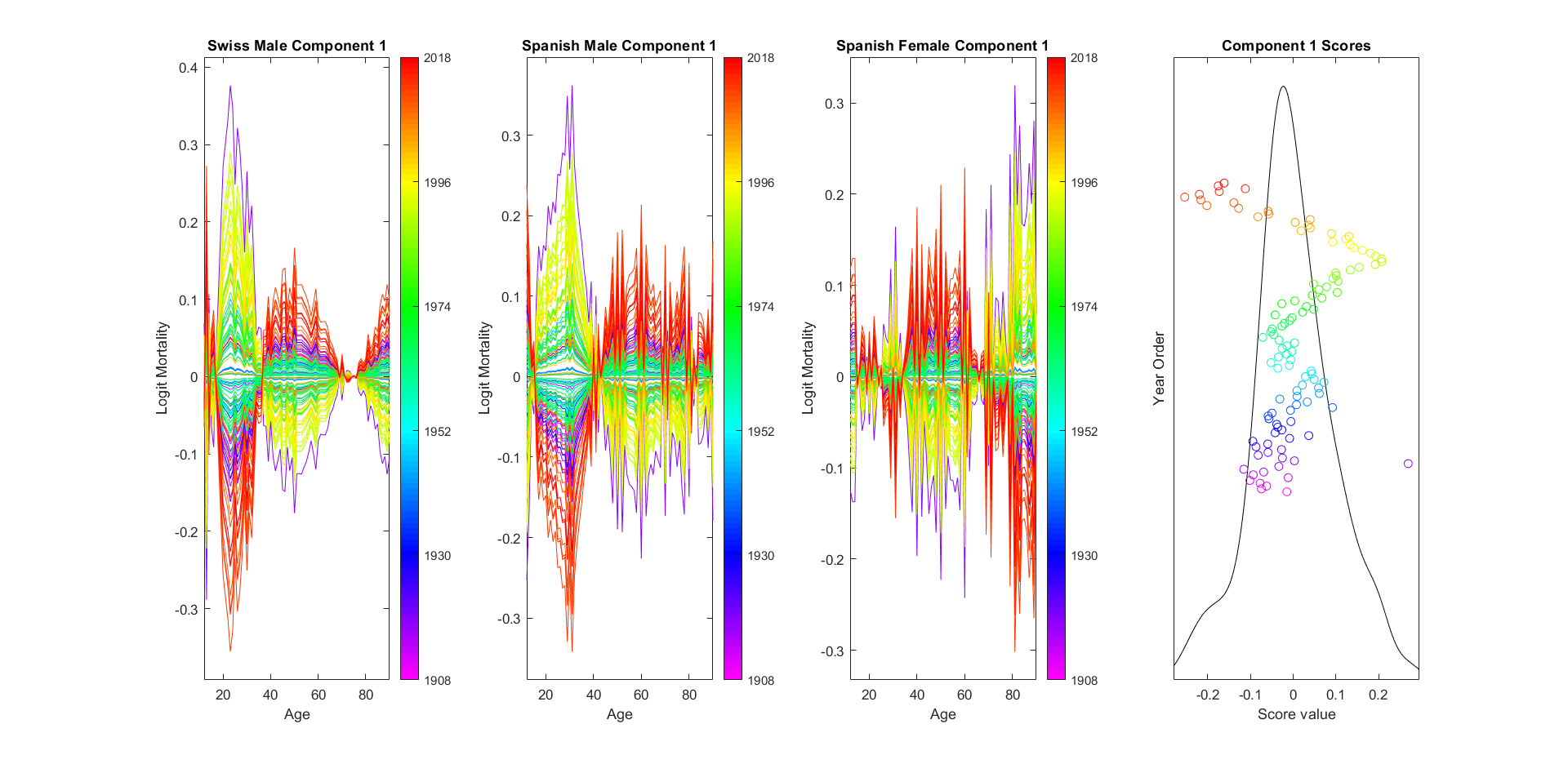}
\caption{Mode of variation curve plots of three-way joint component between Swiss men, Spanish men, and Spanish women. Mode of variation contrasts mortality in young adults with others around 1990. Potentially related to the emergence of HIV/AIDS.}\label{fig:mortthreewayload}
\end{figure}

Figure~\ref{fig:mortspanishtwowayload} shows the six modes of variation found as two-way joint structure shared between the two Spanish blocks. Component 1 captures excess mortality of young people, and especially young men, during the Spanish Civil War. Component 2 is a contrast within young adults that we do not fully understand. The remaining four modes of variation seem to be harmonic components generated by the age rounding effect discussed in \citet{ooda}.

\begin{figure}[h]
\includegraphics[width=\textwidth]{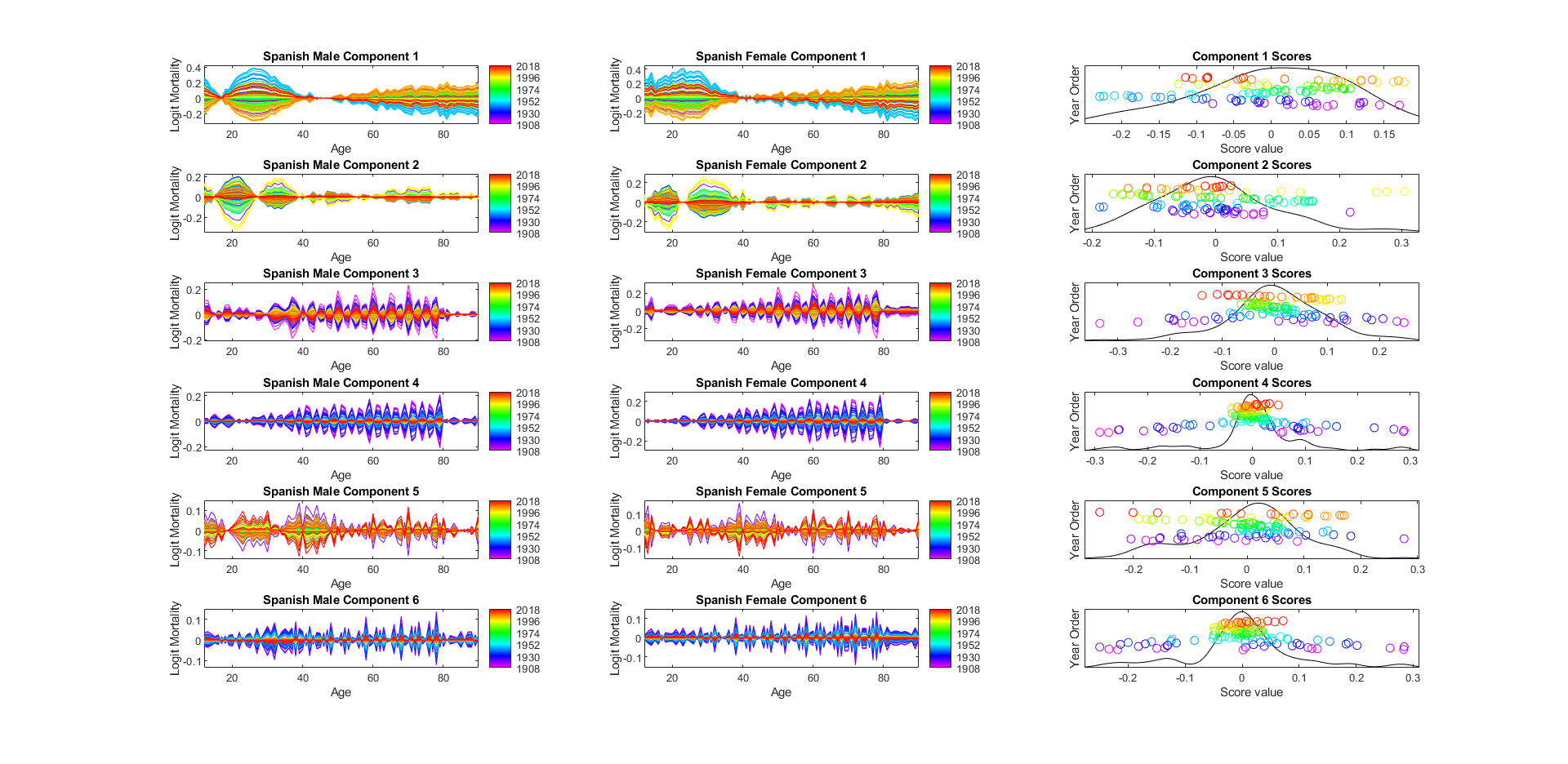}
\caption{Mode of variation curve plots of the two-way joint components between Spanish data blocks. First component driven by the Spanish Civil War in the 1930s. Last four components are primarily driven by high-frequency-harmonic age-rounding record-keeping anomalies.}\label{fig:mortspanishtwowayload}
\end{figure}

\section{Conclusions}\label{sec:conclusion}

This paper proposes DIVAS, a novel exploratory data analysis method for statistical data integration that allows for partially-shared structure between several distinct data blocks. The main contributions of DIVAS are twofold. First, we develop a rigorous, angle-based framework of statistical inference for diagnostically evaluating estimated shared structure. Second, we consider integration across both dimensions of the data blocks simultaneously and produce more thorough and higher-fidelity results as a consequence. 

Future work on DIVAS could proceed in at several different directions. Methodologically, there remains room for additional refinement of noise estimation throughout the first step of DIVAS, both in the noise variance estimator and the residual matrix estimator. Structurally, DIVAS is fundamentally a linear, unsupervised statistical model. Generalizations and expansions that extend DIVAS to tackle supervised learning and nonlinear relationships between data blocks are promising future directions. Practically, the driving force behind development of the method has always been appropriately complex data like the breast cancer omics data set that demands such methodological sophistication. Therefore we expect further improvements in DIVAS development will be found during analysis of ever more demanding data. 

\section{Acknowledgement}
Jan Hannig's research was supported in part by the National Science Foundation under Grant No. DMS-1916115, 2113404, and  2210337. J.S.~Marron's research was partially supported by the National Science Foundation Grant No. DMS-2113404.

\bibliographystyle{apalike}
\bibliography{dissRef}

\begin{thebibliography}{}

\bibitem[Akaho, 2007]{kcca}
Akaho, S. (2007).
\newblock A kernel method for canonical correlation analysis.
\newblock {\em arXiv preprint}.

\bibitem[Barbara Blatt Kalben~F.S.A., 2000]{mortsex2000}
Barbara Blatt Kalben~F.S.A., E.A., M. (2000).
\newblock Why men die younger.
\newblock {\em North American Actuarial Journal}, 4(4):83--111.

\bibitem[Blum and Mitchell, 1998]{cotraining}
Blum, A. and Mitchell, T. (1998).
\newblock Combining labeled and unlabeled data with co-training.
\newblock In {\em COLT: Proceedings of the Workshop on Computation Learning
  Theory}.

\bibitem[Cai and Huang, 2017]{cai2017kcca}
Cai, J. and Huang, X. (2017).
\newblock Robust kernel canonical correlation analysis with applications to
  information retrieval.
\newblock {\em Engineering Applications of Artificial Intelligence}, 64:33--42.

\bibitem[Chikuse, 2012]{chikuse2012statistics}
Chikuse, Y. (2012).
\newblock {\em Statistics on Special Manifolds}.
\newblock Lecture Notes in Statistics. Springer New York.

\bibitem[Farquhar et~al., 2005]{svm2k}
Farquhar, J.~D., Hardoon, D.~R., Meng, H., and Shawe-Taylor, J. (2005).
\newblock Two-view learning: Svm-2k, theory and practive.
\newblock In {\em Advances in Neural Information Processing Systems}.

\bibitem[Feng et~al., 2018]{ajive}
Feng, Q., Jiang, M., Hannig, J., and Marron, J.~S. (2018).
\newblock Angle-based joint and individual variation explained.
\newblock {\em J. Multivariate Anal.}, 166:241--265.

\bibitem[Gavish and Donoho, 2014]{gd2014}
Gavish, M. and Donoho, D.~L. (2014).
\newblock The optimal hard threshold for singular values is 4/$\sqrt{3}$.
\newblock {\em IEEE Transactions on Information Theory}.

\bibitem[Gavish and Donoho, 2017]{gd2017}
Gavish, M. and Donoho, D.~L. (2017).
\newblock Optimal shrinkage of singular values.
\newblock {\em IEEE Transactions on Information Theory}.

\bibitem[Gaynanova and Li, 2019]{gaynapartial}
Gaynanova, I. and Li, G. (2019).
\newblock Structural learning and integrative decomposition of multi-view data.
\newblock {\em Biometrics}, 75(4):1121--1132.

\bibitem[Grant and Boyd, 2008]{gb08}
Grant, M. and Boyd, S. (2008).
\newblock Graph implementations for nonsmooth convex programs.
\newblock In Blondel, V., Boyd, S., and Kimura, H., editors, {\em Recent
  Advances in Learning and Control}, Lecture Notes in Control and Information
  Sciences, pages 95--110. Springer-Verlag Limited.
\newblock \url{http://stanford.edu/~boyd/graph_dcp.html}.

\bibitem[Grant and Boyd, 2014]{cvx}
Grant, M. and Boyd, S. (2014).
\newblock {CVX}: Matlab software for disciplined convex programming, version
  2.1.
\newblock \url{http://cvxr.com/cvx}.

\bibitem[Horst, 1961]{horst1961mcca}
Horst, P. (1961).
\newblock Relations amongm sets of measures.
\newblock {\em Psychometrika}, 26:129--149.

\bibitem[Hotelling, 1936]{cca}
Hotelling, H. (1936).
\newblock {Relations between two sets of variates}.
\newblock {\em Biometrika}, 28(3-4):321--377.

\bibitem[Ismailova and Lu, 2016]{il2016}
Ismailova, D. and Lu, W.-S. (2016).
\newblock Penalty convex-concave procedure for source localization problem.
\newblock In {\em 2016 IEEE Canadian Conference on Electrical and Computer
  Engineering (CCECE)}.

\bibitem[Jiang, 2018]{mjdiss}
Jiang, M. (2018).
\newblock {\em Statistical Learning of Integrative Analysis}.
\newblock PhD thesis, University of North Carolina at Chapel Hill.

\bibitem[Kettenring, 1971]{gcca}
Kettenring, J. (1971).
\newblock Canonical analysis of several sets of variables.
\newblock {\em Biometrika}.

\bibitem[Kish, 1965]{kish}
Kish, L. (1965).
\newblock {\em Survey Sampling}.
\newblock J. Wiley.

\bibitem[Li et~al., 2019]{li2019survey}
Li, Y., Yang, M., and Zhang, Z. (2019).
\newblock A survey of multi-view representation learning.
\newblock {\em IEEE Transactions on Knowledge and Data Engineering},
  31(10):1863--1883.

\bibitem[Lock et~al., 2013]{jive}
Lock, E., Hoadley, K., Marron, J., and Nobel, A. (2013).
\newblock Joint and individual variation explained (jive) for integrated
  analysis of multiple data types.
\newblock {\em The Annals of Applied Statistics}.

\bibitem[Lock et~al., 2020]{lock2020bidimensional}
Lock, E.~F., Park, J.~Y., and Hoadley, K.~A. (2020).
\newblock Bidimensional linked matrix factorization for pan-omics pan-cancer
  analysis.

\bibitem[Marchenko and Pastur, 1967]{mp}
Marchenko, V.~A. and Pastur, L.~A. (1967).
\newblock Distribution of eigenvalues for some sets of random matrices.
\newblock {\em Mathematics of the USSR-Sbornik}.

\bibitem[Marron and Dryden, 2021]{marron2021object}
Marron, J. and Dryden, I. (2021).
\newblock {\em Object Oriented Data Analysis}.
\newblock Chapman \& Hall/CRC Monographs on Statistics and Applied Probability.
  CRC Press.

\bibitem[Marron and Alonso, 2014]{ooda}
Marron, J.~S. and Alonso, A.~M. (2014).
\newblock Overview of object oriented data analysis.
\newblock {\em Biom. J.}, 56(5):732--753.

\bibitem[Miao and Ben-Israel, 1992]{MIAO199281}
Miao, J. and Ben-Israel, A. (1992).
\newblock On principal angles between subspaces in rn.
\newblock {\em Linear Algebra and its Applications}, 171:81--98.

\bibitem[Network et~al., 2012]{tcga}
Network, C. G.~A. et~al. (2012).
\newblock Comprehensive molecular portraits of human breast tumours.
\newblock {\em Nature}, 490(7418):61.

\bibitem[Nielsen, 2002]{nielsen2002mcca}
Nielsen, A. (2002).
\newblock Multiset canonical correlations analysis and multispectral, truly
  multitemporal remote sensing data.
\newblock {\em IEEE Transactions on Image Processing}, 11(3):293--305.

\bibitem[Patton et~al., 2009]{mortsex2009}
Patton, G.~C., Coffey, C., Sawyer, S.~M., Viner, R.~M., Haller, D.~M., Bose,
  K., Vos, T., Ferguson, J., and Mathers, C.~D. (2009).
\newblock Global patterns of mortality in young people: a systematic analysis
  of population health data.
\newblock {\em The Lancet}, 374(9693):881--892.

\bibitem[Prothero et~al., 2023]{centering}
Prothero, J., Hannig, J., and Marron, J. (2023).
\newblock New perspectives on centering.
\newblock {\em The New England Journal of Statistics in Data Science}, pages
  1--21.

\bibitem[Prothero, 2021]{jbpdiss}
Prothero, J.~B. (2021).
\newblock {\em Data Integration Via Analysis of Subspaces}.
\newblock PhD thesis, University of North Carolina at Chapel Hill.

\bibitem[Shabalin and Nobel, 2013]{sn2013}
Shabalin, A. and Nobel, A. (2013).
\newblock Reconstruction of a low-rank matrix in the presence of gaussian
  noise.
\newblock {\em Journal of Multivariate Analysis}.

\bibitem[Shu and Qu, 2021]{shu2021cdpa}
Shu, H. and Qu, Z. (2021).
\newblock {CDPA}: Common and distinctive pattern analysis between
  high-dimensional datasets.

\bibitem[Sun, 2013]{mvsumshort}
Sun, S. (2013).
\newblock A survey of multi-view machine learning.
\newblock {\em Neural Computing and Applications}.

\bibitem[Tran-Dinh and Diehl, 2009]{Tran-Dinh2009b}
Tran-Dinh, Q. and Diehl, M. (2009).
\newblock {S}equential {C}onvex {P}rogramming {M}ethods for {S}olving
  {N}onlinear {O}ptimization {P}roblems with {DC} constraints.
\newblock Tech. report, ESAT/SCD and OPTEC, KU Leuven, Belgium.

\bibitem[Wedin, 1972]{wedin}
Wedin, P.-A. (1972).
\newblock Perturbation bounds in connection with singular value decomposition.
\newblock {\em BIT Numerical Mathematics}.

\bibitem[White et~al., 2012]{mvsubspace}
White, M., Yu, Y., Zhang, X., and Schuurmans, D. (2012).
\newblock Convex multi-view subspace learning.
\newblock In {\em 25th Internation Conference on Neural Information Processing
  Systems}.

\bibitem[Wilmoth and Shkolnikov, 2021]{mortalitydatabase}
Wilmoth, J.~R. and Shkolnikov, V. (2000-2021).
\newblock Human mortality database.
\newblock {\em University of California, Berkeley (USA), and Max Planck
  Institute for Demographic Research (Germany)}.

\bibitem[Xu et~al., 2013]{mvsumlong}
Xu, C., Tao, D., and Xu, C. (2013).
\newblock A survey on multi-view learning.
\newblock {\em arXiv preprint}.

\bibitem[Yi et~al., 2022]{yi2022pss}
Yi, S., Wong, R. K.~W., and Gaynanova, I. (2022).
\newblock Hierarchical nuclear norm penalization for multi-view data.

\bibitem[Yuan and Gaynanova, 2021]{yuan2021doublematched}
Yuan, D. and Gaynanova, I. (2021).
\newblock Double-matched matrix decomposition for multi-view data.

\bibitem[Zhao et~al., 2016]{bass}
Zhao, S., Gao, C., Mukherjee, S., and Engelhardt, B.~E. (2016).
\newblock Bayesian group factor analysis with structured sparsity.
\newblock {\em Journal of Machine Learning Research}, 17(196):1--47.

\bibitem[Zhua and Knyazev, 2012]{paa}
Zhua, P. and Knyazev, A.~V. (2012).
\newblock Principal angles between subspaces and their tangents.
\newblock {\em MITSUBISHI ELECTRIC RESEARCH LABORATORIES}.

\end{thebibliography}

\appendix
\section{Review of Random Matrix Theory}\label{sec:mp}


Our chosen signal extraction procedure uses random matrix theory ideas. The classical result from \citet{mp} on the distribution of the eigenvalues of random matrices underpins all of these ideas; we restate that result below.

Let $\Eb$ be a $d\times n$ random matrix. The entries of $\Eb$ are independent and identically distributed (i.i.d.) with mean 0, finite variance $\sigma^2$, and finite fourth moment. Form the $d\times d$ estimator of the covariance matrix $\Sigmab_n = \frac{1}{n}\Eb\Eb^{\top}$ and let $\lambda_1,\dots,\lambda_d$ denote the eigenvalues of $\Sigmab_n$. Consider the empirical measure $\mu_d(A) = \frac{1}{d}\#\{\lambda_j\in A\}, A\subset \bbR$ representing the empirical distribution of the eigenvalues of $\Sigmab_n$ as random variables themselves. Define an \textit{indicator fuction} $\1_{\{K\}}$ for a given condition $K$ as a function that returns 1 when condition $K$ is satisfied and returns 0 otherwise.

\begin{theorem}[\citealt{mp}]
If $d,n\rightarrow \infty$ such that $\frac{d}{n}\rightarrow \beta\in (0,+\infty)$, then $\mu_d$ converges weakly to the measure whose density is $\mu(\lambda)$:
\begin{equation}
\label{MPCases}
\mu(\lambda) = \begin{cases}
h(\lambda)\1_{(1-\sqrt{\beta})^2\leq \frac{\lambda}{\sigma^2}\leq (1+\sqrt{\beta})^2} & 0 < \beta \leq 1 \\
h(\lambda)\1_{(1-\sqrt{\beta})^2\leq \frac{\lambda}{\sigma^2} \leq (1+\sqrt{\beta})^2} + \left(1 - \frac{1}{\beta}\right)\1_{\lambda=0} & \beta > 1 
\end{cases}
\end{equation}

where the function $h(\lambda)$ is defined below:
\begin{equation}
\label{MPCurve}
h(\lambda) = \frac{1}{2\pi}\frac{\sqrt{\left((1+\sqrt{\beta})^2 - \frac{\lambda}{\sigma^2}\right)\left(\frac{\lambda}{\sigma^2} - (1-\sqrt{\beta})^2\right)}}{\beta \lambda}.
\end{equation}
\label{thm:mp}
\end{theorem}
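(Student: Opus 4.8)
This is a classical result with several known proofs; the route I would take is the Stieltjes transform method, which is the most robust for sample-covariance limits. For $z\in\mathbb{C}^+$ define the Stieltjes transform of the empirical spectral measure, $s_d(z)=\int(\lambda-z)^{-1}\,d\mu_d(\lambda)=\frac{1}{d}\operatorname{tr}\big((\Sigmab_n-zI)^{-1}\big)$. Since pointwise convergence of Stieltjes transforms on $\mathbb{C}^+$ is equivalent to weak convergence of the underlying probability measures, it suffices to show that $s_d(z)\to s(z)$ (almost surely, say) where $s$ is the Stieltjes transform of the claimed limit $\mu$. I would first reduce to $\sigma^2=1$, recovering the general case from the pushforward $\lambda\mapsto\sigma^2\lambda$, which is exactly the $\lambda/\sigma^2$ appearing in \eqref{MPCurve}.

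The next step is to derive a self-consistent fixed-point equation for the limit. Writing $\Sigmab_n=\frac{1}{n}\Eb\Eb^\top$ and applying the resolvent identity together with the Sherman--Morrison rank-one update, I would isolate the effect of deleting a single column of $\Eb$ from the resolvent. Taking normalized traces and using that the columns are i.i.d.\ (so each contributes identically to leading order), the dominant behavior yields the quadratic relation characteristic of the Marchenko--Pastur law, $\beta z\,s(z)^2+(z+\beta-1)\,s(z)+1=0$. Choosing the branch with $s(z)\in\mathbb{C}^+$ for $z\in\mathbb{C}^+$ fixes $s$; one checks directly that the discriminant vanishes precisely at $z=(1\pm\sqrt{\beta})^2$, identifying the spectral edges $\lambda_\pm$.

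To upgrade this heuristic to a genuine limit I would control the fluctuations. The two required estimates are: (i) concentration of $s_d(z)$ about its expectation, obtained from a martingale-difference decomposition over the successive columns of $\Eb$ and the Azuma/bounded-difference inequality, using that each resolvent entry changes by $O(1/d)$ under a single-column change (where the finite-fourth-moment hypothesis enters), after which Borel--Cantelli gives almost-sure convergence; and (ii) a remainder bound showing that the error terms in the self-consistent equation vanish as $d,n\to\infty$ with $d/n\to\beta$. Together these force $s_d(z)\to s(z)$.

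Finally I would invert the Stieltjes transform: applying $\mu(\lambda)=\frac{1}{\pi}\lim_{\varepsilon\downarrow 0}\operatorname{Im} s(\lambda+i\varepsilon)$ to the explicit solution of the quadratic reproduces the density $h(\lambda)$ on $[(1-\sqrt{\beta})^2,(1+\sqrt{\beta})^2]$ in \eqref{MPCurve}, while in the regime $\beta>1$ the forced $d-n$ zero eigenvalues of the rank-deficient $\Sigmab_n$ produce the atom $(1-\tfrac{1}{\beta})\1_{\lambda=0}$ in \eqref{MPCases}. The main obstacle is steps (i)--(ii): rigorously showing that the off-diagonal and cross terms in the resolvent expansion are negligible and that the quadratic holds in the limit rather than merely in expectation, which is where the independence and moment assumptions are genuinely used. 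An alternative is the moment method --- computing $\mathbb{E}\,\frac{1}{d}\operatorname{tr}(\Sigmab_n^k)$, matching it to the moments of $\mu$ via a non-crossing-pair-partition count, and invoking Carleman's condition for determinacy --- but there the combinatorial bookkeeping of which closed paths survive in the limit becomes the bottleneck instead.
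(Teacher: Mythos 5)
The paper does not actually prove this theorem: it is stated in Appendix~\ref{sec:mp} purely as a review of a classical result, with the proof deferred entirely to the cited original (Marchenko and Pastur, 1967). So there is no in-paper argument to compare against; what matters is whether your sketch is a sound route to the result, and it is. The Stieltjes-transform/resolvent strategy you outline is the standard modern proof (and is in the spirit of the original one). Your key identities all check out for the paper's conventions ($\Eb$ of size $d\times n$, $\Sigmab_n=\frac{1}{n}\Eb\Eb^{\top}$, $d/n\to\beta$): the fixed-point quadratic $\beta z s^2+(z+\beta-1)s+1=0$ is the correct self-consistent equation; its discriminant vanishes exactly at $z=(1\pm\sqrt{\beta})^2$, giving the support edges; Stieltjes inversion of the $\mathbb{C}^+$ branch yields $\frac{1}{2\pi}\sqrt{(b-\lambda)(\lambda-a)}/(\beta\lambda)$, which is $h(\lambda)$ at $\sigma^2=1$; the pushforward $\lambda\mapsto\sigma^2\lambda$ reproduces the paper's general-$\sigma^2$ formula including the $\lambda/\sigma^2$ inside the radical with $\beta\lambda$ in the denominator; and the rank bound $\mathrm{rank}(\Sigmab_n)\leq n$ forces $d-n$ zero eigenvalues, hence the atom of mass $1-\frac{1}{\beta}$ when $\beta>1$.

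Two minor points of precision. First, the bounded-difference estimate for the concentration step does not need the fourth moment: replacing one column of $\Eb$ is a rank-one (actually rank-two) perturbation of $\Sigmab_n$, and the resolvent-trace change is bounded deterministically by $O(1/\operatorname{Im} z)$ via an interlacing/rank inequality, so the normalized trace moves by $O(1/(d\operatorname{Im} z))$ regardless of moments. The fourth-moment hypothesis is genuinely used elsewhere, in showing that quadratic forms $\frac{1}{n}\xb^{\top}\Rb\,\xb$ concentrate around $\frac{1}{n}\operatorname{tr}\Rb$ (the variance bound in the remainder control for the self-consistent equation), and in truncation arguments. Second, when citing the equivalence of pointwise Stieltjes-transform convergence and weak convergence, you should note the tightness/no-escape-of-mass point: each $\mu_d$ is a probability measure and the limit is too, which is what licenses the upgrade from vague to weak convergence. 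Neither issue changes the architecture of your argument.
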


If $d < n$, then $\beta < 1$ for $\Eb$ and $\Sigmab_n$ is rank $d$. In this case, since $\Sigmab_n$ is full-rank, all eigenvalues are nonzero, and asymptotically fall between $\sigma^2(1-\sqrt{\beta})^2$ and $\sigma^2(1+\sqrt{\beta})^2$. Alternatively, if $d > n$, then $\beta > 1$ for $\Eb$ and $\Sigmab_n$ is rank $n$. In this case $\Sigmab_n$ is not full rank so the eigenvalues $\lambda_{n+1} \dots \lambda_d$ are all 0. In cases where $\beta > 1$ the Marchenko-Pastur density is therefore a mixture between a point mass of $1 - \frac{1}{\beta}$ at zero and a continuous portion bounded between $\sigma^2(1-\sqrt{\beta})^2$ and $\sigma^2(1+\sqrt{\beta})^2$ with total area $\frac{1}{\beta}$.

\section{Review of Principal Angle Analysis}\label{sec:paa}

The following is based on \cite{paa} and \cite{MIAO199281}. \textit{Principal angle analysis} characterizes the relative positions of two subspaces $\mathcal{X}$ and $\mathcal{Y}$ in Euclidean space using canonical angles found via SVD. In particular, let $\Wb_{\mathcal{X}}$ and $\Wb_{\mathcal{Y}}$ be orthonormal basis matrices for $\mathcal{X}$ and $\mathcal{Y}$ respectively. Then the singular value decomposition of $\Wb_{\mathcal{X}}^{\top}\Wb_{\mathcal{Y}}$ finds both the principal angles between $\mathcal{X}$ and $\mathcal{Y}$ and the corresponding \textit{principal vectors}. Write the singular value decomposition of $\Wb_{\mathcal{X}}^{\top}\Wb_{\mathcal{Y}}$ as $\Wb_{\mathcal{X}}^{\top}\Wb_{\mathcal{Y}} = \Ub\Db\Vb^{\top}$, where $\Ub$ and $\Vb$ are orthonormal matrices containing the principal vectors of $\mathcal{X}$ and $\mathcal{Y}$ respectively, and $\Db$ is a diagonal matrix. The inverse cosines of the nonzero entries of $\Db$ give the principal angles between $\mathcal{X}$ and $\mathcal{Y}$, and in particular the angles between each pair of corresponding principal vectors. The $j$th pair of principal vectors have an angle between them equal to the $j$th principal angle. 


This perspective also demonstrates the result of principal angle analysis when the dimensions of $\mathcal{X}$ and $\mathcal{Y}$ differ. Let the dimensions of $\mathcal{X}$ and $\mathcal{Y}$ be $p$ and $q$ respectively, with $p < q$. In this case some of the singular values will be zero as the matrix $\Wb_{\mathcal{X}}^{\top}\Wb_{\mathcal{Y}}$ is non-square, and the inverse cosine of zero is $90^{\circ}$. If $p<q$, the principal angles $\theta_{p+1},\dots,\theta_{q}$ are all $90^{\circ}$.

Principal angle analysis is also orthogonally invariant. In particular, the principal angles between $\mathcal{X}$ and $\mathcal{Y}$ will be identical to the principal angles between reoriented versions $\Ob\mathcal{X}$ and $\Ob\mathcal{Y}$, where $\Ob$ is an orthogonal matrix and $\Ob\mathcal{X} = \{\Ob\xb|\xb\in\mathcal{X}\}$. The matrices $\Ob \Wb_{\mathcal{X}}$ and $\Ob \Wb_{\mathcal{Y}}$ represent orthonormal bases for $\Ob\mathcal{X}$ and $\Ob\mathcal{Y}$, so the principal angle structure between the two rotated subspaces is found by taking a singular value decomposition of $\Wb_{\mathcal{X}}^{\top} \Ob^{\top} \Ob \Wb_{\mathcal{Y}}$, which is equivalent to that of $\Wb_{\mathcal{X}}^{\top}\Wb_{\mathcal{Y}}$.

\section{Noise Matrix Estimation} \label{sec:noise}

The residual $\hat{\Eb} = \Xb - \hat{\Ab}$ is a poor estimate of the non-signal component of the data, especially in the case of a non-square matrix. Heuristically, this is caused by the residual lying entirely in the subspace spanned by the data. We investigate the causes of this phenomenon and propose a solution.

For this investigation our synthetic data will be a $5000\times 500$ matrix $\Xb=\Ab+\Eb$. The signal matrix $\Ab=\Ub\Db\Vb^{\top}$ is rank 50 with equally-spaced singular values from 0.1 to 5. $\Eb$ is a full-rank i.i.d. Gaussian matrix with variance $\frac{\sigma^2}{5000}$. This scaling of the noise variance by the number of traits is common in the matrix signal processing literature \citep{gd2014, gd2017}. It results in columns with expected norm $\sigma$ and sets the noise at a level commensurate with the magnitude of the signal. With $\sigma=1$, we expect most of the singular values to be easily recoverable while others are indistinguishable from the noise. We perform signal extraction as described in Section~\ref{sec:mpsignal} on this matrix and subsequently examine estimators of $\Eb$ given $\hat{\Ab}$.

One way to check the efficacy of an estimator $\hat{\Eb}$ for $\Eb$ is to see how well its eigenvalues align with the Marchenko-Pastur distribution (see Appendix~\ref{sec:mp}). We can compare the observed values to theoretical quantiles using a \textit{quantile-quantile} (Q-Q) plot. On the horizontal axis we plot the sorted observed eigenvalues for a noise matrix estimate $\hat{\Eb}$, and on the vertical axis we plot evenly-spaced quantiles of the Marchenko-Pastur distribution with parameter $\beta=\frac{d\wedge n}{d\vee n}$ (here $d$ and $n$ are the row and column dimensions of the matrix respectively). Typically, if the plotted points on a Q-Q plot roughly follow the $45^{\circ}$ line, the conclusion is that the observed data aligns well with the theoretical distribution. To get a sense of how much variability to expect about the $45^{\circ}$ line, we generate $M=100$ i.i.d. Gaussian matrices and plot their eigenvalues as green lines underneath the magenta Q-Q points. These traces create a visually striking region of acceptable variability which can be used to judge the goodness of fit at a glance.

Figure~\ref{nonsquareEigQQ} shows such a Q-Q plot for the eigenvalues of the na\"ive noise estimate $\hat{\Eb}=\Xb - \hat{\Ab}$ for the synthetic data matrix. The na\"ive estimated non-signal component tends to display perhaps unexpectedly low energy in directions associated with the estimated signal subspace. This phenomenon leads to Q-Q plots that are challenging to interpret. For this matrix the estimated signal rank is 44, and the bottom 44 eigenvalues of the estimated noise matrix completely deviate from the theoretical Marchenko-Pastur distribution. Importantly, this phenomenon (explained in detail below) occurs regardless of the chosen estimate for $\sigma$. The aberration in this graphic demonstrates the ineffectiveness of the na\"ive estimate. The rotational bootstrap procedure (see Section~\ref{sec:rotboot}) central to DIVAS depends on effective estimation of the underlying noise matrix $\Eb$. This is accomplished via a correction to a portion of the singular values of $\hat{\Eb}$.

\begin{figure}[h]
\centering
\includegraphics[scale=0.4]{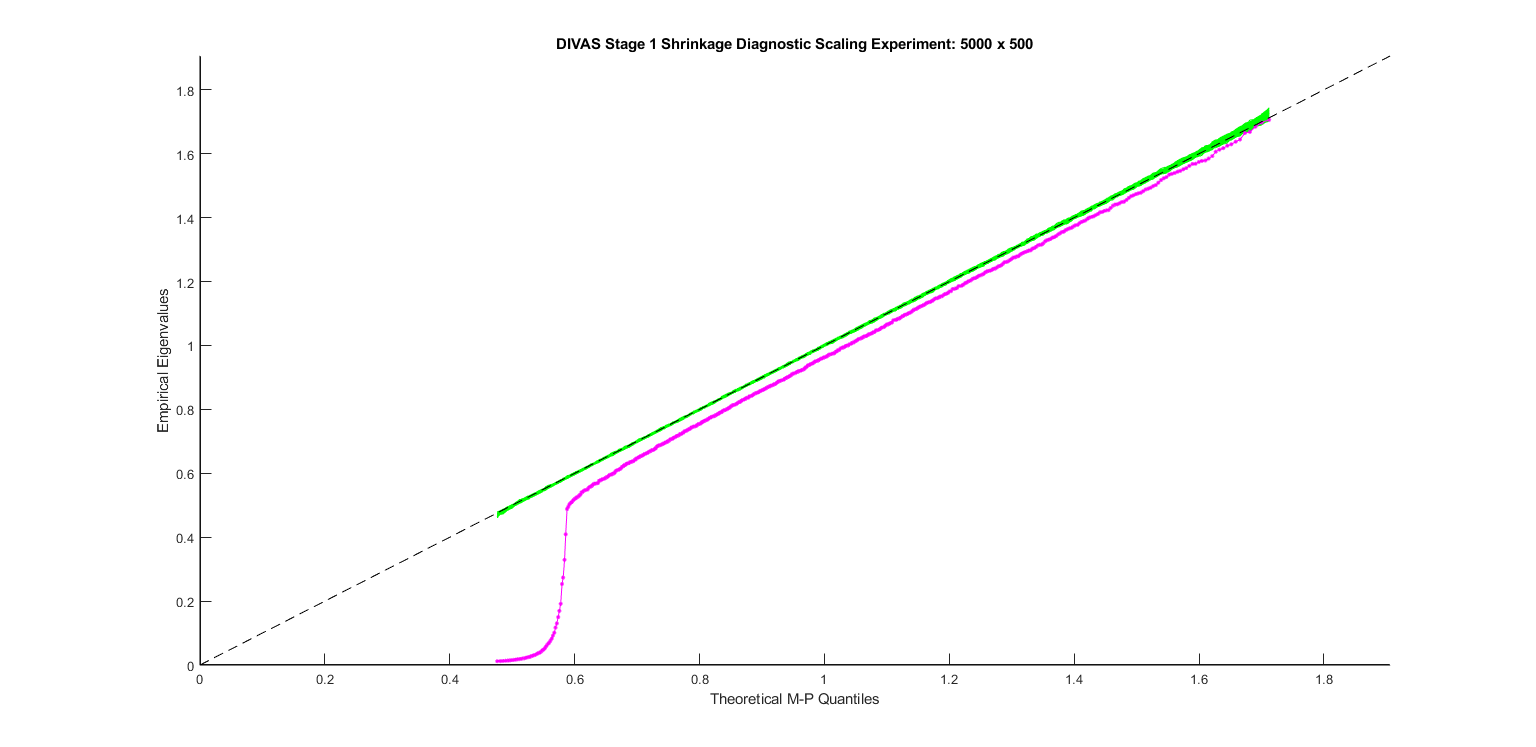}
\caption{Q-Q plot for the eigenvalues of the na\"ive noise matrix estimate. The first $\hat{r}$ eigenvalues fall entirely outside the range determined by Theorem \ref{thm:mp}, signaling that the naive noise matrix estimate is flawed. These eigenvalues are also scaled using the original noise level estimate to retain some interpretability. Scaling using the apparent noise level in the estimated error matrix produces an even worse fit to the Marchenko-Pastur distribution because the apparent noise level is too low.}\label{nonsquareEigQQ}
\end{figure}

To explain this behavior and motivate our proposed correction, we consider our data model \eqref{eq:datamodel} in a special case where the signal is rank one, and the signal, noise and data are all vectors in $\bbR^2$, illustrated in Figure~\ref{angles}. The signal (green) and noise (red) vectors each lie in distinct one-dimensional subspaces. The two vectors are added together to form the data vector (blue). When we form our estimate of the signal $\hat{\Ab}$ (green-blue dashed) our shrinkage procedure gives us a good estimate of signal magnitude. However, it is challenging to recover directional information about $\Ab$ as the estimate $\hat{\Ab}$ lies in the same subspace as the data. When we next subtract $\hat{\Ab}$ from the data $\Xb$ to form the na\"ive estimate $\hat{\Eb}$ (red-blue dashed), the subtraction occurs entirely in the data subspace so we don't account for the angle between the initial signal and noise vectors at all. This leads to an underestimation of noise energy: the length of the estimated noise within the data subspace (red-blue dashed) is distinctly shorter than the length of the original noise vector (red). This length discrepancy is the one-dimensional analog of the phenomenon shown in Figure~\ref{nonsquareEigQQ} where many of the smallest eigenvalues are even smaller than expected.

\begin{figure}[h]
\begin{minipage}[c]{0.49\linewidth}
            \centering
            \includegraphics[width=\textwidth]{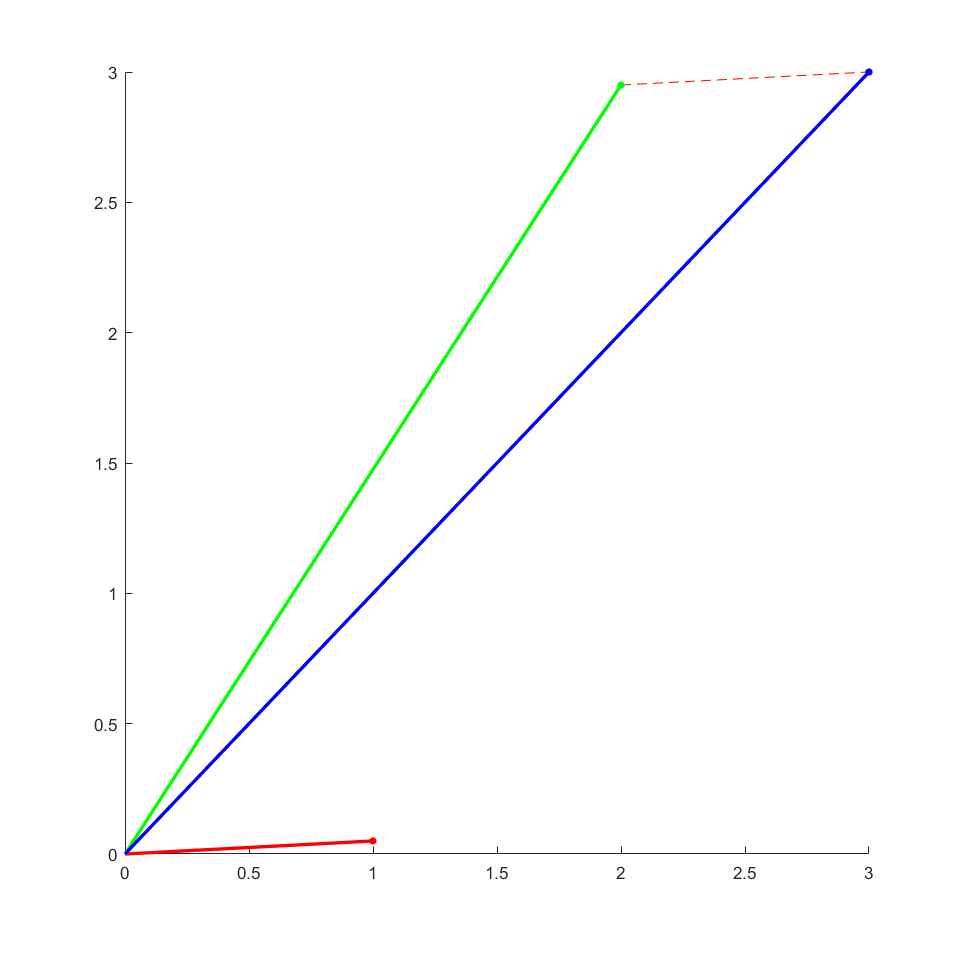}
        \end{minipage}
        \begin{minipage}[c]{0.49\linewidth}
            \centering
            \includegraphics[width=\textwidth]{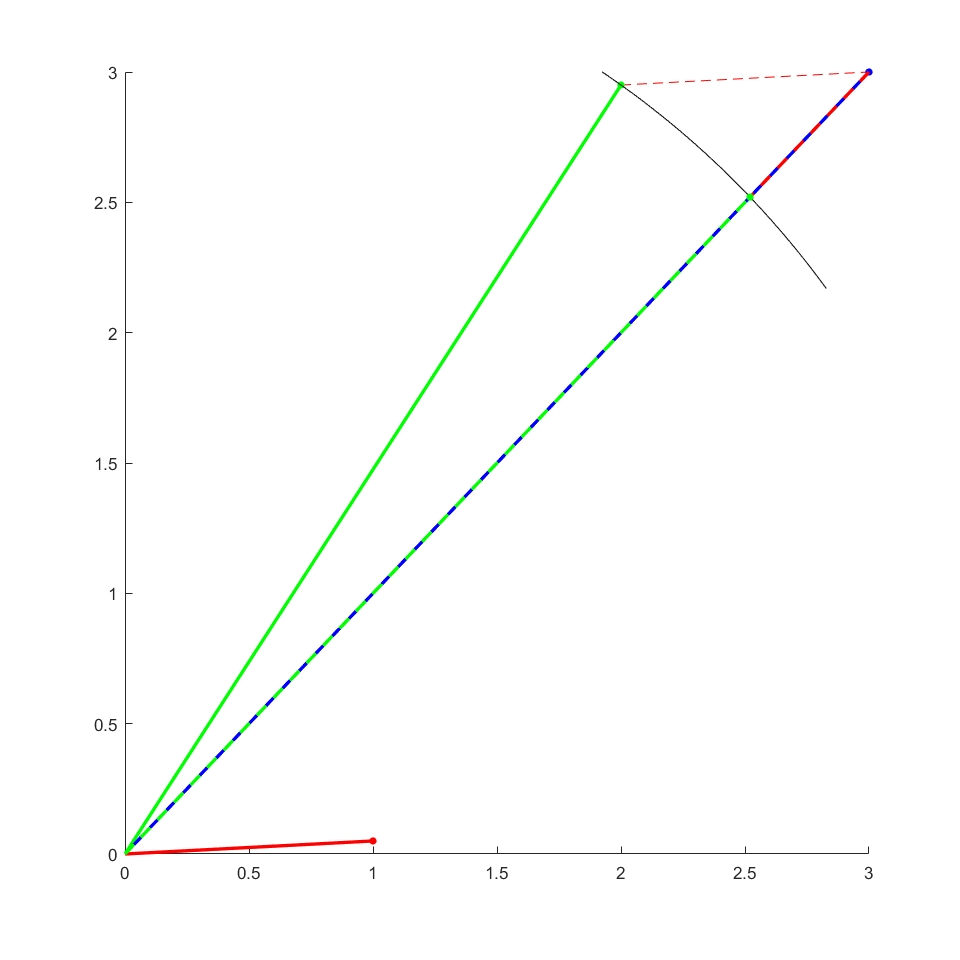}
        \end{minipage}
\caption{Example of noise energy underestimation for a rank-one signal subspace in $\bbR^2$. Left: Signal space (green), noise space (red), and data space (blue). Data vector formed by adding signal and noise vectors tip to tail. Right: Estimating $\hat{\Ab}$ (green-blue dashed) and $\hat{\Eb}=\Xb-\hat{\Ab}$ (red-blue dashed). When we remove energy equal to that of the signal space from the data space the leftover energy is noticeably smaller than the true noise energy. Note that the black arc indicates a rotation rather than a projection, so the green-blue dashed line has the same length as the green line.} \label{angles}
\end{figure}

Several potential corrections for this effect are proposed in Chapter~4 of \citet{jbpdiss}. We present the correction used in DIVAS here. Consider $\Xb$ as a sum of rank 1 approximations in the manner of \eqref{eq:signalest}: $\Xb= \sum_{i=1}^{d\wedge n} \overbar{\nu}_i \overbar{\ub}_i \overbar{\vb}_i^{\top}$. Once we estimate the signal singular values, we can split the energy in the associated singular vector directions into signal energy $\hat{\nu}$ and non-signal energy $\overbar{\nu}-\hat{\nu}$:
\begin{equation}
    \Xb= \sum_{i=1}^{\hat{r}} \hat{\nu}_i \overbar{\ub}_i \overbar{\vb}_i^{\top} +  \sum_{i=1}^{\hat{r}} (\overbar{\nu}_i - \hat{\nu}_i) \overbar{\ub}_i \overbar{\vb}_i^{\top} + \sum_{i=\hat{r}+1}^{d\wedge n} \overbar{\nu}_i \overbar{\ub}_i \overbar{\vb}_i^{\top}
\end{equation}
The Gavish-Donoho shrinkage function \eqref{eq:opShrink} gives us good estimates for the first $\hat{r}$ singular values $\hat{\nu}_{1:\hat{r}}$ while confirming many of the $\overbar{\ub}_i \overbar{\vb}_i^{\top}$ subspaces and associated singular values as noise. However, by subtracting $\hat{\Ab} = \sum_{i=1}^{\hat{r}} \hat{\nu}_i \overbar{\ub}_i \overbar{\vb}_i^{\top}$ from $\Xb$ we are overestimating the influence of the signal within the data subspace as the $\overbar{\nu}_i - \hat{\nu}_i$ terms of $\hat{\Eb}$ have inordinately low energy in directions associated with the estimated signal.

The DIVAS solution to this energy deficiency is to replace each deficient singular value in $\hat{\Eb}$ with a Marchenko-Pastur random variate. Let $MP_q(\beta)$ be the $q$th percentile of the Marchenko-Pastur distribution with parameter $\beta$, let $U_{1:\hat{r}}$ be $\hat{r}$ i.i.d. standard uniform random variables, and let $\hat{\sigma}^2$ be an estimate of the noise variance. We form the \textit{imputed} noise matrix estimate $\hat{\Eb}_{impute}$ as follows:
\begin{equation}\label{eq:ehatimpute}
    \hat{\Eb}_{impute} =
    \sum_{i=1}^{\hat{r}} \hat{\sigma} MP_{U_i}(\beta) \overbar{\ub}_i \overbar{\vb}_i^{\top} + \sum_{i=\hat{r}+1}^{d\wedge n} \overbar{\nu}_i \overbar{\ub}_i \overbar{\vb}_i^{\top}
\end{equation}

Figure~\ref{nonsquareQQImproved} shows the original Q-Q plot from Figure~\ref{nonsquareEigQQ} with the eigenvalues of $\hat{\Eb}_{impute}$ for the synthetic data matrix also included in black. After imputing the deficient singular values, the eigenvalues of the reconstructed noise matrix estimate follow the expected Marchenko-Pastur distribution quite closely; nearly all of them fall within the green acceptable variability envelope.

\begin{figure}[h]
\centering
\includegraphics[scale=0.4]{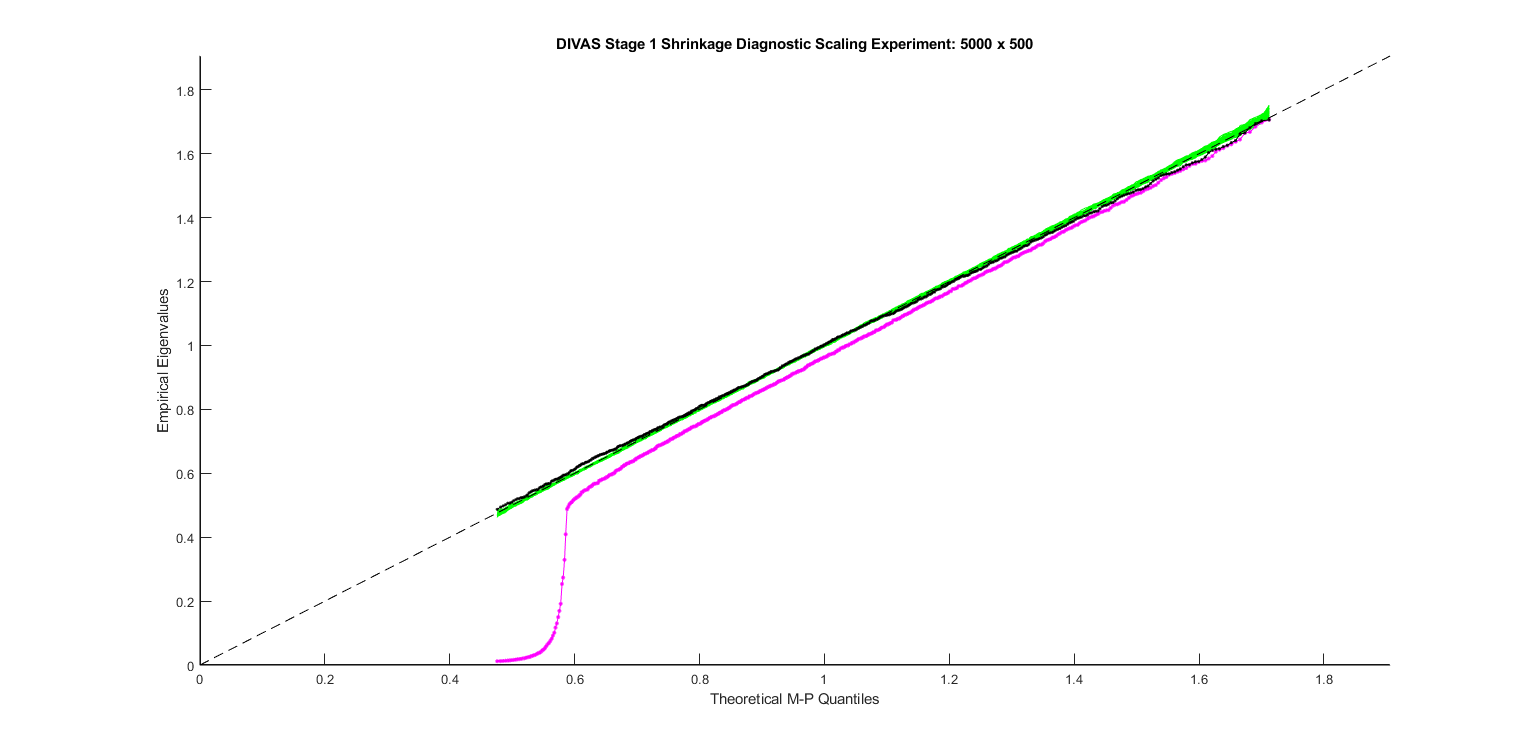}
\caption{Q-Q plot for the eigenvalues of the na\"ive noise matrix estimate (magenta) and the imputed noise matrix estimate (black) from the synthetic data matrix. The corrected eigenvalues largely remain within the green acceptable variability envelope.}\label{nonsquareQQImproved}
\end{figure}

\section{Optimization Algorithm and Implementation}\label{sec:optdetails}
In this appendix we provide the details of our numerical algorithm to solve the optimization problem \eqref{DivasOptHeur}. 
First, we will explicitly rewrite \eqref{DivasOptHeur} into a convex-concave optimization problem, also called a DC (difference of two convex functions) program.
The detail primarily involves the steps to reformulate the problem \eqref{DivasOptHeur} in terms of the convex-concave setting described in \citet{il2016}, and subsequently implements that convex-concave procedure for solving the resulting problem. 
The convex-concave procedure (or also called a DC algorithm) can be found in the literature including \citet{il2016,Tran-Dinh2009b}.
Since our problem has both the DC objective function and DC constraints, we can use the convergence analysis  from \cite{Tran-Dinh2009b} to guarantee the well-definedness of our algorithm.

\vspace{1ex}
\noindent\textbf{DC programming reformulation of \eqref{DivasOptHeur}.}
To move towards to a DC programming reformulation of \eqref{DivasOptHeur}, we express the angles between candidate directions $\vb\astar$ and various subspaces in terms of their squared cosines. 
For an arbitrary-magnitude $\vb\astar$ and orthonormal basis matrix $\Vb$ for a subspace, if we define $\hat{\theta}_{\Vb}=\angle(\vb\astar, \Vb)$, then we have $\cos^2(\hat{\theta}_{\Vb}) = \frac{\vb\astart\Vb\Vb^{\top}\vb\astar}{\vb\astart\vb\astar}$.
More specifically, by using the representation  $\cos^2(\hat{\theta}_{Tk}) = \frac{\vb\astart \check{\Vb}_k \check{\Vb}_k^{\top}\vb\astar}{\vb\astart\vb\astar}$ and keeping in mind the orthonormal condition that $\vb\astart\vb\astar = 1$, the objective function of  \eqref{DivasOptHeur} becomes $-\sum_{k\in\ib} \cos^2(\hat{\theta}_{Tk}) = -  \vb\astart\left(\sum_{k\in \ib} \check{\Vb}_k \check{\Vb}_k^{\top}\right)\vb\astar$.
Next, using the decreasing monotonicity of $\cos^2$ in $[0, \frac{\pi}{2}]$, the constraint $\hat{\theta}_{Tk} \leq \hat{\phi}_k$ is equivalent to $\cos^2(\hat{\theta}_{Tk}) = \frac{\vb\astart\check{\Vb}_k \check{\Vb}_k^{\top}\vb\astar}{  \vb\astart\vb\astar }  \geq \cos^2(\hat{\phi}_k)$ for all $k\in\ib$.
Similarly, $\hat{\theta}_{Tk} \geq \hat{\phi}_k$ is equivalent to $\cos^2(\hat{\theta}_{Tk}) = \frac{\vb\astart\check{\Vb}_k \check{\Vb}_k^{\top}\vb\astar}{ \vb\astart\vb\astar } \leq \cos^2(\hat{\phi}_k)$ for all $k\in\ib^c$.
The constraint $\hat{\theta}_{Ok} \leq \hat{\psi}_k$ is equivalent to $\cos^2(\hat{\theta}_{Ok}) = \frac{  \vb\astart\Xb_k^{\top}\check{\Ub}_k \check{\Ub}_k^{\top} \Xb_k \vb\astar }{ \vb\astart\Xb_k^{\top}\Xb_k\vb\astar }  \geq \cos^2(\hat{\psi}_k)$ for all $k\in\ib$.
Finally, we multiply all these constraint reformulations by $\vb\astart\vb\astar$ to eliminate their denominator, and transform them into DC constraints.
The orthonormal constraint $\vb\astart\vb\astar = 1$ is equivalent to $\vb\astart\vb\astar - 1 \leq 0$ and $1 - \vb\astart\vb\astar \leq 0$.
Putting these transformations together, we can easily see that \eqref{DivasOptHeur} is equivalent to the following  problem:
\begin{equation}\label{DivasOptHeur_equiv_form}
\left\{\begin{array}{llllll} 
&\min_{\vb\astar}& - \vb\astart\left(\sum_{k\in \ib} \check{\Vb}_k \check{\Vb}_k^{\top}\right)\vb\astar  & & & \\
&s.t.&  \hat{\theta}_{Tk} &= \angle(\vb\astar,\check{\Vb}_k) & \forall k & \\
&  & \hat{\theta}_{Ok} &= \angle(\Xb_k\vb\astar,\check{\Ub}_k) & \forall k & \\
&  &\cos^2(\hat{\phi}_k)\vb\astart\vb\astar - \vb\astart\check{\Vb}_k \check{\Vb}_k^{\top}\vb\astar &\leq 0 & \forall k &\in \ib \\
&  &\vb\astart\check{\Vb}_k \check{\Vb}_k^{\top}\vb\astar - \cos^2(\hat{\phi}_k)\vb\astart\vb\astar&\leq 0 & \forall k &\in \ib^c \\
&  &\cos^2(\hat{\psi}_k)\vb\astart\Xb_k^{\top}\Xb_k\vb\astar - \vb\astart\Xb_k^{\top}\check{\Ub}_k \check{\Ub}_k^{\top} \Xb_k \vb\astar &\leq 0 & \forall k &\in \ib \\
&  &\vb\astar &\perp \mathfrak{V}_{\jb} & \forall \jb &\supseteq \ib \\
& &\vb\astart\vb\astar -1&\leq 0 & & \\
& &1 - \vb\astart\vb\astar &\leq 0 & &
\end{array}\right.
\end{equation}
Clearly, the objective function of \eqref{DivasOptHeur_equiv_form} is concave.
In addition, the third, the fourth, the fifth, and the last constraints of \eqref{DivasOptHeur_equiv_form} are DC constraints of the form $f(\vb\astar) - g(\vb\astar) \leq 0$.
Therefore, \eqref{DivasOptHeur_equiv_form} is a DC program.
This problem is feasible depending on the choice of $\hat{\phi}_k$ and $\hat{\psi}_k$.
However, due to the orthonormal constraint $\vb\astart\vb\astar = 1$, the feasible set of problem \eqref{DivasOptHeur_equiv_form} does not have nonempty interior. 
In this case, to guarantee the DC algorithm being well-defined, we will relax it by adding slack variables.

\vspace{1ex}
\noindent\textbf{DC algorithm.}
Note that the objective function of \eqref{DivasOptHeur_equiv_form}  is though concave, it can be written into a DC function $f_0(\vb\astar) - g_0(\vb\astar)$, where $f_0 = 0$ and $g_0$ is a quadratic function.
Assume that we have $m_c$ DC constraints.
Then, all the DC constraints can be written as $f_k(\vb\astar) - g_k(\vb\astar) \leq 0$ for $k=1,\cdots, m_c$.
The other convex constraints are expressed as $\vb\astar \in \mathcal{F}$, including the orthogonal constraints $\vb\astar \perp \mathfrak{V}_{\jb}$ for all $\jb \supseteq \ib$, which are in fact linear.
However, to guarantee the feasibility of our DC program, we instead relax the DC constraints to obtain $f_k(\vb\astar) - g_k(\vb\astar) \leq s_k$, where $s_k \geq 0$ are given slack variables.
We also penalize the slack variables $s_k$ into the objective function with a given penalty parameter $\tau > 0$ to better approximate feasible solutions of \eqref{DivasOptHeur_equiv_form}.
Therefore, we can write the relaxation form of \eqref{DivasOptHeur_equiv_form} into the following DC program:
\begin{equation}\label{eq:DC_program}
\left\{\begin{array}{lll}
{\displaystyle\min_{\vb\astar, s_k}} & f_0(\vb\astar) - g_0(\vb\astar) + \tau \sum_{k=1}^{m_c}s_k\vspace{1ex}\\
\mathrm{s.t.} & f_k(\vb\astar) - g_k(\vb\astar) \leq s_k, \quad \forall i =1,\cdots, m_c,\\
&\vb\astar \in \mathcal{F}, \quad s_k \geq 0, ~(k=1,\cdots,m_c).
\end{array}\right.
\end{equation}
Note that if $s_k = 0$ for $k=1,\cdots, m_c$, then \eqref{eq:DC_program} reduces to \eqref{DivasOptHeur_equiv_form}.
To solve \eqref{eq:DC_program}, we apply a DC algorithm (see, e.g., \citet{il2016,Tran-Dinh2009b}), which can be roughly described as follows.
\begin{enumerate}
    \item \textit{Initialization:} At the iteration $t=0$, find an initial point $\vb_0$ of \eqref{eq:DC_program} (specified later).
    \item \textit{Iteration $t$.} At each iteration $t\geq 0$, given $\vb_t$, linearize the concave parts of \eqref{eq:DC_program} to obtain the following convex optimization subproblem:
\begin{equation}\label{eq:cvx_subprob}
\left\{\begin{array}{lll}
{\displaystyle\min_{\vb\astar, s_k}} & f_0(\vb\astar) - [g_0(\vb_t) + \nabla{g_0}(\vb_t)^{\top}(\vb - \vb_t)] + \tau \sum_{k=1}^{m_c}s_k\vspace{1ex}\\
\mathrm{s.t.} & f_k(\vb\astar) - [g_k(\vb_t) + \nabla{g_k}(\vb_t)^{\top}(\vb - \vb_t)] \leq s_k, \quad (k =1,\cdots, m_c),\\
&\vb\astar \in \mathcal{F}, \quad s_k \geq 0, ~(k=1,\cdots,m_c).
\end{array}\right.
\end{equation}
\item[] Solve \eqref{eq:cvx_subprob} to obtain an optimal solution $\vb_{t+1}$ and repeat the next iteration $t+1$ with $\vb_{t+1}$.
\item \textit{Termination.} The algorithm is terminated if it does not significantly improve the objective values, or other criteria are met.
\end{enumerate}
Note that as proven in  \cite{Tran-Dinh2009b}, under mild conditions imposed on \eqref{eq:DC_program}, our DC algorithm guarantees that the sequence $\{\vb_t\}$ generated by our DC algorithm converges to a stationary point of \eqref{eq:DC_program} (i.e. the point satisfying the optimality condition of \eqref{eq:DC_program}).
We do not repeat the convergence analysis of our DC procedure here, but refer to  \cite{Tran-Dinh2009b} for more details.

\vspace{1ex}
\noindent\textbf{Detailed implementation.}
We now specify the detailed implementation of our DC procedure as follows.
The first step is to choose an initial point $\vb_0$ for our DC program \eqref{DivasOptHeur_equiv_form}.
Without relaxation, choosing a feasible initial point for \eqref{DivasOptHeur_equiv_form} is indeed challenging. 
Hence, we introduce slack variables $s_k$ to the constraints to guarantee that our relaxed DC program is always feasible and thus our algorithm is well-defined and can proceed. 
For instance, one can directly choose an arbitrary $\vb_0$ in $\mathcal{F}$ first, and then set $s_{k, 0} = \max\{ f_k(\vb_0) - g_k(\vb_0), 0\}$ for each $k$ to obtain a feasible point of \eqref{DivasOptHeur_equiv_form}.

In our implementation, we choose as our initial point for the $i$th direction in the joint subspace of block collection $\ib$ the $i$th right singular vector from the SVD of $\left[\Vb_{k}\right]^{\top}_{k\in\ib}$, which is related to the joint structure found via the AJIVE algorithm \citep{ajive}. 
If necessary, the chosen initial point is also projected to obey any orthogonality constraints present at that point in the algorithm.
As explained, the slack variables $s_k$ introduced to allow for an infeasible initial condition also appear in the objective function. 
They are penalized with a weight $\tau$ (also called the penalty parameter) which changes on each iteration of the optimization problem as $\tau_t$. 
Notably, the values of the quadratic forms involved in the object space constraints are often much larger than those for the trait space constraints as the object space constraints include the full-energy data matrices $\Xb_k$. 
Therefore, we downweight the slack penalty on those constraints by the leading singular value $\overbar{\nu}_{1,k}$ of $\Xb_k$ so the optimization problem is not overly restricted by the object space constraints. For further computational efficiency, if the algorithm reaches a point where all the angle-constraint slack variables are zero, it will stop early and add to the current basis a normalized version of the current iteration's intermediate solution. 

Next, if we specify the convex optimization subproblem \eqref{eq:cvx_subprob} for \eqref{DivasOptHeur_equiv_form}, then it becomes
\begin{equation}\label{DivasOptCCSlackLoad}
\begin{aligned} 
& \min_{\vb\astar}& - 2\vb_0^{\top}\left(\sum_{k\in \ib} \check{\Vb}_k \check{\Vb}_k^{\top}\right)\vb\astar + \vb_0^{\top}\left(\sum_{k\in \ib} \check{\Vb}_k \check{\Vb}_k^{\top}\right)\vb_0 & + \tau_t \sum_{k=1}^{2K+2} s_k & & \\
& s.t. & \vb\astart\vb\astar - 2 \frac{\vb_0^{\top}\check{\Vb}_k \check{\Vb}_k^{\top}\vb\astar}{\cos^2\left(\hat{\phi}_k\right)} + \frac{\vb_0^{\top}\check{\Vb}_k \check{\Vb}_k^{\top}\vb_0}{\cos^2\left(\hat{\phi}_k\right)} &\leq s_k & \forall k &\in \ib \\
& & \frac{\vb\astart\check{\Vb}_k \check{\Vb}_k^{\top}\vb\astar}{\cos^2\left(\hat{\phi}_k\right)} - 2\vb_0^{\top}\vb\astar + \vb_0^{\top}\vb_0 &\leq s_k  & \forall k &\in \ib^c \\
& & \vb\astart\Xb_k^{\top}\Xb_k\vb\astar - 2 \frac{ \vb_0^{\top}\Xb_k^{\top}\check{\Ub}_k \check{\Ub}_k^{\top} \Xb_k \vb\astar}{\cos^2\left(\hat{\psi}_k\right)} + \frac{\vb_0^{\top}\Xb^{\top}_k\check{\Ub}_k \check{\Ub}_k^{\top}\Xb_k\vb_0}{\cos^2\left(\hat{\psi}_k\right)} &\leq s_{K+k}/\overbar{\nu}_{1,k} & \forall k &\in \ib \\
& & 1 - 2 \vb_0^{\top} \vb\astar +  \vb_0^{\top}\vb_0 &\leq s_{2K+1} & & \\
& & \vb\astart \vb\astar - 1 &\leq s_{2K+2} & & \\
& & \mathfrak{V}_{\jb}^{\top} \vb\astar &= \0 & \forall \jb &\supseteq \ib.
\end{aligned}
\end{equation}
This problem is in fact a convex optimization problem with linear objective function and convex quadratic and linear constraints, which can be efficiently solved by several convex optimization solvers, including interior-point methods.
In our implementation, we use a MATLAB package associated with a default solver, called CVX from \cite{cvx, gb08} to model the subproblem \eqref{DivasOptCCSlackLoad}.
If we use CVX's default solver, SDPT3, then our algorithm runs in about 30 minutes on the mortality data example from Section~\ref{sec:data} on the authors' laptop with the default solver. 
Larger data sets like the breast cancer genomics example can take considerably longer, but substantial speed-ups are possible with Mosek and other commercial solvers.

To terminate our algorithm, one can look at the objective value of \eqref{DivasOptHeur_equiv_form} to see if it is actually improved through iterations.
If after, e.g., five consecutive iterations, the objective values do not significantly improved, then we can terminate it.
Alternatively, we can also look at the quality of the final solution to see if it is reasonable to terminate the algorithm or not.

\vspace{1ex}
\noindent\textbf{Experiments.}
Figure~\ref{toyexamplestep2} shows the iterative process of the optimization problem for the synthetic data example displayed in Figure~\ref{3blocktruth}. On each panel, the horizontal axis represents the number of iterations and the vertical axis represents the angles in degrees to the panel's respective estimated signal subspace in trait space. Each horizontal green line represents the trait space perturbation angle bound. The blue paths represent the angles to the low rank approximations of trait spaces at each iteration. The red dashed paths represent the angle to the true trait subspaces at each iteration, which are known since this is a synthetic data set. Note that in the right panel in the first row, the initial condition is infeasible for the X3 data block's angle perturbation bound, demonstrating how the algorithm can use the flexibility afforded by the slack variables to explore the space before choosing a final solution. Since the objective function is trying to minimize the total squared cosine of all included blocks, the angle with one data block may increase, as in the top middle panel, if it means the angle with other data blocks decreases.

\begin{figure}[ht!]
\begin{center}
\begin{subfigure}[b]{1\textwidth}
\centering
\includegraphics[scale=0.3]{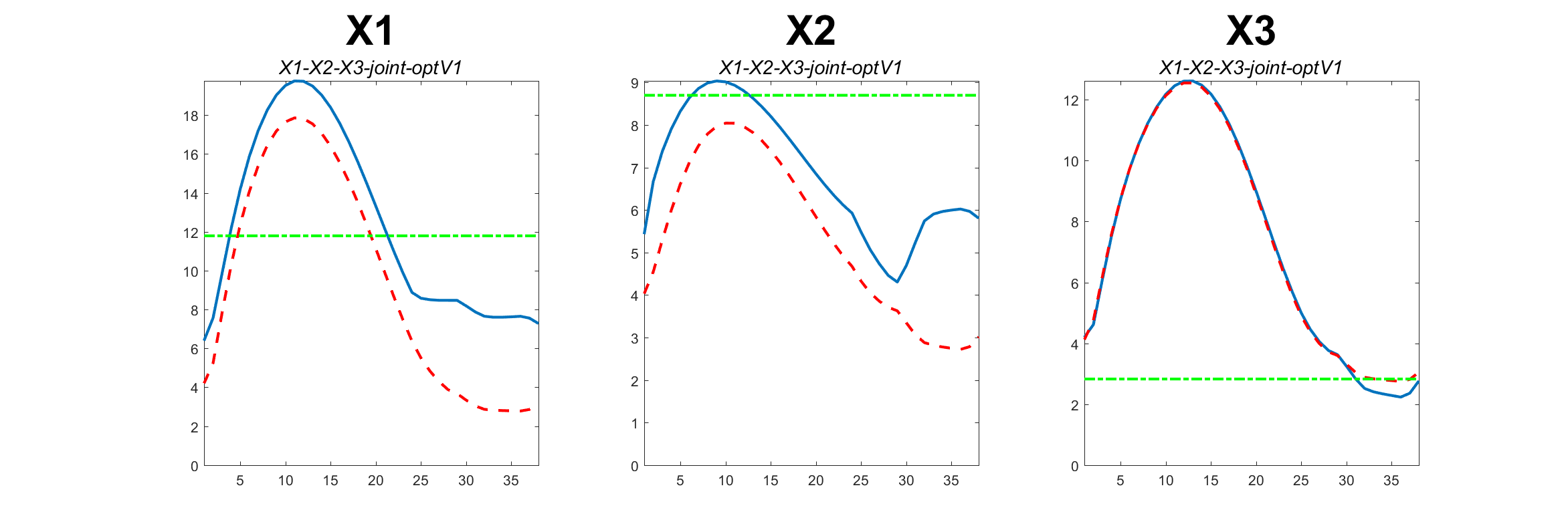}
\end{subfigure}
\begin{subfigure}[b]{1\textwidth}
\centering
	\includegraphics[scale=0.3]{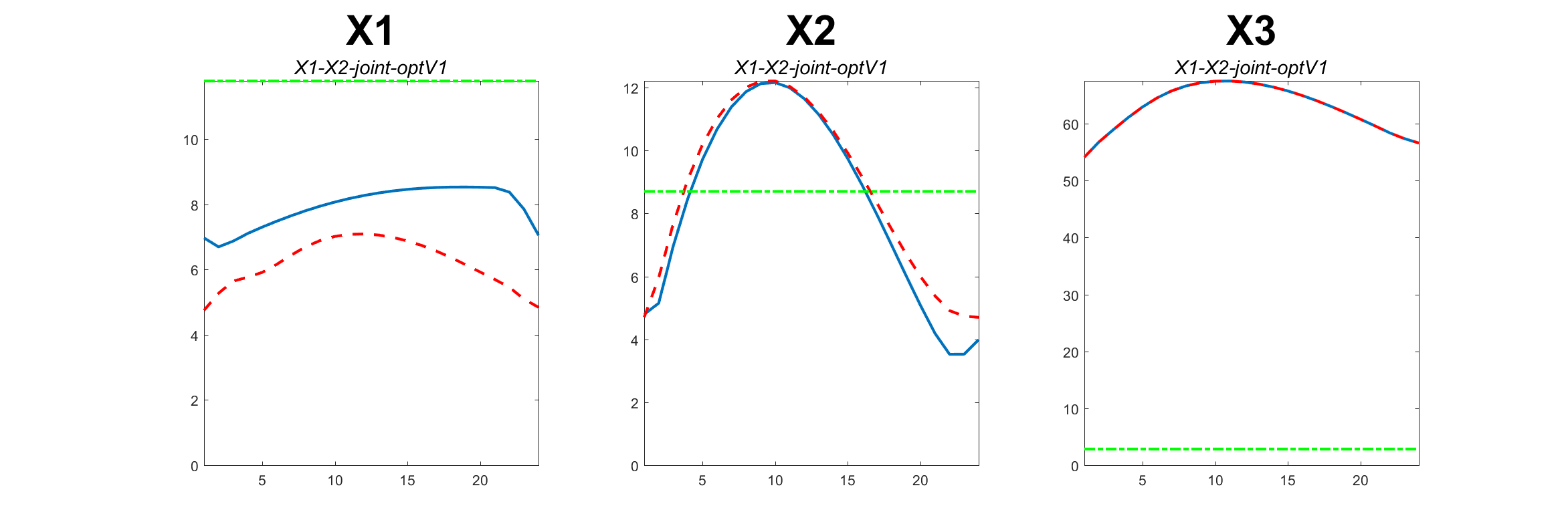}
\end{subfigure}
\begin{subfigure}[b]{1\textwidth}
\centering
	\includegraphics[scale=0.3]{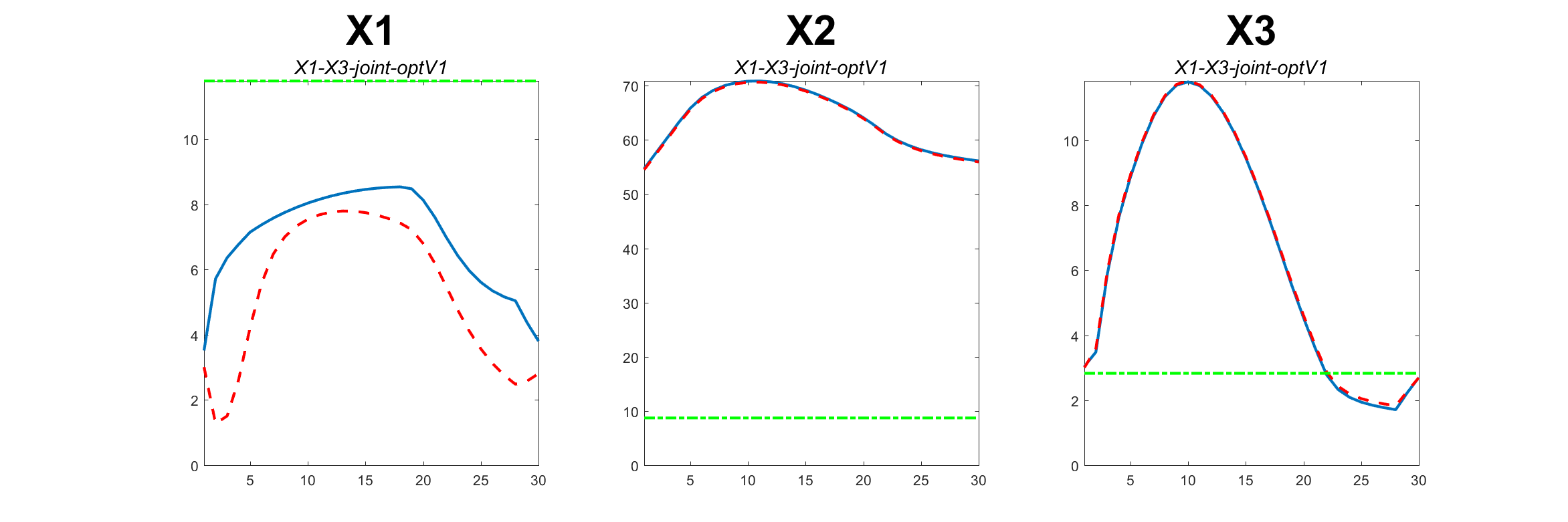}
\end{subfigure}
\begin{subfigure}[b]{1\textwidth}
\centering
	\includegraphics[scale=0.3]{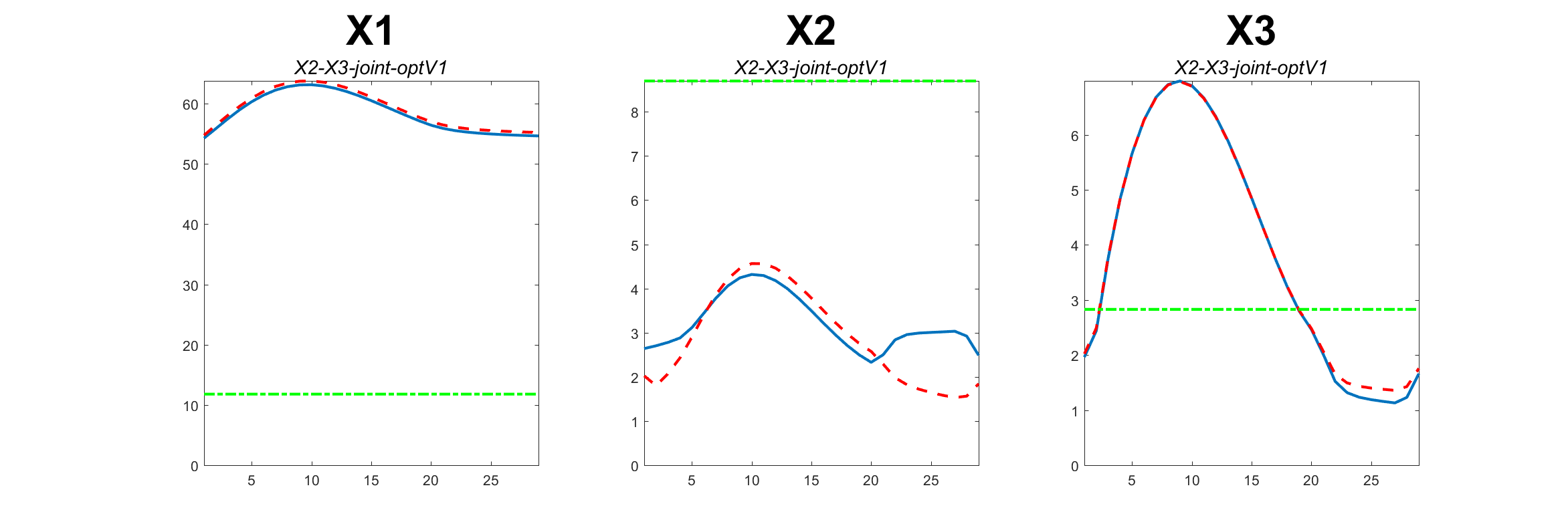}
\end{subfigure}
\caption{Iterative steps of the sequential optimizations locating joint structure between each possible combination of data blocks for the synthetic data from Figure~\ref{3blocktruth}. The horizontal axes represent iterations of the optimization problem and the vertical axes represent angles in degrees. Colored paths show progression of angles between the candidate direction and $\TS(\hat{\Ab}_k)$ (blue) and $\TS(\Ab_k)$ (red). Perturbation angle bounds $\hat{\phi}_k$ shown as green horizontal lines. From top to bottom: three-way joint, joint between blocks 1 and 2, joint between blocks 1 and 3, joint between blocks 2 and 3.}\label{toyexamplestep2}
\end{center}
\end{figure}


\end{document}